 \newcommand{\Rmnum}[1]{\expandafter\@slowromancap\romannumeral #1@}
\newtheorem{theorem}{Theorem}[section]
\newtheorem{definition}{Definition}[section]
\newtheorem{proposition}[theorem]{Proposition}
\newtheorem{lemma}[theorem]{Lemma}
\newtheorem{remark}[theorem]{Remark}
\newtheorem{assumption}[theorem]{Assumption}
\newcommand{\R}{{\mathbb R}}
\newcommand{\C}{{\mathbb C}}
\newcommand{\be}{\begin{equation}}
\newcommand{\ee}{\end{equation}}
\newcommand{\bea}{\begin{eqnarray}}
\newcommand{\eea}{\end{eqnarray}}
\newcommand{\ba}{\begin{array}}
\newcommand{\ea}{\end{array}}
\newcommand{\ol}{\overline}
\newcommand{\id}{\mathbb{I}}
\newcommand{\re}{\mathrm{Re}}
\newcommand{\im}{\mathrm{Im}}
\newcommand{\eps}{\varepsilon}
\newcommand{\sig}{\sigma}
\newcommand{\Sig}{\Sigma}
\newcommand{\lam}{\lambda}
\newcommand{\gam}{\gamma}
\newcommand{\Gam}{\Gamma}
\newcommand{\om}{\omega}
\newcommand{\dta}{\delta}
\newcommand{\Dta}{\Delta}
\newcommand{\tha}{\theta}
\numberwithin{equation}{section}
\begin{document}
\title[RHP for the short pulse equation]{Long-time asymptotics for the short pulse equation}

\author[J.Xu]{Jian Xu*}
\address{College of Science\\
University of Shanghai for Science and Technology\\
Shanghai 200093\\
People's  Republic of China}
\email{corresponding author: jianxu@usst.edu.cn}


\keywords{Riemann-Hilbert problem, Short pulse equation, Initial value problem, Long-time asymptotics}

\date{\today}

\begin{abstract}
In this paper, we analyze the long-time behavior of the solution of the initial value problem (IVP) for the short pulse (SP) equation. As the SP equation is a complete integrable system, which posses a Wadati-Konno-Ichikawa (WKI)-type Lax pair, we formulate a $2\times 2$ matrix Riemann-Hilbert problem to this IVP by using the inverse scattering method. Since the spectral variable $k$ is the same order in the WKI-type Lax pair, we construct the solution of this IVP parametrically in the new scale $(y,t)$, whereas the original scale $(x,t)$ is given in terms of functions in the new scale, in terms of the solution of this Riemann-Hilbert problem. However, by employing the nonlinear steepest descent method of Deift and Zhou for oscillatory Riemann-Hilbert problem, we can get the explicit leading order asymptotic of the solution of the short pulse equation in the original scale $(x,t)$ as time $t$ goes to infinity. 

\end{abstract}

\maketitle

\section{Introduction}
The present work is devoted to the study of the long-time asymptotic behavior of the short pulse (SP) equation formulated on the whole line,
\begin{subequations}
\be\label{spe}
u_{xt}=u+\frac{1}{6}(u^3)_{xx},
\ee
where $u(x,t)$ is a real-valued function, which represents the magnitude of the electric
field, while the subscripts $t$ and $x$ denote partial differentiations, with the initial value data
\be\label{spe-ini}
u(x,t=0)=u_0(x),\quad x\in \R,
\ee
and assuming that $u_0(x)$ lies in Schwartz space.
\end{subequations}

\par
The SP equation was reproposed in \cite{swpd} by Sch\"afer and Wayne to describe the propagation of ultra-short optical
pulses in silica optical fibers. Usually, in nonlinear optic, the nonlinear Schr\"odinger (NLS) equation was always used to model the slowly varying wave trains. As the pulse duration shortens, however, the NLS equation becomes less accurate, the SP equation provides an increasingly better
approximation to the corresponding solution of the Maxwell equations \cite{cjsw}. For the details of physical background, see \cite{swpd} and references therein.
\par
Actually, the SP equation appeared first as one of Rabelo's equations which describe pseudospherical surfaces, possessing a zero-curvature representation, in \cite{rspam}. Recently, the Wadati-Konno-Ichikawa (WKI) type Lax pair of the SP equation was rediscovered in \cite{ssjpsj} (see the following (\ref{Lax-x})).  The integrable properties of SP equation like bi-Hamiltonian structure and the conservation laws were studied in \cite{bjmp,bpla}. The loop-soliton
solutions the short pulse equation was found in \cite{ssjpa}. The
connection between the short pulse equation and the sine-Gordon equation through the hodograph transformation was found by Matsuno, and thus, multi-soliton solutions including multi-loop
and multi-breather ones were given in \cite{mjpsj}. And a lot of generalizations of the SP equation, such as vector SP equation, discretizations of SP equation, complex SP equation,and so on, were studied in \cite{sjpsj,fjpa,fpd} and references therein.
\par
The local well-posedness in $H^2$ (which denotes the usually Sobelev space) and non-existence of smooth traveling wave solutions were shown in \cite{swpd}, and global well-posedness of small solutions was proved in \cite{dcpde} for SP equation in $H^2$ by using conservation laws. In \cite{ldpde}, Liu, Pelinovsky and Sakovich showed the blow-up result for the SP equation for large data.

\par
The purpose of this paper is to analyzing the long-time asymptotic behavior of the SP equation.
Due to the SP equation admits a Lax pair, the inverse scattering transform method can be used to solve the initial value problem for the SP equation. Here, we relate the inverse scattering problem to a $2\times 2$-matrix Riemann-Hilbert problem. The most important advantage of formulating the initial value problem (\ref{spe})-(\ref{spe-ini}) as a Riemann-Hilbert problem is that it can be analyzed the long-time asymptotic behavior of the solution of the initial value problem by employing the nonlinear steepest descent method introduced by Deift and Zhou \cite{dz}.
This method has previous applied to many integrable equations, such as the NLS equation \cite{diz}, the Sine-Gordon equation \cite{cvz}, the KdV equation \cite{gt}, the Fokas-Lenells equation \cite{jfjde}, the Camassa-Holm equation \cite{amkst} and so on.

\par
Although the long-time asymptotic analysis of (\ref{spe}) is in many ways similar to those of integrable equations, it also presents some distinctive features: {\em (1)} In order to arrive at a Riemann-Hilbert problem with the appropriate boundary condition at infinity, {\bf since the spectral variable $k$ is the same order in the Lax pair} (which is different from the Fokas-Lenells equation and the Camassa-Holm equation), it first has to be transformed by introduction of a matrix $G(x,t)$. This modification is made possible since the conservation law (\ref{conslaw}) holds. {\em (2)} In order to construct the solution $u(x,t)$, we need the asymptotic information as spectral variable $k\rightarrow 0$, not as $k\rightarrow \infty$. Hence, we need analyze the spectral problem near both $k=\infty$ and $k=0$, respectively. Although the asymptotic behavior as $k\rightarrow 0$ contains all the information of the solution $u(x,t)$, we also cannot use it directly since the exponential term $e^{ikp(x,t,k)\sig_3}$ of the jump matrix $J(x,t,k)$ involves the unknown function $m(x,t)$, see (\ref{mfundef}). We need introducing a new scale variable $y(x,t)$, this leads to the solution $u$ reconstructed from the Riemann-Hilbert problem becomes implicitly ($u$ depends on $(y,t)$ and $y$ depends on $(x,t)$). Fortunately, we can obtain the leading order explicit asymptotic behavior of the solution $u$ which depends on the original variable $(x,t)$, since the reflection coefficient $r(k)$ is rapidly decaying as $k\rightarrow 0$. {\bf The solution $u(x,t)$ is constructed from a $2\times 2$-matrix Riemann-Hilbert problem in terms of the order $O(k)$ as $k\rightarrow 0$}, {\em which is different from the Camassa-Holm equation \cite{amkst} and the modified Hunter-Saxton equation in \cite{amszip}, where they construct the solution from a $1\times 2$-vector Riemann-Hilbert problem.}

\begin{remark}
\par
We note that although there exists a gauge transformation between WKI type Lax pair and AKNS type Lax pair, (especially, the short pulse equation can be transformed to Sine-Gordon equation \cite{ssjpa}), it can not use this gauge transformation to analyze the long-time asymptotic behavior of the short pulse, since there is an unknown function related to the solution $u(x,t)$ in the transformation. We need analyze the asymptotic behavior directly. By formulating a vector Riemann-Hilbert problem in this paper, we successfully analyse the asymptotic behavior as $t\rightarrow \infty$. To the author's knowledge, this is the first time to analyze long-time asymptotic behavior of the nonlinear evolution equation with WKI type Lax pair.
In fact, many equations with physical implications have been found to posses WKI type Lax pair, such as the stretched rope equation \cite{iwkjpsj}, the modified Camassa-Holm (mCH) equation \cite{qltmp}, and so on. 
Hence, it may make us to analyze the long-time behavior of the mCH equation in the future work.
\end{remark}

The main results of this paper are summarized by the following theorems:

\begin{theorem}\label{main1}
Let $u_0(x)$ satisfy the initial value (\ref{spe-ini}) and be such that no discrete spectrum is present. For $\xi=\frac{x}{t}>\eps$, $\eps$ be any small positive number, the solution $u(x,t)$ of the initial value problem (\ref{spe})-(\ref{spe-ini}) tends to $0$ fast decay as $t\rightarrow \infty$.
\end{theorem}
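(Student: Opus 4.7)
The plan is to carry out the nonlinear steepest descent analysis of Deift and Zhou on the $2\times 2$-matrix Riemann-Hilbert problem formulated in the preceding sections, specialized to the sector $\xi>\eps$. I would work throughout in the auxiliary scale $(y,t)$ (where the jump exponential $e^{ikp(x,t,k)\sig_3}$ becomes explicit) and only at the end convert the decay estimate back to the original variables $(x,t)$.

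First, I would inspect the phase function $\theta(k,\xi)$ appearing in the jump exponential. Because the WKI-type Lax pair places $k$ at the same order in both $x$- and $t$-parts, $\theta$ has the schematic form $\theta(k,\xi)=-\xi k+c/k$ (with $c>0$ a fixed constant), so that its stationary points satisfy $k^{2}=c/\xi$ but enter the problem through the oscillatory factor $e^{2it\theta\sig_{3}}$. For $\xi>\eps>0$ the purely imaginary character of these critical points would be irrelevant: what matters is that $\im\theta(k,\xi)$ has a definite sign in open wedges of the upper and lower half-planes attached to $\R$, with $\im\theta$ bounded below in modulus by a positive constant $c(\eps)$ on any compact subset away from $k=0$. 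I would make this sign structure precise and record the sign table of $\re(2it\theta)$ in the four regions adjacent to $\R$.

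Second, I would apply the standard triangular factorization of the jump matrix $J=(I-w_{-})^{-1}(I+w_{+})$, where $w_{+}$ is upper triangular carrying the factor $r(k)e^{2it\theta}$ and $w_{-}$ is lower triangular carrying $\overline{r(k)}e^{-2it\theta}$. I would then deform the $w_{+}$-contour into the upper half-plane and the $w_{-}$-contour into the lower half-plane, staying inside the wedges identified above and bypassing $k=0$ along a small semicircle. On the deformed contour the exponentials satisfy $|e^{\pm 2it\theta}|\le e^{-c(\eps)t}$ outside any fixed neighborhood of the origin, so since $r$ is Schwartz the deformed jump operators are uniformly $O(t^{-N})$ for every $N\in\N$. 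A small-norm argument then yields a unique solution $M(x,t,k)$ of the deformed RHP with $\|M-I\|_{L^{\infty}}=O(t^{-N})$, and inserting this estimate into the reconstruction formula (which extracts $u$ from the coefficient of $k$ in the expansion of $M$ at $k=0$, as emphasized in the introduction) produces $u(y(x,t),t)=O(t^{-N})$ for every $N$.

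The last step is to pass from the $(y,t)$ scale to $(x,t)$. Using the conservation law referred to in the introduction, the difference $y-x$ is controlled by an integral of $m(x,t)-1$, which is itself quadratic in the reflection coefficient and hence of the same rapid decay class. Consequently the fast decay in $(y,t)$ transfers to fast decay in $(x,t)$ uniformly on $\xi>\eps$. The main obstacle I anticipate is the local analysis near $k=0$: the pole of $\theta$ degenerates the exponential there, so the contour deformation must be executed on a $t$-dependent scale that retains the sign of $\re(2it\theta)$ while respecting the Schwartz decay of $r$ at the origin; a secondary technical point is verifying that the change of scale $y\leftrightarrow x$ remains a diffeomorphism uniformly in the sector, which reduces to showing $m(x,t)$ stays close to $1$ in $L^{\infty}$ throughout the evolution.
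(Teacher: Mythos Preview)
Your approach is essentially the same as the paper's: for $\tilde\xi>\eps$ the sign table of $\im\theta$ has a fixed sign in each half-plane, one uses the single triangular factorization of $\tilde J$, opens lenses into the upper and lower half-planes, obtains a jump that is uniformly close to $\id$ as $t\to\infty$, and then reads off fast decay of $u$ from the expansion at $k=0$, with the passage $y\to x$ handled exactly as you describe.

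Two small corrections are worth noting. First, your anticipated obstacle at $k=0$ is not one in this sector: since $\im\theta=k_2\bigl(\tilde\xi+\tfrac{1}{4|k|^2}\bigr)$, the pole of $\theta$ makes $|e^{\pm 2it\theta}|$ decay \emph{faster} near the origin when $\tilde\xi>0$, so no $t$-dependent local rescaling or semicircular detour is needed there. Second, the step you summarize as ``since $r$ is Schwartz the deformed jump operators are $O(t^{-N})$'' hides the only real work: $r$ is a priori defined on $\R$ and need not extend analytically. The paper supplies this via the standard Deift--Zhou Fourier-in-$\theta$ splitting $\bar r=h_{\Rmnum{1}}+h_{\Rmnum{2}}$, with $h_{\Rmnum{1}}$ rapidly decaying on $\R$ and $h_{\Rmnum{2}}$ analytically continuable with exponential decay on the opened contour; you should make that decomposition explicit rather than invoke Schwartz regularity alone.
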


\begin{theorem}\label{main2}
Let $u_0(x)$ satisfy the hypotheses of Theorem \ref{main1}. For $\xi=\frac{x}{t}<-\eps$, $\eps$ be any small positive number, the solution $u(x,t)$ of the initial value problem (\ref{spe})-(\ref{spe-ini}) equals
\be\label{uxtasy}
u(x,t)=\sqrt{\frac{-4\nu(\kappa_0)}{\kappa_0 t}}\sin{\{\frac{t}{\kappa_0}+\nu(\kappa_0)\ln{(\frac{4t}{\kappa_0})}+\phi(\kappa_0)\}}+O\left(\frac{\ln{(t)}}{t}\right),\quad \mbox{as }t\rightarrow \infty,
\ee
where
\be\label{kappa0def}
\kappa_0=\frac{1}{\sqrt{-4\xi}}
\ee
and
$\nu(\kappa_0)$ is defined as (\ref{nudef}) replaced $k_0$ with $\kappa_0$, $\phi(\kappa_0)$ is defined as (\ref{phikappadef}).
\end{theorem}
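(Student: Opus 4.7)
The plan is to apply the Deift--Zhou nonlinear steepest descent method to the $2\times 2$ Riemann--Hilbert problem formulated in the $(y,t)$ scale. The jump matrix has the oscillatory structure $\E^{-\I t\theta(k;\x)\sig_3}J_0(k)\E^{\I t\theta(k;\x)\sig_3}$ for a phase $\theta(k;\x)$ whose stationary-point equation $\pt_k\theta(k;\x)=0$, in the sector $\x=x/t<-\eps$, admits exactly two real roots $\pm\kappa_0$ with $\kappa_0=1/\sqrt{-4\x}$. By the standard symmetry $r(-k)=\overline{r(k)}$ of the reflection coefficient, the contribution at $-\kappa_0$ is complex conjugate to that at $\kappa_0$, and the two pieces combine to produce the real $\sin$-term of (\ref{uxtasy}).

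I would then carry out the classical sequence of reductions. First, factorize $J$ into upper/lower triangular products on each side of $\pm\kappa_0$; the triangular factors whose exponentials decay as $t\to\infty$ are deformed off $\R$ into thin lenses. Second, introduce a scalar function $\dta(k)$ solving a scalar RHP on $(-\kappa_0,\kappa_0)$ that absorbs the non-factorable middle diagonal factor $(1-|r(k)|^2)^{\pm 1}$; the local behavior $\dta(k)\sim(k-\kappa_0)^{\I\nu(\kappa_0)}$ encodes the coefficient $\nu(\kappa_0)$ defined in (\ref{nudef}). Third, after conjugating by $\dta^{\sig_3}$ and opening the lenses, the resulting jumps are exponentially close to the identity everywhere except on small crosses $\Sig_{\pm\kappa_0}$ centered at the stationary points.

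The core analytic step is to solve the local model RHP on $\Sig_{\kappa_0}$ explicitly in terms of parabolic cylinder functions. Writing the local variable as $z=\sqrt{t\,\pt_k^2\theta(\kappa_0;\x)}\,(k-\kappa_0)$, the model solution has a large-$z$ expansion whose leading coefficient $M_1(\kappa_0)$ is proportional to $\sqrt{\nu(\kappa_0)}\,\E^{\I\phi(\kappa_0)}$ with $\phi(\kappa_0)$ as in (\ref{phikappadef}). Matching to a global parametrix yields, uniformly for $k$ bounded away from $\pm\kappa_0$,
\[
M(k)=\id+\frac{1}{\sqrt{t}}\left(\frac{M_1(\kappa_0)}{k-\kappa_0}+\frac{M_1(-\kappa_0)}{k+\kappa_0}\right)+O\!\left(\frac{\ln t}{t}\right).
\]
Expanding this identity at $k=0$ to first order in $k$ and extracting the off-diagonal entry produces the leading oscillatory behavior in the $(y,t)$ scale, with amplitude $\sqrt{-4\nu(\kappa_0)/(\kappa_0 t)}$ and phase $t/\kappa_0+\nu(\kappa_0)\ln(4t/\kappa_0)+\phi(\kappa_0)$ obtained by combining $t\theta(\kappa_0;\x)$ with the parabolic-cylinder phase.

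The main obstacle, peculiar to the SP equation, is translating this $(y,t)$-asymptotic back to the original variables $(x,t)$. Because $u$ is reconstructed through the $O(k)$ expansion at $k=0$ (not at $k=\infty$) and $y$ is related to $x$ only implicitly through the unknown $m(x,t)$, one must show that the substitution $y\mapsto x$ perturbs neither the amplitude nor the phase beyond $O(\ln t/t)$. Here I would use crucially the rapid decay of $r(k)$ at $k=0$ emphasised in the introduction: this decay forces the $k\to 0$ contribution of every deformed jump to fall into the error regime, so $y(x,t)-x$ enters the stationary-phase output at the same order and the formula (\ref{uxtasy}) survives verbatim. Uniformity in the sector $\x<-\eps$ follows from the standard small-norm $L^2$-estimate on the error RHP obtained after removing the local parametrices, and assembling these pieces yields the claim.
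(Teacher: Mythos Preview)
Your overall strategy matches the paper's---conjugation by a scalar $\dta$, lens opening, reduction to parabolic-cylinder local models at the two stationary points, extraction of $u$ from the $k\to 0$ expansion---but two concrete details are wrong and would propagate into the final formula if not corrected.

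First, the SP jump matrix $J_0(k)$ in (\ref{J0def}) has $(2,2)$-entry $1+|r(k)|^2$, not $1-|r(k)|^2$; correspondingly the signature table for $\im\tha$ forces the factorization carrying the diagonal middle term onto $\{|k|>k_0\}$ rather than onto $(-k_0,k_0)$. The scalar RHP for $\dta(k)$ therefore lives on $(-\infty,-k_0)\cup(k_0,\infty)$ (equation (\ref{dtadef})), and $\nu(k_0)=-\tfrac{1}{2\pi}\ln(1+|r(k_0)|^2)$ is \emph{negative}. Your placement of the $\dta$-jump on $(-\kappa_0,\kappa_0)$ with middle factor $(1-|r|^2)^{\pm 1}$ is the defocusing-NLS/mKdV configuration and does not match the SP phase geometry. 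Second, the $y\to x$ passage is not as innocuous as you claim: $x-y=c_+(x,t)$ is $O(1)$, so although $k_0-\kappa_0=O(t^{-1})$, the leading phase $t/k_0$ acquires an $O(1)$ shift when $k_0$ is replaced by $\kappa_0$. In the paper this shift is tracked explicitly by expanding $\dta(k)$ at $k=0$; its linear coefficient $\dta_1$ enters the formula for $c_+$ and produces precisely the $2\kappa_0\Dta$ term in $\phi(\kappa_0)$, equation (\ref{phikappadef}). The rapid decay of $r(k)$ at $k=0$ is used to show that the \emph{remaining} contributions to $c_+$ (and to the jump near $k=0$) are subleading, but it does not make $y-x$ itself negligible in the phase.
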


\begin{remark}
The sectors of different asymptotic behavior match, as $\eps \rightarrow 0$, through the fast decay. Indeed, as $\frac{x}{t}\rightarrow 0^-$, then $\kappa_0\rightarrow \infty$ and $\nu(\kappa_0)\rightarrow 0$ and thus the amplitude in (\ref{uxtasy}) decays faster.
\end{remark}
\par
{\bf Organization of the paper:} In section 2, since the associated Lax pair of short pulse (\ref{spe}) has singularities at $k=0$ and $k=\infty$, we perform the spectral analysis to deal with the two singularities, respectively. In section 3, we formulate the associated vector Riemann-Hilbert in an alternative space variable $y$ instead of the original space variable $x$. Hence, we can reconstruct the solution $u(x,t)$ parameterized from the solution of the Riemann-Hilbert problem via the asymptotic behavior of the spectral variable at $k=0$. In section 4 and 5, we can also obtain the asymptotic relation between $y$ and $x$ when analyzing the vector Riemann-Hilbert problem by using the nonlinear steepest descent method. Hence, we can calculate the leading order asymptotic behavior of the solution $u(x,t)$ and prove the main results of this paper, i.e., theorem \ref{main1} and \ref{main2}, respectively.

\section{Spectral Analysis}
The beginning of the long-time asymptotic analysis is to formulate the initial value problem for the short pulse equation to a Riemann-Hilbert problem. It depends on short pulse equation admits a WKI type Lax pair,
\begin{subequations}\label{Laxpair}
\be\label{Lax-x}
\Psi_x=U(x,t,k)\Psi,
\ee
\be\label{Lax-t}
\Psi_t=V(x,t,k)\Psi.
\ee
\end{subequations}
where
\be\label{Udef}
U=ik U_1.
\ee
and
\be\label{Vdef}
V=\frac{ik}{2}u^2 U_1+\frac{1}{4ik}\sig_3-\frac{iu}{2}\sig_2,
\ee
with
\be\label{U1def}
U_1=\left(\ba{cc}1&u_x\\u_x&-1\ea\right),\sig_3=\left(\ba{cc}1&0\\0&-1\ea\right),\sig_2=\left(\ba{cc}0&-i\\i&0\ea\right).
\ee
Usually, we only use the $x-$part of Lax pair to analyze the initial value problem for the integrable equation by inverse scattering transform method. The $t-$part of Lax pair is only used to determine the time evolution of the scattering data. However,
from the Lax pair (\ref{Laxpair}), we know that there are singularities at $k=\infty$ and $k=0$. We notice that the Camassa-Holm equation \cite{amkst}, modified Hunter-Saxton equation \cite{amszip} also possess two singularities in the Lax pair of integrable nonlinear evolution equations.
In order to reconstruct the solution $u(x,t)$ of the SP equation (\ref{spe}), we need use the $t-$part or using the expansion of the eigenfunction as spectral parameter $k\rightarrow 0$. Hence, in the following we use two different transformations to analyze these two singularities ($k=\infty$ and $k=0$), respectively.

\subsection{For $k=0$}

\subsubsection{The closed one-form}
Introducing the following transformation
\be
\Psi(x,t,k)=\mu^0(x,t,k)e^{(ikx+\frac{t}{4ik})\sig_3},
\ee
then we get the Lax pair of $\mu^0$
\be\label{mu0Laxe}
\left\{
\ba{l}
\mu^0_x-ik[\sig_3,\mu^0]=V_1^0\mu^0,\\
\mu^0_t-\frac{1}{4ik}[\sig_3,\mu^0]=V_2^0\mu^0,
\ea
\right.
\ee
where
\be
V_1^0=\left(\ba{cc}0&iku_x\\iku_x&0\ea\right),\quad V_2^0=\left(\ba{cc}\frac{ik}{2}u^2&\frac{ik}{2}u^2u_x-\frac{u}{2}\\\frac{ik}{2}u^2u_x+\frac{u}{2}&-\frac{ik}{2}u^2\ea\right)
\ee

Letting $\hat A$ denotes the operators which acts on a $2\times 2$ matrix $X$ by $\hat A X=[A,X]$ , then the Lax pair of $\mu^0$ (\ref{mu0Laxe}) can be written as
\be
d(e^{-(ikx+\frac{t}{4ik})\hat \sig_3}\mu^0)=W^0(x,t,k),
\ee
where $W^0(x,t,k)$ is the closed one-form defined by
\be
W^0(x,t,k)=e^{-(ikx+\frac{t}{4ik})\hat \sig_3}(V_1^0dx+V_2^0dt)\mu^0.
\ee

\subsubsection{The Jost functions $\mu^0_j$}

We define two eigenfunctions $\{\mu^0_j\}_{j=1}^2$ of (\ref{mu0Laxe}) by the Volterra integral equations,
\begin{subequations}\label{mu0jdef}
\be\label{mu01def}
\mu^0_1(x,t,k)=\id+\int_{-\infty}^{x}e^{ik(x-y)\hat\sig_3}V^0_1(y,t,k)\mu^0_1(y,t,k)dy,
\ee
\be\label{mu02def}
\mu^0_2(x,t,k)=\id-\int_{x}^{+\infty}e^{ik(x-y)\hat\sig_3}V^0_1(y,t,k)\mu^0_2(y,t,k)dy.
\ee
\end{subequations}

\begin{proposition}(Analytic property)
From the above definition, we find that the functions $\{\mu^0_j\}_1^2$ are bounded and analytic properties as following:
\begin{itemize}
 \item $[\mu^0_1]_1(x,t,k)$ is bounded and analytic in $D_2$, $[\mu^0_1]_2(x,t,k)$ is in $D_1$;
 \item $[\mu^0_2]_1(x,t,k)$ is bounded and analytic in $D_1$, $[\mu^0_2]_2(x,t,k)$ is in $D_2$.
\end{itemize}
where $[\mu_j]_i$ denotes the $i-$th column of $\mu_j$, $D_1$ denotes the upper-half plane and $D_2$ denotes the lower-half plane of the complex $k-$sphere.
\end{proposition}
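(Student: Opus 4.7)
The plan is to analyze each Volterra equation in \eqref{mu0jdef} componentwise and read off, from the matrix exponential $e^{ik(x-y)\hat\sig_3}$ acting on the purely off-diagonal potential $V_1^0$, the half-plane of $k$ in which the relevant scalar kernel is uniformly bounded. Conjugation by $e^{ik(x-y)\sig_3}$ sends the $(2,1)$ position of the integrand into $e^{-2ik(x-y)}iku_x$ and the $(1,2)$ position into $e^{+2ik(x-y)}iku_x$, while diagonal positions are untouched. In \eqref{mu01def} the integration variable satisfies $y\leq x$, so $e^{-2ik(x-y)}$ is bounded by $1$ precisely on $\overline{D_2}$ (i.e., $\im k\leq 0$) and $e^{+2ik(x-y)}$ precisely on $\overline{D_1}$; in \eqref{mu02def} the roles swap because there $y\geq x$.

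Next I would solve each system by Picard iteration with $t$ held fixed. Writing out \eqref{mu01def} componentwise, the first column couples $M_{11}$ and $M_{21}$ through two scalar Volterra equations, with the oscillatory factor $e^{-2ik(x-y)}$ appearing only in the equation for $M_{21}$. Inserting the equation for $M_{21}$ into that for $M_{11}$ produces a single scalar Volterra equation for $M_{11}$ whose kernel, on $\overline{D_2}$, is pointwise bounded by $|k|^{2}|u_x(z)|\|u_x\|_{L^1}$. The standard ordered-simplex estimate then yields the Neumann bound $\frac{(|k|^{2}\|u_x\|_{L^1}^{2})^n}{n!}$ for the $n$-th iterate, and since the Schwartz hypothesis on $u_0$ guarantees $u_x(\cdot,t)\in L^1(\R)$, the Neumann series converges absolutely and uniformly on compact subsets of $\overline{D_2}$. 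The entry $M_{21}$ is then recovered from $M_{11}$ by a single additional integration, inheriting the same boundedness. Replacing $e^{-2ik(x-y)}$ by $e^{+2ik(x-y)}$ handles the second column of $\mu^0_1$ on $\overline{D_1}$, and reversing the integration range in \eqref{mu02def} interchanges $D_1$ and $D_2$ and yields the corresponding claims for $\mu^0_2$.

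Analyticity then follows because each Picard iterate is a polynomial in $k$ multiplied by finitely many bounded exponentials $e^{\mp 2ik(\cdot)}$, each holomorphic on the corresponding open half-plane, and uniform convergence on compacta transfers analyticity of the iterates to analyticity of the limit; boundedness extends continuously down to the real axis $\im k=0$. The one mildly delicate point, and the place I expect the WKI-specific argument to differ from the AKNS case, is the linear growth of $V_1^0$ in $k$: it forces the $n$-th iterate to grow polynomially in $k$ of degree $2n$, and one must verify that the ordered-simplex factorial $n!$ genuinely dominates the factor $|k|^{2n}$. This is exactly what the bound $\frac{(|k|^{2}\|u_x\|_{L^1}^{2})^n}{n!}$ provides, so the series converges for every $k$ in the appropriate closed half-plane, and the four boundedness/analyticity assertions of the proposition follow.
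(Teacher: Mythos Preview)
The paper states this proposition without proof, treating it as a routine consequence of the Volterra structure in \eqref{mu0jdef}. Your argument is the standard one and is correct: the $\hat\sig_3$-conjugation of the purely off-diagonal $V_1^0$ places $e^{-2ik(x-y)}$ in the $(2,1)$ slot and $e^{+2ik(x-y)}$ in the $(1,2)$ slot; the sign of $x-y$ fixed by the integration limits then determines the half-plane on which each factor is bounded by $1$, and Picard iteration with the ordered-simplex bound gives absolute, locally uniform convergence and hence analyticity column by column.

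One small clarification is worth making. Your summed bound is $\sum_{n\ge 0}\frac{(|k|^2\|u_x\|_{L^1}^2)^n}{n!}=\exp(|k|^2\|u_x\|_{L^1}^2)$, which establishes convergence and analyticity for every fixed $k$ and uniform boundedness in $x$, but not boundedness uniformly in $k$ over the entire half-plane. That is consistent with how the paper actually uses $\mu^0_j$: these eigenfunctions serve only to produce the power-series expansion at $k=0$ (the next proposition) and the relation \eqref{mujrelmu0j}, while global-in-$k$ control in the Riemann--Hilbert construction is carried by the other family $\mu_j$, whose potential $V_1$ is independent of $k$. So ``bounded'' in this proposition should be read as bounded in $x$ (and continuous up to $\im k=0$) for each $k$ in the closed half-plane, which is exactly what your estimate delivers.
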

\begin{proposition}(Asymptotic property)
The functions $\mu^0_j(x,t,k)$ have the expansions in powers of $k$, for $k\rightarrow 0$,
\be
\mu^0_j(x,t,k)=\id+iu(x,t)\sig_1k+[-\frac{u^2}{2}\id+i(u^2u_x-2u_t)\sig_2]k^2+O(k^3).
\ee
\end{proposition}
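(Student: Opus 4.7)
The plan is to substitute the formal power-series ansatz $\mu^0_j(x,t,k)=\id+M_1(x,t)k+M_2(x,t)k^2+O(k^3)$ into the $x$-part of (\ref{mu0Laxe}), which reads $\mu^0_x-ik[\sig_3,\mu^0]=(iku_x\sig_1)\mu^0$, and to determine the coefficients recursively by matching powers of $k$. The constants of integration are fixed by the Volterra normalizations (\ref{mu0jdef}): for $\mu^0_1$ every $M_\ell$ must vanish as $x\to-\infty$, and for $\mu^0_2$ as $x\to+\infty$. I will verify that both choices yield the same coefficients, so that the expansion is assigned unambiguously to $j=1,2$.

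At order $k^1$ the recursion collapses to $M_{1,x}=iu_x\sig_1$, since $[\sig_3,\id]=0$; integrating and using the Schwartz decay of $u$ at the respective $\mp\infty$ gives $M_1=iu(x,t)\sig_1$ in both cases. At order $k^2$, substituting $M_1$ and using $[\sig_3,\sig_1]=2i\sig_2$ together with $\sig_1^2=\id$, the recursion becomes
\be
M_{2,x}=i[\sig_3,M_1]+iu_x\sig_1 M_1=-2iu\,\sig_2-\tfrac{1}{2}(u^2)_x\,\id.
\ee
The $\id$-piece integrates at sight to $-\tfrac{u^2}{2}\id$, matching the claimed expression. The $\sig_2$-piece, however, produces an antiderivative of $u$ in $x$, which is a priori non-local and not obviously polynomial in $u$ and its derivatives.

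The key step, which I expect to be the main obstacle, is to invoke the short pulse equation itself to localize this antiderivative. Rewriting $\tfrac{1}{6}(u^3)_{xx}=\tfrac{1}{2}(u^2u_x)_x$, equation (\ref{spe}) takes the divergence form
\be
(u^2u_x-2u_t)_x=-2u.
\ee
Under Schwartz decay of $u$, $u_x$, $u_t$ at either end, integrating this identity from $\mp\infty$ yields $-2\int_{\mp\infty}^{x}u(y,t)\,dy=u^2u_x-2u_t$. Substituting back gives
\be
M_2=-\tfrac{u^2}{2}\id+i(u^2u_x-2u_t)\sig_2,
\ee
as announced, and the agreement of the expressions for $\mu^0_1$ and $\mu^0_2$ then follows automatically. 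The same mechanism would recursively produce all higher-order coefficients, each invoking appropriately $x$-differentiated forms of this conservation law to keep the output local in $u$ and its time/space derivatives.
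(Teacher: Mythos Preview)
Your argument is correct. The paper states this proposition without proof, so there is no ``paper's own proof'' to compare against; your derivation via the $x$-part of (\ref{mu0Laxe}) together with the divergence form $(u^2u_x-2u_t)_x=-2u$ of the SP equation is a clean and valid way to obtain the coefficients, and your observation that the $\mu^0_1$ and $\mu^0_2$ normalizations lead to the same $M_1,M_2$ is justified because the identity $-2\int_\R u\,dx=[u^2u_x-2u_t]_{-\infty}^{+\infty}=0$ forces $\int_\R u\,dx=0$ for Schwartz solutions.

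One remark worth making: an alternative route, closer in spirit to how such expansions are usually read off for WKI-type pairs, uses the $t$-part of (\ref{mu0Laxe}). Because of the $\tfrac{1}{4ik}[\sig_3,\cdot]$ term, matching powers of $k$ in the $t$-equation yields the \emph{off-diagonal} part of each $M_\ell$ \emph{algebraically} from lower-order data, with no $x$-integration needed: at order $k^0$ one gets $-\tfrac{1}{4i}[\sig_3,M_1]=-\tfrac{iu}{2}\sig_2$, hence $M_1=iu\sig_1$ (off-diagonal part), and at order $k^1$ one finds $[\sig_3,M_2]=(2u^2u_x-4u_t)\sig_1$, giving directly the $\sig_2$-component $i(u^2u_x-2u_t)\sig_2$. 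The diagonal part still comes from the $x$-equation. This approach sidesteps the antiderivative of $u$ entirely and makes no explicit appeal to (\ref{spe}); of course the two routes must agree precisely because (\ref{spe}) is the compatibility condition of the pair, which is exactly the mechanism your proof exploits.
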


\subsection{For $k=\infty$}

\subsubsection{The closed one-form}
Define a $2\times 2$ matrix-value function $G(x,t)$ as
\be\label{Gdef}
G(x,t)=\sqrt{\frac{\sqrt{m}+1}{2\sqrt{m}}}\left(\ba{cc}1&-\frac{\sqrt{m}-1}{u_x}\\ \frac{\sqrt{m}-1}{u_x}&1\ea\right),
\ee
where $m$ is a function of $(x,t)$ defined by
\be\label{mfundef}
m=1+u_x^2.
\ee
\begin{remark}
Notice that when $u_x\rightarrow 0$, the nominator $\sqrt{m}-1$ is a high order infinitesimal than denominator $u_x$. So, the matrix function $G(x,t)$ is well-defined.
\end{remark}
Define
\be
p(x,t,k)=x-\int_{x}^{\infty}(\sqrt{m(x',t)}-1)dx'-\frac{t}{4k^2}.
\ee
As we can write the SPE (\ref{spe}) into the conservation law form:
\be\label{conslaw}
(\sqrt{m})_t=\frac{1}{2}(u^2\sqrt{m})_x,\quad m=1+u^2_x,
\ee
we get
\be
p_x=\sqrt{m},\quad p_t=\frac{1}{2}u^2\sqrt{m}-\frac{1}{4k^2}.
\ee
And introducing a transformation
\be
\Psi(x,t,k)=G(x,t)\mu(x,t,k)e^{ikp(x,t,k)\sig_3}
\ee
then we find the Lax pair equations
\be\label{muLaxe}
\left\{
\ba{l}
\mu_x-ik p_x[\sig_3,\mu]=V_1\mu,\\
\mu_t-ik p_t[\sig_3,\mu]=V_2\mu,
\ea
\right.
\ee
where
\begin{subequations}
\be
V_1=\frac{iu_{xx}}{2m}\sig_2,
\ee
\be
V_2=\frac{1}{4ik}(\frac{1}{\sqrt{m}}-1)\sig_3+\frac{iu^2u_{xx}}{4m}\sig_2-\frac{1}{4ik}\frac{u_x}{\sqrt{m}}\sig_1,
\ee
with $\sig_1=\left(\ba{cc}0&1\\1&0\ea\right)$.
\end{subequations}

 then the equations in (\ref{muLaxe}) can be written in differential form as
\be\label{mudiffform}
d(e^{-ikp(x,t,k)\hat\sig_3}\mu)=W(x,t,k),
\ee
where $W(x,t,k)$ is the closed one-form defined by
\be\label{Wdef}
W=e^{-ikp(x,t,k)\hat\sig_3}(V_1dx+V_2dt)\mu.
\ee

\subsubsection{The Jost functions $\mu_j$}
We define two eigenfunctions $\{\mu_j\}_1^2$ of (\ref{muLaxe}) by the Volterra integral equations
\begin{subequations}\label{mujdef}
\be\label{mu1def}
\mu_1(x,t,k)=\id+\int_{-\infty}^{x}e^{ik[p(x,t,k)-p(y,t,k)]\hat\sig_3}V_1(y,t,k)\mu_1(y,t,k)dy,
\ee
\be\label{mu2def}
\mu_2(x,t,k)=\id-\int_{x}^{+\infty}e^{ik[p(x,t,k)-p(y,t,k)]\hat\sig_3}V_1(y,t,k)\mu_2(y,t,k)dy.
\ee
\end{subequations}

\begin{proposition}(Analytic property)
From the above definition, we find that the functions $\{\mu_j\}_1^2$ are bounded and analytic properties as following:
\begin{itemize}
 \item  $[\mu_1]_1(x,t,k)$ is bounded and analytic in $D_2$, $[\mu_1]_2(x,t,k)$ is in $D_1$;
 \item $[\mu_2]_1(x,t,k)$ is bounded and analytic in $D_1$, $[\mu_2]_1(x,t,k)$ is in $D_2$.
\end{itemize}
\end{proposition}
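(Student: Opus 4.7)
The plan is the standard Volterra/Picard argument for Jost solutions, with the linear phase $kx$ of the undeformed theory replaced by the modified phase $kp(x,t,k)$. The setup rests on two preparatory observations. First, because $p_x=\sqrt{m}\geq 1$, the difference
\[
p(x,t,k)-p(y,t,k)=\int_y^x \sqrt{m(s,t)}\,ds
\]
is independent of $k$ and carries the sign of $x-y$; in particular it is real. Second, $V_1=\frac{iu_{xx}}{2m}\sig_2$ is purely off-diagonal, so conjugating $V_1\mu_j$ by $e^{ikp(x,t,k)\hat\sig_3}$ multiplies its $(1,2)$-entry by $e^{2ik(p(x)-p(y))}$ and its $(2,1)$-entry by $e^{-2ik(p(x)-p(y))}$, leaving the diagonal invariant.

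With these pieces in place, I would iterate the Volterra equations (\ref{mujdef}) column by column. For $[\mu_1]_1$, the coupled system closes through the factor $e^{-2ik(p(x)-p(y))}$ in the $(2,1)$-slot; its modulus $\exp\bigl(2\im(k)\int_y^x\sqrt{m}\,ds\bigr)$ is bounded by $1$ on the integration range $y<x$ exactly when $\im(k)\leq 0$, i.e.\ $k\in D_2$. The mirror-image computations place $[\mu_1]_2$ in $D_1$, $[\mu_2]_1$ in $D_1$ (integration over $y>x$ flips the sign of the exponent), and $[\mu_2]_2$ in $D_2$. Using the Schwartz regularity of $u(\cdot,t)$, so that $u_{xx}/m\in L^1_x$, the Neumann series converges absolutely and uniformly in each specified half-plane, yielding the asserted uniform boundedness.

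For analyticity, I would observe that every Picard iterate is an iterated integral whose only $k$-dependence lives in the entire exponentials $e^{\pm 2ik\int_y^x\sqrt{m}\,ds}$, and is therefore entire in $k$. The half-plane enters only through the \emph{bound} just established: it is there that the Neumann series converges uniformly on compacts, so Morera's theorem (equivalently, the Weierstrass theorem on uniform limits of analytic functions) promotes the entry-wise analyticity of the iterates to $\mu_j$ itself.

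The only real obstacle is the sign bookkeeping. Because $V_1$ is off-diagonal, the recursion for the first column of $\mu_j$ is actually driven by the bottom entry of the integrand, where the exponential sign is flipped relative to the naive expectation; keeping this straight is what matches each column to the correct half-plane. Otherwise the argument reduces to the textbook Jost-solution estimate, adapted to the modified phase $p(x,t,k)$.
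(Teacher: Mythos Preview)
Your argument is correct and is exactly the standard Volterra/Neumann-series proof one expects here; the paper itself states the proposition without proof, treating it as a routine fact about Jost solutions. Your sign bookkeeping is right (the $(2,1)$-entry of the first column picks up $e^{-2ik(p(x)-p(y))}$, bounded for $y<x$ precisely when $\im k\le 0$), and you also implicitly fix the evident typo in the statement, where the second occurrence of $[\mu_2]_1$ should read $[\mu_2]_2$.
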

\begin{proposition}(Large $k$ property)
The matrix functions $\mu_j(x,t,k)$ also satisfy the asymptotic condition
\be\label{Masy}
\mu_j(x,t,k)=\id+\frac{D_1(x,t)}{k}+O(\frac{1}{k^2}),\quad k\rightarrow \infty,
\ee
where $\id$ is an $2\times 2$ identity matrix, and the off-diagonal entries of the matrix $D_1(x,t)$ are
\be
D_{12}(x,t)=\frac{i}{4}\frac{u_{xx}}{m\sqrt{m}},\quad D_{21}(x,t)=\frac{i}{4}\frac{u_{xx}}{m\sqrt{m}}.
\ee
\end{proposition}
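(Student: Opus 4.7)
The plan is to derive the expansion by substituting a formal power series in $1/k$ into the $x$-part of the Lax pair (\ref{muLaxe}) and matching coefficients. Inserting the ansatz $\mu_j(x,t,k)=\id+D_1(x,t)/k+D_2(x,t)/k^2+O(k^{-3})$ into $\mu_x-ikp_x[\sig_3,\mu]=V_1\mu$, the potentially singular term at order $k^1$ drops out because $[\sig_3,\id]=0$, and the coefficient of $k^0$ produces the purely algebraic identity
\begin{equation*}
-ip_x[\sig_3,D_1]=V_1.
\end{equation*}
Since the commutator $[\sig_3,\cdot]$ annihilates diagonal matrices and acts on off-diagonal entries by $[\sig_3,M]_{12}=2M_{12}$, $[\sig_3,M]_{21}=-2M_{21}$, this relation pins down exactly the off-diagonal part of $D_1$. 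The diagonal part would be determined at the next order in $1/k$ together with the decay condition inherited from the Volterra representation, but is not needed for the statement.

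Using $V_1=\frac{iu_{xx}}{2m}\sig_2$ and $p_x=\sqrt{m}$, the identity reduces to the two scalar equations $-2i\sqrt{m}\,D_{12}=\frac{u_{xx}}{2m}$ and $2i\sqrt{m}\,D_{21}=-\frac{u_{xx}}{2m}$, both of which yield the common value $D_{12}=D_{21}=\frac{i}{4}\frac{u_{xx}}{m\sqrt{m}}$, exactly as claimed.

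To upgrade this formal matching into a genuine asymptotic expansion with remainder $O(k^{-2})$, I would return to the Volterra integral equation (\ref{mujdef}) and perform a single integration by parts against the oscillatory exponential $e^{\pm 2ik[p(x)-p(y)]}$ on the off-diagonal columns. Because $V_1$ is proportional to $u_{xx}/m$, which is Schwartz in $x$ whenever $u_0$ lies in Schwartz space, the boundary term reproduces $D_1/k$ with the coefficients computed above while the residual integral inherits an additional factor $1/k$. The only subtlety, and thus the main obstacle such as it is, is to verify that the $\mu_j$ appearing in the residual integrand remain uniformly bounded in $k$ on the respective closed upper and lower half-planes of analyticity; this follows from the analyticity and boundedness established in the preceding proposition, and it is what converts the formal matching into a bona fide asymptotic expansion rather than a purely formal series.
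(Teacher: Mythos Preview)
Your argument is correct. The paper states this proposition without proof, so there is nothing to compare against; the formal substitution of a $1/k$ expansion into the $x$-equation of (\ref{muLaxe}) followed by an integration-by-parts in the Volterra representation (\ref{mujdef}) to control the remainder is exactly the standard route by which such large-$k$ expansions are justified for WKI/AKNS-type eigenfunctions, and your computation of the off-diagonal entries via $-ip_x[\sig_3,D_1]=V_1$ is accurate.
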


\subsubsection{The scattering matrix $S(k)$}
Because the eigenfunctions $\mu_1(x,t,k)$ and $\mu_2(x,t,k)$ are both the solutions of the Lax pair (\ref{muLaxe}), they are related by a matrix $S(k)$ which is independent of the variable $(x,t)$.
\be\label{scatermatrix}
\mu_1(x,t,k)=\mu_2(x,t,k)e^{ikp(x,t,k)\hat\sig_3}S(k).
\ee
By the definition of $\mu_j(x,t,k),j=1,2$ (\ref{mujdef}), the matrix $S(k)$ has the form
\be\label{Skdef}
S(k)=\left(\ba{cc}\ol{a(\bar k)}&b(k)\\-\ol{b(\bar k)}&a(k)\ea\right).
\ee
The function $a(k)$ can be computed by
\be\label{akdef}
a(k)=\det{([\mu_2]_1,[\mu_1]_2)},
\ee
where $\det{(A)}$ means the determinate of a matrix $A$. We can know that $a(k)$ is analytic in $D_1$.
\begin{assumption}
In this paper, we assume that the initial value $u_0(x)$ is chosen such that $a(k)$ has no zero (usually, we can assume $u_0(x)$ has small norm).
\end{assumption}
\begin{proposition}
The proposition (\ref{promujrelmu0j}) together with (\ref{akdef}) allows expressing the expansions in powers of $k$ of $a(k)$ at $k=0$,
\be\label{akasy}
a(k)=1+ikc-\frac{c^2}{2}k^2+O(k^3).
\ee
\end{proposition}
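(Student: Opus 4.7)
The plan is to reduce the expansion of $a(k)$ near $k=0$ to the corresponding question for the $k=0$ Jost functions $\mu^0_j$, whose small-$k$ expansion is already in hand from the preceding proposition.

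First I would invoke the bridging identity from Proposition \ref{promujrelmu0j} between the two Jost families. Both $\mu_j$ and $\mu^0_j$ are normalized at $x=\mp\infty$, and a direct comparison of the two representations $\Psi=G\mu\, e^{ikp\sig_3}=\mu^0 e^{(ikx+t/(4ik))\sig_3}$ (using $G\to\id$ as $x\to\pm\infty$ together with the identity $ikp(x,t,k)-(ikx+t/(4ik))=-ik\int_x^\infty(\sqrt m-1)dx'$) gives
\[
\mu^0_1=G\mu_1\, e^{\,ik\int_{-\infty}^x(\sqrt m-1)dx'\,\sig_3},\qquad \mu^0_2=G\mu_2\, e^{-ik\int_x^\infty(\sqrt m-1)dx'\,\sig_3}.
\]
A one-line computation using $u_x^2=m-1=(\sqrt m-1)(\sqrt m+1)$ shows $\det G=1$.

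Substituting these relations into the definition $a(k)=\det([\mu_2]_1,[\mu_1]_2)$ and using $\det G=1$, the two $x$-dependent diagonal exponentials assemble into the single constant factor $e^{ikc}$ with $c=\int_{-\infty}^\infty(\sqrt m-1)dx'$, which is $x$-independent as a full-line integral and $t$-independent by the conservation law (\ref{conslaw}). This produces the clean intermediate identity
\[
a(k)=e^{ikc}\,\det\bigl([\mu^0_2]_1,[\mu^0_1]_2\bigr).
\]

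Finally I would plug in the expansion $\mu^0_j=\id+iu\sig_1 k+[-u^2/2\,\id+i(u^2u_x-2u_t)\sig_2]k^2+O(k^3)$ and compute the $2\times 2$ determinant. The cross term $-(iuk)^2=u^2k^2$ coming from the off-diagonal columns exactly cancels the $-u^2k^2$ piece produced by the diagonal factors $(1-u^2k^2/2)^2$, and the $\sig_2$ contributions drop in only at order $k^4$, so $\det([\mu^0_2]_1,[\mu^0_1]_2)=1+O(k^3)$. Multiplying by the Taylor expansion $e^{ikc}=1+ikc-\tfrac12 c^2k^2+O(k^3)$ then yields exactly (\ref{akasy}). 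The main obstacle to handle carefully is the bridging step above: one must verify that the two $x$-dependent diagonal exponentials relating $\mu_j$ and $\mu^0_j$ combine into a single $(x,t)$-independent phase $e^{ikc}$, which is precisely where the conservation law (\ref{conslaw}) enters. Once that reduction is in place, the remaining derivation is an elementary algebraic manipulation.
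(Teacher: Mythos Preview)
Your proposal is correct and follows exactly the route the paper indicates: it invokes Proposition~\ref{promujrelmu0j} to rewrite $a(k)=\det([\mu_2]_1,[\mu_1]_2)$ in terms of the $\mu^0_j$, picks up the factor $e^{ikc}$ from the combined exponentials, and then uses the small-$k$ expansion of $\mu^0_j$ to see that $\det([\mu^0_2]_1,[\mu^0_1]_2)=1+O(k^3)$. The paper itself gives no further details beyond citing these two ingredients, so your write-up is a faithful (and more explicit) version of the intended argument.
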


\subsection{The relation between $\mu_j(x,t,k)$ and $\mu^0_j(x,t,k)$}
As usual, we use the eigenfunctions $\mu_j$ to define the matrix $M(x,t,k)$ (see (\ref{Mdef})) which is used to formulate a Riemann-Hilbert problem. However, in order to construct the solution $u(x,t)$ from the associate Riemann-Hilbert problem, we need the asymptotic behavior of $\mu_j$ as $k\rightarrow 0$. So, we need related the eigenfunctions $\mu_j(x,t,k)$ to $\mu^0_j(x,t,k)$.
\par
Note that the eigenfunctions $\mu(x,t,k)$ and $\mu^0(x,t,k)$ being related to the same Lax pair (\ref{Laxpair}), must be related to each other as
\be\label{mujrelmu0j}
\mu_j(x,t,k)=G^{-1}(x,t)\mu^0_j(x,t,k)e^{(ikx+\frac{t}{4ik})\sig_3}C_j(k)e^{-ikp(x,t,k)\sig_3}
\ee
with $C_j(k)$ independent of $x$ and $t$. Evaluating (\ref{mujrelmu0j}) as $x\rightarrow \pm\infty$ gives
\be
C_1(k)=e^{-ikc\sig_3},\quad \quad C_2(k)=\id,
\ee
where $c=\int_{-\infty}^{+\infty}(\sqrt{m(x,t)}-1)dx$ is a quantity conserved under the dynamics governed by (\ref{spe}).

\begin{proposition}\label{promujrelmu0j}
The functions $\mu_j(x,t,k)$ and $\mu^0_j(x,t,k)$ are related as follows:
\begin{subequations}
\be
\mu_1(x,t,k)=G^{-1}(x,t)\mu^0_1(x,t,k)e^{-ik\int_{-\infty}^x(\sqrt{m(x',t)}-1)dx'\sig_3},
\ee
\be
\mu_2(x,t,k)=G^{-1}(x,t)\mu^0_2(x,t,k)e^{ik\int_{x}^{+\infty}(\sqrt{m(x',t)}-1)dx'\sig_3}.
\ee
\end{subequations}
\end{proposition}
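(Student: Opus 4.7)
The plan is to substitute the already-identified values of $C_j(k)$ into the general relation (\ref{mujrelmu0j}) and to simplify the resulting scalar exponent in front of $\sig_3$. The paragraph preceding the proposition has already pinned down $C_1(k) = e^{-ikc\sig_3}$ and $C_2(k) = \id$ by letting $x\to -\infty$ for $j=1$ and $x\to +\infty$ for $j=2$; at those limits one has $G(x,t)\to \id$ (since $u_x\to 0$, hence $m\to 1$) and $\mu_j,\,\mu^0_j \to \id$ by the Volterra normalizations (\ref{mu0jdef}) and (\ref{mujdef}), so what is left is a direct computation.

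First I would exploit the fact that both $C_1(k)$ and $C_2(k)$ are diagonal and hence commute with $\sig_3$. This lets the three right-hand exponential factors $e^{(ikx+\frac{t}{4ik})\sig_3}$, $C_j(k)$ and $e^{-ikp(x,t,k)\sig_3}$ fuse into a single factor $e^{\Phi_j(x,t,k)\sig_3}$, so the task reduces to identifying the scalar $\Phi_j$.

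Next I would carry out the cancellation using $\frac{t}{4ik} = -\frac{it}{4k}$ together with the explicit form $p(x,t,k) = x - \int_x^{+\infty}(\sqrt{m(x',t)}-1)\,dx' - \frac{t}{4k^2}$. A direct substitution gives
\[
ikx + \frac{t}{4ik} - ikp(x,t,k) \;=\; ik\int_x^{+\infty}\bigl(\sqrt{m(x',t)}-1\bigr)\,dx',
\]
so the linear-in-$x$ and singular-in-$k$ pieces drop out. This already yields the formula for $\mu_2$, because $C_2(k)=\id$ contributes nothing to the exponent. For $\mu_1$, the additional $-ikc$ coming from $C_1(k)$ combines with $ik\int_x^{+\infty}(\sqrt{m}-1)\,dx'$, and using $c = \int_{-\infty}^{+\infty}(\sqrt{m}-1)\,dx'$ I would rewrite the sum as $-ik\int_{-\infty}^{x}(\sqrt{m(x',t)}-1)\,dx'$, which is the claimed exponent.

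The only place real care is required is the bookkeeping of signs and of the $\frac{1}{i}$ factors: specifically the cancellation $\frac{t}{4ik} + \frac{it}{4k} = 0$ and the rewriting $\int_x^{+\infty} - c = -\int_{-\infty}^{x}$. There is no genuine analytic obstacle here, because the substantive input --- the identification of $C_j(k)$ from the large-$|x|$ behavior of the Jost solutions --- has already been established prior to the proposition.
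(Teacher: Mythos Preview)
Your proposal is correct and follows exactly the route the paper takes: the paper establishes the general relation (\ref{mujrelmu0j}) and identifies $C_1(k)=e^{-ikc\sig_3}$, $C_2(k)=\id$ in the paragraph immediately preceding the proposition, so the proposition amounts to the algebraic simplification you spell out. The paper does not give a separate proof environment for this proposition precisely because, as you note, only the sign and $1/i$ bookkeeping remains once $C_j(k)$ are known.
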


%

\section{The Riemann-Hilbert problem}

\subsection{Definition of a Riemann-Hilbert problem}
In this subsection, we first explain what a Riemann-Hilbert problem is:
\begin{definition}
Let the contour $\Gam$ be the union of a finite number of smooth and oriented curves (orientation means that each arc of $\Gam$ has a positive side and a negative side: the positive (respectively, negative) side lies to the left (respectively, right) as one traverses the contour in the direction of the arrow) on the Riemann sphere $\bar \C$ (i.e. the complex plane with the point at infinity) such that $\bar \C\backslash \Gam$ has only a finite number of connected components. Let $V(k)$ be an $2\times 2$ matrix defined on the contour $\Gam$. The Riemann-Hilbert problem $(\Gam,V)$  is the problem of finding an $2\times 2$ matrix-valued function $M(k)$ that satisfies
\begin{enumerate}
\item $M(k)$ is analytic for $k\in \bar \C\backslash \Gam$ and extends continuously to the contour $\Gam$.
\item $M_+(k)=M_-(k)V(k),\quad k\in\Gam$.
\item $M(k)\rightarrow \id,\quad as \quad k\rightarrow \infty.$
\end{enumerate}
\end{definition}
\par
The Riemann-Hilbert problem can be solved as follows (see, \cite{bc}). Assume that $V(k)$ admits some factorization
\be\label{BCRHPfac}
V(k)=b_-^{-1}(k)b_+(k),
\ee
where
\be\label{BCRHPbpm}
b_+(k)=\om_+(k)-\id,\quad b_-(k)=\id-\om_-(k).
\ee
And define
\be\label{BCRHPom}
\om(k)=\om_+(k)+\om_-(k).
\ee
\par
Let
\be\label{BCRHPcauchy}
(C_{\pm}f)(k)=\int_{\Gam}\frac{f(\xi)}{\xi-k_{\pm}}\frac{d\xi}{2\pi i},\quad k\in\Gam,f\in L^2(\Gam),
\ee
denote the Cauchy operator on $\Gam$. As is well known, the operator $C_{\pm}$ are bounded from $L^2(\Gam)$ to $L^2(\Gam)$, and $C_+-C_-=\Rmnum{1}$, here $\Rmnum{1}$ denote the identify operator.
\par
Define
\be\label{BCRHPCom}
C_{\om}f=C_+(f\om_-)+C_-(f\om_+)
\ee
for $2\times 2$ matrix-valued functions $f$. Let $\mu$ be the solution of the basic inverse equation
\be\label{BCRHPinverseeq}
\mu=\id+C_{\om}\mu.
\ee
Then
\be\label{BCRHPsol}
M(k)=\id+\int_{\Gam}\frac{\mu(\xi)\om(\xi)}{\xi-k}\frac{d\xi}{2\pi i},\quad k\in \bar \C\backslash \Gam,
\ee
is the solution of the Riemann-Hilbert problem. (See \cite{dz},P.322).

\subsection{The Riemann-Hilbert problem for short pulse equation}
Let us define
\be\label{Mdef}
M(x,t,k)=\left\{\ba{cc}\left(\ba{cc}[\mu_2]_1&\frac{[\mu_1]_2}{a(k)}\ea\right),&k\in D_1,\\
\left(\ba{cc}\frac{[\mu_2]_1}{\ol{a(\bar k)}}&[\mu_1]_2\ea\right),&k\in D_2.
\ea
\right.
\ee
From the definition (\ref{Mdef}) and (\ref{mujdef}), we can deduce $M(x,t,k)$ satisfies the symmetry condition
\be\label{msymcon}
  \ol{M(x,t,\bar k)}=M(x,t,-k)=\sig_2M(x,t,k)\sig_2.
  \ee

And $M(x,t,k)$ satisfies the following Riemann-Hilbert problem:
\begin{itemize}
  \item Jump condition: The two limiting values
                        \be
                        M_{\pm}(x,t,k)=\lim_{\eps\rightarrow 0}M_{\pm}(x,t,k\pm i\eps),\quad k\in \R,
                        \ee
                        are related by
\be\label{Mjump}
M_+(x,t,k)=M_-(x,t,k)J(x,t,k),\quad k\in \R,
\ee
where 
\be\label{Jdef}
J(x,t,k)=e^{ikp(x,t,k)\hat \sig_3}J_0(k)
\ee
here
\be
J_0(k)=\left(\ba{cc}
1&r(k)\\
\ol{r(k)}&1+|r(k)|^2
\ea
\right)
\ee
with $r(k)=\frac{b(k)}{a(k)}$.
  \item Normalize condition as $k\rightarrow \infty$
\be
M(x,t,k)=\id+O(\frac{1}{k}).
\ee
\end{itemize}
In order to get the information of the solution $u(x,t)$, we should consider the asymptotic behavior of $M(x,t,k)$ as $k\rightarrow 0$, that is,
{\small
  \be\label{Masyk0}
  M(x,t,k)=G^{-1}(x,t)\left[\id+k(ic_+\sig_3+iu\sig_1)+k^2[-\frac{c^2_++u^2}{2}\id+i(uc_+-2u_t+u^2u_x)\sig_2]+O(k^3)\right],
  \ee
}
  where
  \be
  c_+=\int_x^{+\infty}(\sqrt{m(x',t)}-1)dx'.
  \ee

Equations (\ref{Masyk0}) show that the matrix-valued function $M(x,t,k)$ contains all necessary information for reconstructing the solution of the initial value problem of (\ref{spe})-(\ref{spe-ini}) in terms of the solution of a matrix-valued Riemann-Hilbert problem.

However, the jump relation (\ref{Jdef}) cannot
be used immediately for recovering the solution of SP equation (\ref{spe})-(\ref{spe-ini}). Since, in the representation of the jump matrix
$e^{ikp(x,t,k)\hat \sig_3}J_0(k)$ the factor $J_0(k)$ is indeed given in terms of the known initial
data $u_0(x)$ but $p(x,t,k)$ is not, it involves $m(x,t)$ which is unknown (and, in fact, is
to be reconstructed).
\par
To overcome this, we introduce the new (time-dependent) scale
\be\label{ydef}
y(x,t)=x-\int_{x}^{+\infty}(\sqrt{m(x',t)}-1)dx'=x-c_+(x,t).
\ee
in terms of which the jump matrix becomes explicit.
The price to pay for this, however, is that the solution of the initial problem
can be given only implicitly, or parametrically: it will be given in terms of
functions in the new scale, whereas the original scale will also be given in terms
of functions in the new scale.

By the definition of the new scale $y(x,t)$, we define
\be\label{muydef}
\tilde M(y,t,k)=M(x(y,t),t,k),
\ee
then we can obtain the Riemann-Hilbert problem of $\tilde M(y,t,k)$ as follows:
  \begin{itemize}
  \item Analyticity: $\tilde M(y,t,k)$ is analytic in the two open half-planes $D_1$ and $D_2$, and continuous up to the boundary $k\in\R$.
  \item Jump condition: The two limiting values
                        \begin{subequations}\label{muyrhp}
                        \be
                        \tilde M_{+}(y,t,k)=\tilde M_{-}(y,t,k)\tilde J(y,t,k),\quad k\in\R,
                        \ee
                        where the jump matrix is
                        \be
                        \tilde J(y,t,k)=e^{i(ky-\frac{t}{4k})\hat\sig_3}J_0(k)
                        \ee
                        with
                        \be\label{J0def}
                        J_0(k)=\left(\ba{cc}
                                     1&r(k)\\
                                     \ol{r(k)}&1+|r(k)|^2
                                     \ea
                                     \right)
                        \ee
                        \end{subequations}
  \item Normalization:
                       \be
                       \tilde M(y,t,k)\rightarrow \id,\quad k\rightarrow \infty.
                       \ee
  \end{itemize}

\begin{theorem}
Let $\tilde M(y,t,k)$ satisfies the above conditions, then this Riemann-Hilbert problem has a unique solution. And the solution $u(x,t)$ of the initial value problem (\ref{spe})-(\ref{spe-ini}) can be expressed, in parametric form, in terms of the solution of this Rieamnn-Hilbert problem:
\begin{subequations}
\be
u(x,t)=u(y(x,t),t),
\ee
where
\be
x(y,t)=y+\lim_{k\rightarrow 0}\frac{\left((\tilde M(y,t,0))^{-1}\tilde M(y,t,k)\right)_{11}}{ik}
\ee
\be
u(y,t)=\lim_{k\rightarrow 0}\frac{\left((\tilde M(y,t,0))^{-1}\tilde M(y,t,k)\right)_{21}}{ik}
\ee
\end{subequations}

\end{theorem}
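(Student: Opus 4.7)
The argument splits naturally into three pieces: unique solvability of the Riemann--Hilbert problem stated for $\tilde M$, and the two reconstruction formulas (for $x$ and for $u$).

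\textbf{Uniqueness.} A direct computation gives $\det J_0(k)=1\cdot(1+|r(k)|^2)-r(k)\overline{r(k)}=1$ for every real $k$. Hence $\det \tilde M(y,t,k)$ has no jump across $\R$, is holomorphic in $\bar\C\setminus\R$, continuous up to $\R$, and tends to $1$ at infinity; by Liouville's theorem $\det\tilde M\equiv 1$, so $\tilde M(y,t,k)$ is pointwise invertible. If $\tilde M^{(1)}$ and $\tilde M^{(2)}$ are two solutions, then $\tilde M^{(1)}(\tilde M^{(2)})^{-1}$ has no jump across $\R$, is bounded and entire, and tends to $\id$ at infinity, so Liouville forces it to be $\id$ identically.

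\textbf{Existence.} I would place the problem into the Beals--Coifman framework recalled in the previous subsection. Take the factorization $J_0=(\id-\om_-)^{-1}(\id+\om_+)$ with $\om_+$ carrying $r(k)$ in the $(1,2)$-position and $\om_-$ carrying $-\overline{r(k)}$ in the $(2,1)$-position. Since $J_0(k)$ is Hermitian and positive definite on $\R$ (its $(1,1)$-entry is $1$ and $\det J_0=1$), Zhou's vanishing lemma applies: the singular integral operator $\id-C_\om$ on $L^2(\R)$ has trivial kernel and, being Fredholm of index zero, is invertible. Thus (\ref{BCRHPinverseeq}) is uniquely solvable, and (\ref{BCRHPsol}) produces the required $\tilde M$. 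The symmetry $\overline{\tilde M(y,t,\bar k)}=\sig_2\tilde M(y,t,k)\sig_2$ is inherited from the matching symmetry of $J_0$ via uniqueness.

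\textbf{Reconstruction.} The key input is the $k\to 0$ expansion (\ref{Masyk0}) of $M(x,t,k)$, which gives $M(x,t,0)=G^{-1}(x,t)$ and
\[
M(x,t,0)^{-1}M(x,t,k)=\id+ik\bigl(c_+(x,t)\sig_3+u(x,t)\sig_1\bigr)+O(k^2).
\]
Reading off the linear-in-$k$ coefficient of the $(1,1)$-entry gives $ic_+$, and of the $(2,1)$-entry gives $iu$. Passing to the new scale via $\tilde M(y,t,k)=M(x(y,t),t,k)$ and using $x=y+c_+$ from (\ref{ydef}) yields the two claimed identities; in the formula for $x(y,t)$ the subtraction of the leading $\id$ inside the limit is understood, the resulting coefficient being exactly $c_+=x-y$.

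The main obstacle I foresee is the vanishing-lemma step: one has to verify that Zhou's hypotheses apply uniformly in the parameters $(y,t)$, so that the $\tilde M$ produced by (\ref{BCRHPsol}) is regular enough for the $k\to 0$ expansion to be read off pointwise. A secondary, but necessary, point is to show that the map $x\mapsto y(x,t)$ defined by (\ref{ydef}) is a diffeomorphism of $\R$ for every $t$, so that $M(x,t,k)$ and $\tilde M(y,t,k)$ carry the same information; this follows from $y_x=\sqrt{m}>0$ together with the conservation law (\ref{conslaw}) and the Schwartz-class assumption on $u_0$ (combined with the assumption that $a(k)$ has no zeros, so that no discrete spectrum enters).
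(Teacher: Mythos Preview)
Your proposal is correct and follows essentially the same approach as the paper, whose proof is extremely terse: it simply cites Hermiticity of $\tilde J$ for existence, the normalization for uniqueness, and the expansion (\ref{Masyk0}) for the reconstruction formulas. Your version spells out the standard machinery behind each of these one-line claims (the Liouville argument, Zhou's vanishing lemma, and the explicit coefficient extraction including the implicit subtraction of $\id$ in the $(1,1)$ formula), and correctly flags the diffeomorphism and regularity issues that the paper leaves unmentioned.
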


\begin{proof}
 Since the jump matrix $\tilde J(y,t,k)$ is a Hermitian matrix, then the Riemann-Hilbert problem of $\tilde M(y,t,k)$ indeed has a solution. Furthermore, the Riemann-Hilbert problem has only one solution because of the normalize condition.

 \par
 The statements of the solution $u(x,t)$ is following from the asymptotic formula (\ref{Masyk0}).

\end{proof}

\section{Long-time Asymptotics: fast decaying region $\xi=\frac{x}{t}>\eps>0$, Proof of theorem \ref{main1}}
In this section, we employ the nonlinear steepest descent method introduced by Deift and Zhou \cite{dz} to analyze the long-time asymptotic behavior of the solution $u(x,t)$ to the initial value problem (\ref{spe})-(\ref{spe-ini}).
\par
The key feature of the method is the deformation of the original Riemann-Hilbert problem according to the signature table for the phase function $\theta$ in jump matrix $\tilde J$ written in the form
\be
\tilde J(y,t,k)=e^{it\tha(\tilde\xi,k)\hat\sig_3}J_0(k),
\ee
where
\be
\tha(\tilde\xi,k)=\tilde\xi k-\frac{1}{4k},
\ee
\be
\tilde\xi=\frac{y}{t}.
\ee
\par
The signature table is the distribution of signs of $\im \tha(\tilde\xi,k)$ in the $k-$plane, 
\be
\im \tha(\tilde\xi,k)=k_2[\tilde\xi+\frac{1}{4(k_1^2+k_2^2)}],
\ee
where $k_1$ and $k_2$ are the real and image part of $k$, respectively, i.e. $k=k_1+ik_2$.
\par

Under the condition $\tilde \xi>\eps$ for any $\eps>0$, then we have $\im \tha(\tilde\xi,k)>0$ and $\im \tha(\tilde\xi,k)<0$, as $k_2=\im k>0$ and $k_2=\im k<0$, respectively, see figure \ref{fig3}.

\begin{figure}[th]
\centering
\includegraphics{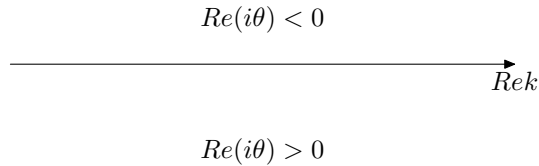}
\caption{\small The signs of $\re i\tha$ in the $k-$plane in the case $\tilde\xi >0$.}\label{fig3}
\end{figure}

This suggests the use of the following factorization of the jump matrix for all $k\in \R$:
\be\label{Jycase1fac}
\tilde J(y,t,k)=
\left(\ba{cc}1&0\\\ol{r(k)}e^{-2it\tha}&1\ea\right)\left(\ba{cc}1&r(k)e^{2it\tha}\\0&1\ea\right)
\ee
\par

\begin{figure}[th]
\centering
\includegraphics{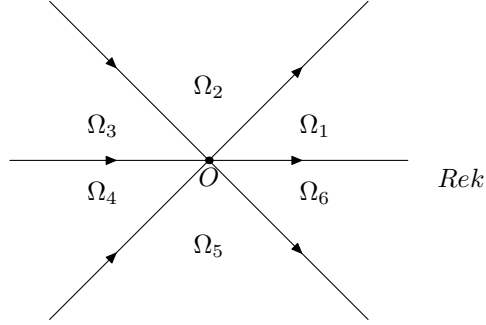}
\caption{\small The contour $\tilde \Sigma$ in the $k-$plane as $\tilde\xi >0$.}\label{fig4}
\end{figure}
Without loss of generality, we may assume that the function $\ol{r(k)}e^{-2it\tha}$ of the left triangular factor of (\ref{Jycase1fac}) extends analytically to the region $\im k<0$ and continuous in the closure of the region. And the function $r(k)e^{2it\tha}$ of the right factor extends the region $\im k>0$ by taking conjugate. If the analytic conditions are dropped, then there is a similar procedure to obtain the "weak" analytic conditions. To show this, we write $\ol{r(k)}$ as a Fourier transform with respect to $\tha$,
\be
\ba{rcl}
\ol{r(k)}e^{-2it\tha}&=&\frac{e^{-2it\tha}}{\sqrt{2\pi}(k-i)^2}\int_{-\infty}^{\infty}e^{is\tha(k)}\hat{g}(s)ds\\
{}&=&\frac{e^{-2it\tha}}{\sqrt{2\pi}(k-i)^2}\int_{t}^{\infty}e^{is\tha(k)}\hat{g}(s)ds+\frac{e^{-2it\tha}}{\sqrt{2\pi}(k-i)^2}\int_{-\infty}^{t}e^{is\tha(k)}\hat{g}(s)ds\\
{}&=&e^{-2it\tha(k)}h_{\Rmnum{1}}(k)+e^{-2it\tha(k)}h_{\Rmnum{2}}(k),
\ea
\ee
where
\[
\ba{l}
\hat{g}(s)=\frac{1}{\sqrt{2\pi}}\int_{-\infty}^{\infty}e^{-is\tha(k)}g(\tha)d\tha,\\
g(\tha)=\ol{r(k(\tha))}(k(\tha)-i)^2.
\ea
\]

Here $e^{-2it\tha(k)}h_{\Rmnum{2}}(k)$ has an analytic continuation to the lower half-plane and decays exponentially in $L^{1}\cap L^{\infty}(\Sig\cap\{k|\im k<0\})$, as $t\rightarrow \infty$, while $e^{-2it\tha(k)}h_{\Rmnum{1}}(k)$ decays rapidly in $L^{1}\cap L^{\infty}(\R)$, as $t\rightarrow \infty$.

Introducing the following the following transformation:
\be
\tilde M^{(1)}(y,t,k)=\left\{
\ba{ll}
\tilde M(y,t,k) \left(\ba{cc}1&-\ol{h_{\Rmnum{2}}(\bar k)}e^{2it\tha}\\0&1\ea\right),&k\in \Omega_1\cup \Omega_3,\\
\tilde M(y,t,k) \left(\ba{cc}1&0\\h_{\Rmnum{2}}(k)e^{-2it\tha}&1\ea\right),& k\in \Omega_4\cup \Omega_6,\\
\tilde M(y,t,k),& k\in \Omega_2\cup \Omega_5,
\ea
\right.
\ee
where $\Omega_j,j=1,2,\dots,6$ are shown in figure \ref{fig4}.
We obtain the new Riemann-Hilbert problem for $\tilde M^{(1)}(y,t,k)$,
\be
\left\{
\ba{l}
\tilde M_+^{(1)}(y,t,k)=\tilde M_-^{(1)}(y,t,k) \tilde J^{(1)}(y,t,k),\quad k\in \tilde\Sigma,\\
\tilde M^{(1)}(y,t,k)\rightarrow \id,\quad k\rightarrow \infty.
\ea
\right.
\ee
where
\be
\tilde J^{(1)}(y,t,k)=\left\{
\ba{ll}
\left(\ba{cc}1&\ol{h_{\Rmnum{2}}(\bar k)}e^{2it\tha}\\0&1\ea\right),&k\in \tilde\Sigma\cap D_1,\\
\left(\ba{cc}1&0\\h_{\Rmnum{2}}(k)e^{-2it\tha}&1\ea\right),&k\in \tilde\Sigma\cap D_2,\\
\left(\ba{cc}1&0\\h_{\Rmnum{1}}(k)e^{-2it\tha}&1\ea\right)\left(\ba{cc}1&\ol{h_{\Rmnum{1}}(\bar k)}e^{2it\tha}\\0&1\ea\right),&k\in \R,
\ea
\right.
\ee
\begin{theorem}
As $t\rightarrow \infty$, the solution $u(x,t)$ of the initial value problem (\ref{spe})-(\ref{spe-ini}) decays fast in the range $\xi>\eps$ for any $\eps >0$.
\end{theorem}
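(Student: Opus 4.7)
\bigskip

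\noindent\textbf{Proof proposal.} The plan is to follow the standard Deift--Zhou small-norm argument on the already-deformed problem for $\tilde M^{(1)}$, and then translate the rapid decay from $(y,t)$ back to $(x,t)$ using the reconstruction formulas. First I would check the signature of $\tha(\tilde\xi,k)=\tilde\xi k-\frac{1}{4k}$ in the region $\tilde\xi>\eps$: since $\tha'(k)=\tilde\xi+\frac{1}{4k^2}$ has no real zeros for $\tilde\xi>0$, the phase has no stationary points on $\R$, and $\im\tha$ is strictly positive in the upper half plane and strictly negative in the lower half plane, with bounds away from zero on the rays of $\tilde\Sigma$ outside any neighborhood of $k=0$. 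Consequently, on the two off-axis pieces of $\tilde\Sigma$, the factors $e^{\pm 2it\tha}$ in $\tilde J^{(1)}$ decay exponentially in $t$, and on $\R$ the factor $h_{\Rmnum{1}}(k)e^{-2it\tha}$ decays faster than any power of $t^{-1}$ by construction of the Fourier-truncation splitting $\ol{r(k)}e^{-2it\tha}=e^{-2it\tha}h_{\Rmnum{1}}+e^{-2it\tha}h_{\Rmnum{2}}$.

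Next I would verify that the jump $\tilde J^{(1)}-\id$ is small in $L^1\cap L^2\cap L^\infty(\tilde\Sigma)$. On $\tilde\Sigma\cap D_1$ and $\tilde\Sigma\cap D_2$ this follows from the exponential decay of $e^{\pm 2it\tha}$ together with boundedness of $h_{\Rmnum{2}}$; on $\R$ it follows from rapid decay of $h_{\Rmnum{1}}$. Hence the singular integral operator $C_{\tilde\omega^{(1)}}$ is a contraction for $t$ large, and the resolvent $(I-C_{\tilde\omega^{(1)}})^{-1}$ exists and is uniformly bounded. The Beals--Coifman representation then gives
\be
\tilde M^{(1)}(y,t,k)=\id+\int_{\tilde\Sigma}\frac{\mu^{(1)}(\x)\tilde\omega^{(1)}(\x)}{\x-k}\frac{d\x}{2\pi\I},
\ee
and from the norm estimates one obtains $\tilde M^{(1)}(y,t,k)=\id+O(t^{-N})$ for every $N\ge 1$, uniformly for $k$ in any compact set disjoint from $\tilde\Sigma$ (in particular at $k=0$ and in a punctured neighborhood of $0$).

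Now I would undo the triangular transformation to recover $\tilde M$. Since on $\Om_2\cup\Om_5$ (in particular in a neighborhood of $k=0$) we have $\tilde M^{(1)}=\tilde M$, the same fast-decay estimate transfers to $\tilde M(y,t,k)$ near $k=0$. Substituting into the reconstruction formulas
\be
x(y,t)=y+\lim_{k\to 0}\frac{\left((\tilde M(y,t,0))^{-1}\tilde M(y,t,k)\right)_{11}}{\I k},\quad u(y,t)=\lim_{k\to 0}\frac{\left((\tilde M(y,t,0))^{-1}\tilde M(y,t,k)\right)_{21}}{\I k},
\ee
and expanding $(\tilde M(y,t,0))^{-1}\tilde M(y,t,k)=\id+O(k)$ where the $O(k)$ coefficient is itself $O(t^{-N})$ (since it is built from $\tilde M-\id$ and its $k$-derivative at $0$), I would conclude $u(y,t)=O(t^{-N})$ and $x(y,t)-y=O(t^{-N})$ uniformly in the wedge $\tilde\xi>\eps$. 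The latter gives an invertible change of scale showing that the wedge $\tilde\xi>\eps$ in $(y,t)$ covers the wedge $\xi>\eps$ (shrinking $\eps$ harmlessly) in $(x,t)$, so $u(x,t)$ decays faster than any power of $t^{-1}$ there.

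The main obstacle I expect is the uniformity of all estimates near $k=0$: the reconstruction uses the expansion at $k=0$ rather than $k=\infty$, so I must ensure that the neighborhood of the origin lies in a region $\Om_2\cup\Om_5$ where no triangular dressing has been applied, and that the rapid decay of $r(k)$ (and hence of $h_{\Rmnum{1}},h_{\Rmnum{2}}$) at $k=0$ is strong enough to keep the small-norm estimates and the $k$-derivative of $\tilde M$ at $0$ under control. The choice of the Fourier cut-off in the splitting of $\ol{r(k)}$, combined with the Schwartz-class assumption on $u_0$, should make both of these points automatic, but they are the places where care is required.
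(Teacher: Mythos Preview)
Your proposal is correct and follows essentially the same route as the paper's proof, only with considerably more detail: the paper compresses the entire argument into three lines, asserting that the deformed jump matrix decays to $\id$, hence $\tilde M^{(1)}\to\id$, hence $\tilde u(y,t)\to 0$ and $y\to x$ so the wedges $\tilde\xi>\eps$ and $\xi>\eps$ coincide asymptotically. Your explicit invocation of the Beals--Coifman small-norm machinery and the reconstruction formulas is exactly the content behind those lines, and your identification of the $k=0$ neighborhood as the one place requiring care (handled via the rapid vanishing of $r(k)$ at $k=0$) is the right caveat --- indeed the paper relies on the same fact ($r(k)=O(k^3)$) in the oscillatory region to justify that the triangular dressing does not disturb the expansion at $k=0$.
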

\begin{proof}
The above transformation reduces the Riemann-Hilbert problem of $\tilde M^{(1)}(y,t,k)$ to that with exponentially decaying in t to the identity matrix jump matrix. Since this Riemann-Hilbert problem is holomorphic, its solution decays fast to $\id$ and consequently $\tilde u(y,t)$ decays fast to $0$ while $y$ approaches fast $x$ and thus the domain $\tilde \xi >\eps$ and $\xi>\eps$ coincide asymptotically.
\end{proof}

\section{Long-time Asymptotics: oscillation region $\xi<-\eps<0$, Proof of theorem \ref{main2}}

If $\tilde \xi<-\eps$ for any $\eps>0$, let $k_0$ be defined by
         \be\label{k0def}
         k_0=\sqrt{\frac{-1}{4\tilde\xi}},
         \ee
then the signature table is shown as figure \ref{fig1},
\begin{figure}[th]
\centering
\includegraphics{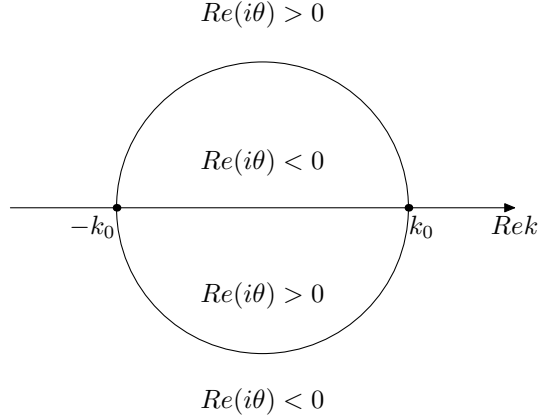}
\caption{\small The signs of $\re i\tha$ in the $k-$plane in the case $\tilde\xi <0$.}\label{fig1}
\end{figure}
\par
This suggests the use of the following factorizations of the jump matrix $\tilde J(y,t,k)$:
{\small
\be\label{Jycase2fac}
\tilde J(y,t,k)=\left\{
\ba{ll}
\left(\ba{cc}1&0\\\ol{r(k)}e^{-2it\tha}&1\ea\right)\left(\ba{cc}1&r(k)e^{2it\tha}\\0&1\ea\right),&|k|<k_0,\\
\left(\ba{cc}1&\frac{r(k)}{1+|r(k)|^2}e^{2it\tha}\\0&1\ea\right)\left(\ba{cc}\frac{1}{1+|r(k)|^2}&0\\0&1+|r(k)|^2\ea\right)
\left(\ba{cc}1&0\\\frac{\ol{r(k)}}{1+|r(k)|^2}e^{-2it\tha}&1\ea\right),&|k|>k_0.
\ea
\right.
\ee
}
Then we need make some appropriate sequence of deformations of the Riemann-Hilbert problem. 

\subsection{The conjugate transformation}
The aim of the first transformation involves the removal of the diagonal factor in (\ref{Jycase2fac}) for $|k|>k_0$.

Introducing a scalar function $\delta(k)$ which satisfies the following scalar Riemann-Hilbert problem
\be
\left\{
\ba{rll}
\dta_+(k)&=\dta_-(k)(1+|r(k)|^2),&|k|>k_0,\\
&=\dta_-(k)=\dta(k),&|k|<k_0.\\
\dta(k)&\rightarrow 1&k\rightarrow\infty.
\ea
\right.
\ee
Then the function $\dta(k)$ is given by
\be\label{dtadef}
\dta(k)=e^{\frac{1}{2\pi i}(\int_{-\infty}^{-k_0}+\int_{k_0}^{+\infty})\frac{\ln{(1+|r(s)|^2)}}{s-k}ds}.
\ee
The conjugate transformation
\be\label{m1trans}
\tilde M^{(1)}(y,t,k)=\tilde M(y,t,k)\delta(k)^{\sig_3},
\ee
yields the Riemann-Hilbert for $\tilde M^{(1)}(y,t,k)$
\begin{subequations}\label{mu1rhp}
\be\label{Jy1rhp}
\left\{
\ba{l}
\tilde M^{(1)}_+(y,t,k)=\tilde M^{(1)}_-(y,t,k)\tilde J^{(1)}(y,t,k),\quad k\in\R,\\
\tilde M^{(1)}(y,t,k)\rightarrow \id,\quad k\rightarrow \infty,
\ea
\right.
\ee
where
\be
\tilde J^{(1)}(y,t,k)=\left\{
\ba{ll}
\left(\ba{cc}1&0\\\ol{r(k)}\dta^2e^{-2it\tha}&1\ea\right)\left(\ba{cc}1&r(k)\dta^{-2}e^{2it\tha}\\0&1\ea\right),&|k|<k_0,\\
\left(\ba{cc}1&\frac{r(k)}{1+|r(k)|^2}\dta_-^{-2}e^{2it\tha}\\0&1\ea\right)
\left(\ba{cc}1&0\\\frac{\ol{r(k)}}{1+|r(k)|^2}\dta_+^2e^{-2it\tha}&1\ea\right),&|k|>k_0.
\ea
\right.
\ee
\end{subequations}

Now, let us come back to the solution $u(x,t)$.
From (\ref{dtadef}) it follows that
\be
\dta(k)=\dta_0+k\dta_1+O(k^2)=1-\frac{ik}{\pi}\int_{k_0}^{\infty}\frac{\ln{(1+|r(s)|^2)}}{s^2}ds+O(k^2),
\ee
If we write
\be
\tilde M(y,t,k)=\tilde M_0(y,t)+k\tilde M_1(y,t)+O(k^2),\quad k\rightarrow 0,
\ee
and
\be
\tilde M^{(1)}(y,t,k)=\tilde M^{(1)}_0(y,t)+k\tilde M^{(1)}_1(y,t)+O(k^2),\quad k\rightarrow 0,
\ee
then from the transformation (\ref{m1trans}) we obtain
\be
\tilde M_0(y,t)=\tilde M^{(1)}_0(y,t),\quad \tilde M_1(y,t)=\tilde M^{(1)}_1(y,t)-\tilde M^{(1)}_0(y,t)\dta_1\sig_3,
\ee
Hence, we have
\begin{subequations}
\be
u(x,t)=-i\left[(\tilde M^{(1)}_0)^{-1}\tilde M^{(1)}_1\right]_{21},
\ee
\be
c_+=-i\left(\left[(\tilde M^{(1)}_0)^{-1}\tilde M^{(1)}_1\right]_{11}-\dta_1\right)
\ee
\end{subequations}

\subsection{Analytic extension}
For the convenience of the notation, we transverse the direction of the component $|k|<k_0$ of the jump contour $\R$ for the Riemann-Hilbert problem of $\tilde M^{(1)}(y,t,k)$. Then, the jump matrix $\tilde J^{(1)}(y,t,k)$ becomes
\be\label{Jy1jump}
\tilde J^{(1)}(y,t,k)=\left\{
\ba{ll}
\left(\ba{cc}1&-r(k)\dta^{-2}e^{2it\tha}\\0&1\ea\right)\left(\ba{cc}1&0\\-\ol{r(k)}\dta^2e^{-2it\tha}&1\ea\right),&|k|<k_0,\\
\left(\ba{cc}1&\frac{r(k)}{1+|r(k)|^2}\dta_-^{-2}e^{2it\tha}\\0&1\ea\right)
\left(\ba{cc}1&0\\\frac{\ol{r(k)}}{1+|r(k)|^2}\dta_+^2e^{-2it\tha}&1\ea\right),&|k|>k_0.
\ea
\right.
\ee

Denote contours:
\begin{subequations}
\be
L_{0}=\{k|k=k_0\lam e^{-i\frac{\pi}{4}},0\le k\le \frac{1}{\sqrt{2}}\}\cup \{k|k=k_0\lam e^{i\frac{3\pi}{4}},0\le k\le \frac{1}{\sqrt{2}}\},
\ee
\be
L_{1}=\{k|k=k_0+k_0\lam e^{-i\frac{3\pi}{4}},-\infty<\lam<\frac{1}{\sqrt{2}}\},\quad L_{1\eps}=\{k|k=k_0+k_0\lam e^{-i\frac{3\pi}{4}},\eps<\lam<\frac{1}{\sqrt{2}}\}
\ee
\be
L_{2}=\{k|k=-k_0+k_0\lam e^{-i\frac{\pi}{4}},-\infty<\lam<\frac{1}{\sqrt{2}}\},\quad L_{2\eps}=\{k|k=-k_0+k_0\lam e^{-i\frac{\pi}{4}},\eps<\lam<\frac{1}{\sqrt{2}}\}.
\ee
\end{subequations}

\begin{proposition}\label{analyticpro}
Let
\be\label{rhodef}
\rho(k)=\left\{
\ba{ll}
-\ol{r(k)},&|k|<k_0,\\
\frac{\ol{r(k)}}{1+|r(k)|^2},&|k|>k_0.
\ea
\right.
\ee
Then $\rho(k)$ has a decomposition
\be\label{rhoanalic}
\rho(k)=h_{\Rmnum{1}}(k)+(h_{\Rmnum{2}}(k)+R(k)),
\ee
where $h_{\Rmnum{1}}(k)$ is small and $h_{\Rmnum{2}}(k)$ has an analytic continuation to $L$ and $L_0$. For example, if $k>k_0$, $h_{\Rmnum{2}}(k)$ of the function $\rho(k)$ has an analytic continuation to the $L_1\cap\im k>0$. And $R(k)$ is piecewise rational ($R(k)=0$, if $k\in L_0$) function.
\par
And let $M$ is a positive constant, as $k_0<M$, $R(k),h_{\Rmnum{1}}(k), h_{\Rmnum{2}}(k)$ satisfy
\begin{subequations}
\be\label{case2h1}
|e^{-2it\tha(k)}h_{\Rmnum{1}}(k)|\le\frac{c}{(1+|k|^2)t^l}, \quad k\in \R,
\ee

\be
|e^{-2it\tha(k)}\frac{h_{\Rmnum{1}}(k)}{k^j}|\le\frac{c}{(1+|k|^2)t^l},\quad 0<|k|<\frac{|k_0|}{2},j=1,2,
\ee

\be\label{case2h2l}
|e^{-2it\tha(k)}h_{\Rmnum{2}}(k)|\le \frac{c}{(1+|k|^2)t^l},\quad k\in L=\L_1\cup L_2,
\ee
\be\label{case2h2l0}
|e^{-2it\tha(k)}h_{\Rmnum{2}}(k)| \le ce^{-t\frac{1}{8k_0}},\quad k\in L_0,
\ee

\be
|e^{-2it\tha(k)}\frac{h_{\Rmnum{2}}(k)}{k^j}| \le ce^{-t\frac{1}{8k_0}},\quad k\in L_0,j=1,2,
\ee

and
\be\label{case2R}
|e^{-2it\tha(k)}R(k)|\le ce^{-\frac{\eps^2}{2k_0}t},\quad k\in L_{\eps}=L_{1\eps}\cup L_{2\eps}.
\ee
\end{subequations}
for arbitrary natural number $l$, for sufficiently large constants $c$, for some fixed positive constant $M$.
\end{proposition}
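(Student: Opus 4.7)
The plan is to implement the standard Deift--Zhou decomposition that splits $\rho(k)$ into three pieces: a rational summand $R(k)$ which is treated exactly by residues in subsequent contour deformations, an analytic summand $h_{\Rmnum{2}}(k)$ which extends holomorphically off $\R$ onto the deformed contours $L_0, L_1, L_2$ where the factor $e^{-2it\tha(k)}$ is exponentially decaying, and a rapidly decaying remainder $h_{\Rmnum{1}}(k)$ on $\R$. The geometric input for every bound is the Taylor expansion
\be
\tha(k)=\tha(\pm k_0)\mp \frac{1}{4k_0^3}(k\mp k_0)^2+O((k\mp k_0)^3),
\ee
together with the signature table in Figure \ref{fig1}. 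Along the rays of $L_1$ and $L_2$ issuing from the stationary points $\pm k_0$ into the half-planes chosen by this signature, $|e^{-2it\tha(k)}|\le e^{-ct|k\mp k_0|^2}$ for a fixed $c>0$; on $L_0$ the distance to $\pm k_0$ is at least $k_0/\sqrt{2}$, giving the uniform bound $|e^{-2it\tha}|\le e^{-t/(8k_0)}$; and on $L_\eps$ the distance is at least $\eps k_0$, producing $|e^{-2it\tha}|\le e^{-\eps^2 t/(2k_0)}$.

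First I would construct $R(k)$ as a rational function with poles only at $k=\pm i$ (so safely off $\R$ and off $L_0\cup L_1\cup L_2$), of degree high enough that $R$ matches the Taylor jet of $\rho(k)$ at each of the three points $-k_0,0,k_0$ up to some prescribed order $m$. Matching at $\pm k_0$ is what makes $\rho-R$ extend as a $C^m$ function across the discontinuities of the piecewise definition \eqref{rhodef}, and matching at $k=0$ is what makes the quotients $(\rho-R)(k)/k^j$ smooth near the origin, which is precisely what is needed for the divided estimates. The prefactor $(k-i)^{-2}$ in the Fourier representation of $\ol{r(k)}$ supplies the weight $(1+|k|^2)^{-1}$ appearing in \eqref{case2h1}--\eqref{case2h2l}.

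Second, I would apply the Fourier decomposition to the smooth difference $(\rho-R)(k)$ in the phase variable $s=\tha(k)$, following the recipe already used in Section 4: writing
\be
(\rho-R)(k)=\frac{1}{\sqrt{2\pi}(k-i)^2}\int_{-\infty}^{\infty}e^{is\tha(k)}\hat g(s)\,ds,
\ee
with $\hat g$ the Fourier transform (in $\tha$) of $g(\tha)=(\rho-R)(k(\tha))(k(\tha)-i)^2$, and splitting the integral at $s=t$ yields
\be
h_{\Rmnum{1}}(k)=\frac{1}{\sqrt{2\pi}(k-i)^2}\int_{t}^{\infty}e^{is\tha(k)}\hat g(s)\,ds,\quad h_{\Rmnum{2}}(k)=\frac{1}{\sqrt{2\pi}(k-i)^2}\int_{-\infty}^{t}e^{is\tha(k)}\hat g(s)\,ds.
\ee
On $\R$ the Schwartz decay of $\hat g$ gives $|h_{\Rmnum{1}}(k)|\le C_l t^{-l}/(1+|k|^2)$ for arbitrary $l$, establishing \eqref{case2h1}; the divided estimates follow because the matching of $R$ at $k=0$ gives $\hat g$ the additional vanishing needed to absorb $k^{-j}$. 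For $h_{\Rmnum{2}}$, the combined factor $e^{-2it\tha(k)+is\tha(k)}=e^{-i(2t-s)\tha(k)}$ with $2t-s\ge t$ admits analytic continuation in $k$ into the half-plane where $\im\tha(k)\le 0$, where $|e^{-i(2t-s)\tha(k)}|=e^{(2t-s)\im\tha(k)}$; combining with the signature-table lower bounds on $-\im\tha$ along each piece of the deformed contour produces \eqref{case2h2l}, \eqref{case2h2l0}, and \eqref{case2R}.

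The main technical obstacle, and the reason this proposition is more subtle than its analogue for $\tilde\xi>\eps$, is the interplay between the two stationary points $\pm k_0$ and the singularity of $\tha$ at $k=0$. The phase $\tha(k)=\tilde\xi k-\frac{1}{4k}$ blows up as $k\to 0$, so $e^{-2it\tha}$ oscillates infinitely rapidly at the origin and cannot be treated by naive stationary phase; the divided bounds are saved only by the rapid decay of $r(k)$ as $k\to 0$ (inherited from the Schwartz regularity of $u_0$) together with the matching of $R$ at $k=0$. Engineering a single rational $R$ that simultaneously achieves the prescribed matching at the three points $\{-k_0,0,k_0\}$, has the correct behaviour at $k=\infty$, and yields constants uniform for $k_0\le M$ is where the bookkeeping concentrates.
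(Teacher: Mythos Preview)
Your overall framework—rational approximation plus a Fourier split of the remainder in the phase variable $\theta$—is exactly the Deift--Zhou mechanism the paper uses. But two features of your sketch do not survive contact with the specifics. First, $\rho$ has a genuine jump at $\pm k_0$: $\rho(k_0^-)=-\overline{r(k_0)}$ while $\rho(k_0^+)=\overline{r(k_0)}/(1+|r(k_0)|^2)$, so a single rational (or any continuous) $R$ can match the Taylor jet of $\rho$ from only one side of $k_0$, and $\rho-R$ cannot be made $C^m$ across $\pm k_0$ as you claim. This is why the statement already calls $R$ \emph{piecewise} rational. Second, $\theta(k)=\tilde\xi k-\tfrac{1}{4k}$ is not monotone on $\R$: its critical points are precisely $\pm k_0$, so there is no global inverse $k(\theta)$ and a single Fourier representation $\int e^{is\theta(k)}\hat g(s)\,ds$ over the whole line is ill-defined.

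The paper's remedy is to split into three $k$-intervals, each on a monotone branch of $\theta$, with its own $R$ and its own Fourier weight: the outer region $|k|>k_0$ with $R(k)=(k+i)^{-(m+5)}\sum_i\mu_i(k-k_0)^i$ matching $\rho$ at $k_0$ from the right and weight $v(k)=(k-k_0)^q/(k+i)^{q+2}$; the intermediate region $\tfrac{k_0}{2}<|k|<k_0$ with a polynomial $R$ in $k^2-k_0^2$ matching at $k_0$ from the left and weight $p(k)=(k^2-k_0^2)^q$ to absorb the vanishing of $d\theta/dk$; and the inner region $|k|<\tfrac{k_0}{2}$ with $R\equiv 0$. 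In this last region there is no matching at $k=0$ at all: the divided bounds $|h_{\Rmnum{1}}/k^j|$, $|h_{\Rmnum{2}}/k^j|$ come for free from the scattering fact $r(k)=O(k^3)$ as $k\to 0$, which keeps the Sobolev integrand $\bigl(\tfrac{4k^2k_0^2}{k_0^2-k^2}\tfrac{d}{dk}\bigr)^j\rho(k)$ integrable against the singular measure $|d\theta|\sim k^{-2}\,dk$ near the origin. So your intuition that the origin is the delicate point is right, but the cure is the intrinsic vanishing of $r$, not an extra interpolation condition on $R$.
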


\begin{proof}
See the appendix \ref{appda}.
\end{proof}

\subsection{Second Transformation}
The main purpose of this subsection is to reformulate the Riemann-Hilbert problem for $\tilde M^{(1)}(y,t,k)$ (\ref{mu1rhp}) as an equivalent Riemann-
Hilbert problem on the augmented contour $\Sigma=L\cup\bar L\cup L_0\cup \bar L_0\cup \R$ (see figure \ref{fig2}).

According to the above analytic extension of $\rho(k)$, we can write the jump matrix $\tilde J^{(1)}(y,t,k)$ as
\begin{subequations}
\be
\tilde J^{(1)}(y,t,k)=b_-^{-1}(y,t,k)b_+(y,t,k),
\ee
where
\be
b_-(y,t,k)=\left(\ba{cc}1&-\ol{h_{\Rmnum{1}}(\bar k)}\dta_-^{-2}e^{2it\tha}\\0&1\ea\right)\left(\ba{cc}1&-\ol{h_{\Rmnum{2}}+R}(\bar k)\dta_-^{-2}e^{2it\tha}\\0&1\ea\right)=b_-^{R}(y,t,k)b_-^{I}(y,t,k),
\ee
and
\be
b_-(y,t,k)=\left(\ba{cc}1&0\\h_{\Rmnum{1}}(k)\dta_+^{2}e^{-2it\tha}&1\ea\right)\left(\ba{cc}1&0\\(h_{\Rmnum{2}}+R)(k)\dta_+^{2}e^{-2it\tha}&1\ea\right)=b_+^{R}(y,t,k)b_+^{I}(y,t,k),
\ee
\end{subequations}
\par
We make a transformation as
\be
\tilde M^{(2)}(y,t,k)=\tilde M^{(1)}(y,t,k)T(y,t,k),
\ee
where
\be
T(y,t,k)=\left\{
\ba{ll}
(b_+^{I}(y,t,k))^{-1},&k\in \Omega_1\cup \Omega_{3}\cup\Omega_9\cup \Omega_{10},\\
(b_-^{I}(y,t,k))^{-1},&k\in \Omega_4\cup \Omega_6\cup\Omega_7\cup \Omega_8,\\
\id,&k\in \Omega_2\cup \Omega_5.
\ea
\right.
\ee
with the regions $\{\Omega_j\}_{j=1}^{10}$ defined as figure \ref{fig2}.
\begin{figure}[th]
\centering
\includegraphics{SPE-LT.4}
\caption{\small The jump contour $\Sigma$ for $\tilde M^{(2)}(y,t,k)$.}\label{fig2}
\end{figure}

Then, $\tilde M^{(2)}(y,t,k)$ satisfies the following Riemann-Hilbert problem,
\begin{subequations}\label{m2rhp}
\be
\left\{
\ba{l}
\tilde M^{(2)}_+(y,t,k)=\tilde M^{(2)}_-(y,t,k)\tilde J^{(2)}(y,t,k),\\
\tilde M^{(2)}(y,t,k)\rightarrow \id,\quad k\rightarrow \infty.
\ea
\right.
\ee
where
\be
\tilde J^{(2)}(y,t,k)=(b^{(2)}_-(y,t,k))^{-1}b^{(2)}_+(y,t,k)=\left\{
\ba{ll}
(b_-^{R}(y,t,k))^{-1}b_+^{R}(y,t,k),&k\in \R,\\
b_+^{I}(y,t,k),&k\in \Sigma\cap\im k>0,\\
(b_-^{I}(y,t,k))^{-1},&k\in \Sigma\cap\im k<0.
\ea
\right.
\ee
\end{subequations}

\begin{proposition}
The reflection coefficient $r(k)=O(k^3)$ as $k\rightarrow 0$.
\end{proposition}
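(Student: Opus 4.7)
The plan is to exploit the fact that the expansion in Proposition 2.4 is identical for $\mu^0_1$ and $\mu^0_2$ through order $k^2$, which forces the off-diagonal entries of the scattering matrix $S(k)$ to vanish to this order.

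First, I would observe that the proposition on the small-$k$ asymptotics of $\mu^0_j(x,t,k)$ is stated uniformly in $j\in\{1,2\}$: the $k^0$, $k^1$, and $k^2$ coefficients are the same (intrinsic) matrices built from $u,u_x,u_t$. Consequently
$$
(\mu^0_2)^{-1}\mu^0_1 = \id + O(k^3), \qquad k\to 0,
$$
uniformly on compacta in $(x,t)$. (The $O(k^2)$ coefficients coincide because the Schwartz assumption together with the conservation law $\int_{\R}u\,dx=0$ makes the boundary contributions from the two Volterra iterations at $\pm\infty$ collapse to the same intrinsic combination $u^2u_x-2u_t$.)

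Next, I would combine this with Proposition \ref{promujrelmu0j}, which gives
$$
\mu_1 = G^{-1}\mu^0_1 e^{-ikc_-\sig_3},\qquad \mu_2=G^{-1}\mu^0_2 e^{ikc_+\sig_3},
$$
where $c_-(x,t)=\int_{-\infty}^x(\sqrt{m}-1)dx'$, $c_+(x,t)=\int_x^{+\infty}(\sqrt{m}-1)dx'$, and $c=c_-+c_+$. Substituting into the scattering relation $\mu_1=\mu_2 e^{ikp\hat\sig_3}S(k)$ yields
$$
e^{ikp\sig_3}S(k)e^{-ikp\sig_3}
=e^{-ikc_+\sig_3}\bigl[(\mu^0_2)^{-1}\mu^0_1\bigr] e^{-ikc_-\sig_3}
= e^{-ikc\sig_3} + O(k^3).
$$
Since $e^{\pm ikp\sig_3}$ is diagonal with entries bounded by $1$ on $\R$, conjugating by it preserves the $O(k^3)$ estimate, and the diagonal $e^{-ikc\sig_3}$ is invariant under the conjugation. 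Therefore
$$
S(k)=e^{-ikc\sig_3}+O(k^3),\qquad k\to 0,
$$
which, reading off the (1,2)-entry, gives $b(k)=S_{12}(k)=O(k^3)$.

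Finally, the expansion $a(k)=1+ikc-\tfrac{c^2}{2}k^2+O(k^3)$ from Proposition 2.7 shows $a(k)^{-1}=1+O(k)$ near $k=0$, and hence
$$
r(k)=\frac{b(k)}{a(k)}=O(k^3),\qquad k\to 0.
$$
The main subtlety I would be careful about is justifying that the $O(k^2)$ coefficients of $\mu^0_1$ and $\mu^0_2$ really agree: this is where one has to use the conservation law $\int_\R u\,dx=0$ (which follows by integrating the SP equation in $x$) to identify the two boundary-integral forms of $\int u$ appearing after integration by parts in the Volterra iteration; everything else is routine bookkeeping with the expansions already established.
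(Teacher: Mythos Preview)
Your argument is correct and takes a genuinely different route from the paper's. The paper's proof is a one-liner invoking the unitarity identity $|r(k)|^2 = |a(k)|^{-2} - 1$ together with the expansion $a(k) = 1 + ikc - \tfrac{c^2}{2}k^2 + O(k^3)$. You instead go after $b(k)$ directly: since the small-$k$ proposition asserts the \emph{same} expansion for both $\mu^0_1$ and $\mu^0_2$ through order $k^2$, you obtain $(\mu^0_2)^{-1}\mu^0_1 = \id + O(k^3)$, and feeding this through the relation between $\mu_j$ and $\mu^0_j$ and the scattering relation yields $S(k) = e^{-ikc\sig_3} + O(k^3)$, hence $b = S_{12} = O(k^3)$.

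Your route is more informative: it makes explicit that the vanishing order of $r$ comes from the matching of the two Jost expansions at $\pm\infty$, and you correctly trace the $k^2$ agreement back to $\int_{\R} u\,dx = 0$ (which indeed follows from integrating the SP equation in $x$). It also delivers the estimate on $b$ itself rather than only on $|r|$. The paper's route is shorter once the identity $|r|^2 = |a|^{-2} - 1$ and the $a$-expansion are in hand, but taken literally the stated expansion of $a$ only gives $|a(k)|^2 = 1 + O(k^3)$, hence (after using the evenness of $|a|^2$ coming from $\overline{a(k)}=a(-k)$) $|r|^2 = O(k^4)$ and so $r = O(k^2)$; upgrading to the full $O(k^3)$ really requires the higher-order matching your method supplies directly. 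One small cosmetic point: your phrase ``entries bounded by $1$'' for $e^{\pm ikp\sig_3}$ should read ``entries of modulus $1$'' on $\R$, which is what actually guarantees that conjugation preserves the $O(k^3)$ remainder.
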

\begin{proof}
A direct calculation following from (\ref{akasy}) and from the identity $|r(k)|^2=\frac{1}{|a(k)|^2}-1$.
\end{proof}

Now, let us come back to the considered problem in this paper again. The solution $u(y,t)$ is related to the solution of the Riemann-Hilbert problem evaluated at $k=0$, it may be affected by this transformation. However, due to the above fact, the second transformation turns out not to affect the terms in the expansion of the solution of the Riemann-Hilbert problem at $k=0$ at least up to the terms of order $O(k^2)$ and thus it does not really affect $u(y,t)$.

So, if we write
\be
\tilde M^{(2)}(y,t,k)=\tilde M^{(2)}_0(y,t)+k\tilde M^{(2)}_1(y,t)+O(k^2),\quad k\rightarrow 0,
\ee
then we have
\begin{subequations}
\be
u(x,t)=-i\left[(\tilde M^{(2)}_0)^{-1}\tilde M^{(2)}_1\right]_{21},
\ee
\be
c_+=-i\left(\left[(\tilde M^{(2)}_0)^{-1}\tilde M^{(2)}_1\right]_{11}-\dta_1\right)
\ee
\end{subequations}

Set
\be
\omega_{\pm}^{(2)}(y,t,k)=\pm(b^{(2)}_{\pm}(y,t,k)-\id),\quad \omega=\omega^{(2)}_++\omega^{(2)}_-,
\ee
and let $\mu^{(2)}(y,t,k)$ be the solution of the singular integral equation $\mu^{(2)}=\id+C_{\omega}\mu^{(2)}$, here $C_{\omega}$ is defined as (\ref{BCRHPCom}),
then
\be\label{m2sol}
\tilde M^{(2)}(y,t,k)=\id+\frac{1}{2\pi i}\int_{\Sigma}\frac{\mu^{(2)}(y,t,\eta)\omega(y,t,\eta)}{\eta-k}d\eta,\quad k\in \C\backslash \Sigma
\ee
is the solution of Riemann-Hilbert problem (\ref{m2rhp}).
\par
Expanding the integral (\ref{m2sol}) around $k=0$, we have
\begin{subequations}\label{m20and1}
\be
\tilde M^{(2)}_0(y,t)=\id+\frac{1}{2\pi i}\int_{\Sigma}\frac{\mu^{(2)}(y,t,\eta)\omega(y,t,\eta)}{\eta}d\eta,
\ee
\be
\tilde M^{(2)}_1(y,t)=\frac{1}{2\pi i}\int_{\Sigma}\frac{\mu^{(2)}(y,t,\eta)\omega(y,t,\eta)}{\eta^2}d\eta
\ee
\end{subequations}
\begin{remark}
Since, $\omega(y,t,k)$ decays rapidly at $0$, the integral (\ref{m20and1}) are nonsingular.
\end{remark}

\subsection{Reduction to the cross}
\begin{figure}[th]
\centering
\includegraphics{SPE-LT.5}
\caption{\small The jump contour $\Sigma^{(3)}$ for $\tilde M^{(3)}(y,t,k)$.}\label{fig2}
\end{figure}
Let $\om^e$ be a sum of four terms
\be
\om^e=\om^a+\om^b+\om^c+\om^d.
\ee
We then have the following:
\be
\ba{l}
\om^a=\om \mbox{ is supported on the $\R$ and consists of terms of type $h_{\Rmnum{1}}(k)$ and $\ol{ h_{\Rmnum{1}}(k)}$}.\\
\om^b=\om \mbox{ is supported on the $L\cup \bar L$ and consists of terms of type $h_{\Rmnum{2}}(k)$ and $\ol{ h_{\Rmnum{2}}(\bar k)}$}.\\
\om^c=\om \mbox{ is supported on the $L_{\eps}\cup \bar L_{\eps}$ and consists of terms of type $R(k)$ and $\ol{R(\bar k)}$}.\\
\om^d=\om \mbox{ is supported on the $L_0\cup \bar L_0$ }.
\ea
\ee
Set $\om'=\om-\om^e$. Then, $\om'=0$ on $\Sig^{(2)}\backslash \Sig^{(3)}$.
Thus, $\om'$ is supported on $\Sig^{(3)}$ with contribution to $\om$ from rational
terms $R$ and $\bar R$.
\begin{proposition}\label{omfanshu}
For $0<k_0<M$, we have
\begin{subequations}
\be\label{oma}
||\om^a||_{L^1\cap L^2\cap L^{\infty}(\R)}\le \frac{c}{t^l},
\ee
\be
||\frac{\om^a}{k^j}||_{L^1\cap L^2\cap L^{\infty}(|k|<k_0)}\le \frac{c}{t^l},j=1,2
\ee

\be\label{omb}
||\om^b||_{L^1(L\cup \bar L)\cap L^2(L\cup \bar L)\cap L^{\infty}(L\cup \bar L)}\le \frac{c}{t^l},
\ee
\be\label{omc}
||\om^c||_{L^1(L_{\eps}\cup \bar L_{\eps})\cap L^2(L_{\eps}\cup \bar L_{\eps})\cap L^{\infty}(L_{\eps}\cup \bar L_{\eps})}\le ce^{-\frac{\eps^2}{2k_0}t},
\ee
\be\label{omd}
||\om^d||_{L^1(L_{0}\cup \bar L_{0})\cap L^2(L_{0}\cup \bar L_{0})\cap L^{\infty}(L_{0}\cup \bar L_{0})}\le ce^{-\frac{t}{8k_0}},
\ee
\be
||\frac{\om^d}{k^j}||_{L^1(L_{0}\cup \bar L_{0})\cap L^2(L_{0}\cup \bar L_{0})\cap L^{\infty}(L_{0}\cup \bar L_{0})}\le ce^{-\frac{t}{8k_0}},j=1,2
\ee

\end{subequations}
Moreover,
\be\label{om'}
||\om'||_{L^2(\Sig^{(3)})}\le \frac{c}{t^{\frac{1}{4}}},\qquad ||\om'||_{L^1(\Sig^{(3)})}\le \frac{c}{t^{\frac{1}{2}}}
\ee
\end{proposition}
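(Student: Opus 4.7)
The plan is to reduce each of the six bounds to the pointwise estimates collected in Proposition \ref{analyticpro}, and to handle $\omega'$ separately by a Gaussian/parabolic scaling at the stationary points $\pm k_0$.

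First I would observe that on each piece of the contour $\Sigma^{(2)}$ the factor $\delta(k)^{\pm 2}$ that appears inside $b^{(2)}_\pm - \id$ is uniformly bounded for $k_0<M$: away from $\pm k_0$ this is clear from the explicit formula \eqref{dtadef}, and near $\pm k_0$ the product $\delta_-\delta_+ = 1+|r|^2$ is smooth and nonvanishing, so each $\delta^{\pm 2}$ stays bounded. Hence $|\omega^\sharp|$ differs from the modulus of the relevant $h_I,h_{II},R$ factor times $e^{\pm 2it\theta}$ only by a bounded multiplicative constant. With this reduction, the bounds \eqref{oma} and its $k^{-j}$-weighted version follow immediately from \eqref{case2h1} (and its $k^{-j}$ counterpart): the $L^\infty$ part is the pointwise bound itself, and the $L^1,L^2$ parts come from integrating the majorant $\frac{c}{(1+|k|^2)t^l}$ over $\R$, which is finite because $(1+|k|^2)^{-1}\in L^1\cap L^2(\R)$. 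The inequality \eqref{omb} is the same argument on $L\cup\bar L$ using \eqref{case2h2l}, while \eqref{omc} comes straight from \eqref{case2R}, after noting that the length of $L_\eps$ is bounded on $k_0<M$. The bound \eqref{omd} and its $k^{-j}$-version use \eqref{case2h2l0} together with the fact that $L_0$ is a compact arc of length $\sim k_0$.

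The main obstacle is the bound \eqref{om'} on $\omega'$. By construction $\omega'$ lives on the cross $L\cup\bar L$ (with the $L_\eps$ tails subtracted) and carries only the rational pieces $R$ and $\bar R$, multiplied by $\delta^{\pm 2}e^{\pm 2it\theta}$. Near $k=k_0$ we have $\theta(k)=\theta(k_0)-\frac{1}{4k_0^3}(k-k_0)^2+O((k-k_0)^3)$, and parametrizing $L_1$ by $k=k_0+k_0\lambda e^{-i3\pi/4}$ gives $(k-k_0)^2= i k_0^2\lambda^2$, so
\begin{equation*}
|e^{-2it\theta(k)}|\;=\;\exp\!\Bigl(-\tfrac{t\lambda^2}{2k_0}+O(t\lambda^3)\Bigr),
\end{equation*}
and analogously on $L_2$ and the complex-conjugate pieces. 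Since $R$ is bounded on this compact arc, a Gaussian integral in $\lambda$ yields
\begin{equation*}
\int_{L\cup\bar L}|\omega'|\,|dk|\;\le\;c\,k_0\!\int_0^{\infty}\!e^{-t\lambda^2/(2k_0)}\,d\lambda\;\le\;\frac{c}{t^{1/2}},
\end{equation*}
\begin{equation*}
\Bigl(\int_{L\cup\bar L}|\omega'|^2\,|dk|\Bigr)^{1/2}\!\le\!\Bigl(c\,k_0\!\int_0^{\infty}\!e^{-t\lambda^2/k_0}\,d\lambda\Bigr)^{1/2}\!\le\frac{c}{t^{1/4}}.
\end{equation*}
The cubic error in the exponent is absorbed by restricting to a fixed neighbourhood of $\pm k_0$ (outside this neighbourhood the factor $R$ is cut off, or one works with $L_{1\eps},L_{2\eps}$ and appeals to \eqref{case2R}), yielding the claimed decay rates and completing the proof.
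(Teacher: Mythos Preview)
Your approach is correct and coincides with the paper's: reduce the first six bounds to the pointwise estimates of Proposition~\ref{analyticpro} after absorbing the uniformly bounded factors $\delta^{\pm2}$, and obtain \eqref{om'} by the Gaussian steepest--descent computation at $\pm k_0$, which is exactly what the paper invokes by citing Lemma~2.13 of \cite{dz}.

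One small imprecision worth fixing: the contour $\Sigma^{(3)}=L'\cup\bar L'$ with $L'=L\setminus L_\eps$ contains the \emph{infinite} rays (e.g.\ $\lambda\in(-\infty,\eps]$ on $L_1$), and on those tails the rational piece $R$ is \emph{not} cut off, nor does \eqref{case2R} apply there (that estimate is for $\eps<\lambda<\tfrac{1}{\sqrt2}$). Your quadratic approximation of $\theta$ therefore does not cover the whole of $\Sigma^{(3)}$. The remedy is immediate: away from a fixed neighbourhood of $\pm k_0$ the exact phase satisfies $\re(i\theta)\ge c/k_0>0$ along the ray (see the lower bounds derived in Appendix~\ref{appda}), while $R$ is rational and decays at infinity, so that portion contributes $O(e^{-ct/k_0})$ in every $L^p$ and is negligible compared with the $t^{-1/2}$, $t^{-1/4}$ Gaussian rates coming from the neighbourhood of the stationary points.
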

\begin{proof}
Consequence of proposition \ref{analyticpro}, and analogous calculations as in lemma 2.13 of \cite{dz}. 
\end{proof}
\begin{proposition}
As $t\rightarrow \infty$ and $0<k_0<M$, $||(1-C_{\om'})^{-1}||_{L^2(\Sig^{(2)})}$ exists and is uniformly bounded, and $||(1-C_{\om})^{-1}||_{L^2(\Sig^{(2)})}\le C$ is equivalent to $||(1-C_{\om'})^{-1}||_{L^2(\Sig^{(2)})}\le C$.
\end{proposition}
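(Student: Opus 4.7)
The plan is to isolate the dominant part of $C_\om$ coming from the rational factors $R,\bar R$, which is $C_{\om'}$ supported on $\Sig^{(3)}$, and to treat $\om^e=\om-\om'$ as a small perturbation. The first step is to use Proposition \ref{omfanshu} to show that
\[
\|C_{\om^e}\|_{L^2(\Sig^{(2)})\to L^2(\Sig^{(2)})}\longrightarrow 0\quad\mbox{as }t\to\infty,
\]
uniformly for $0<k_0<M$. This follows from the standard bound $\|C_{\om^e}\|\le c(\|\om^e\|_{L^2}+\|\om^e\|_{L^\infty})$, the uniform $L^2$-boundedness of the Cauchy projectors $C_\pm$, and the decay estimates (\ref{oma})--(\ref{omd}): every component $\om^a,\om^b,\om^c,\om^d$ is either $O(t^{-l})$ for all $l$, or exponentially small in $t$.

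The second step is to establish the uniform bound $\|(1-C_{\om'})^{-1}\|_{L^2(\Sig^{(2)})}\le C$. Since $\om'$ is supported on the cross $\Sig^{(3)}$ and is built from the rational pieces $R$ and $\ol{R(\bar k)}$, the associated Riemann--Hilbert problem decouples, up to an error of order $t^{-1/2}$, into two localized RHPs near $k=\pm k_0$. Each localized problem is the classical Deift--Zhou parabolic cylinder model, explicitly solvable in terms of parabolic cylinder functions. By the explicit solvability of this model and the Beals--Coifman machinery of \cite{bc,dz}, $(1-C_{\om'})^{-1}$ exists and is bounded in operator norm uniformly in $t$ and in $k_0\in(0,M)$.

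With these two ingredients, the equivalence follows from the second resolvent identity. Writing $1-C_\om=(1-C_{\om'})-C_{\om^e}$, we obtain formally
\[
(1-C_\om)^{-1}=(1-C_{\om'})^{-1}\bigl(1-C_{\om^e}(1-C_{\om'})^{-1}\bigr)^{-1},
\]
and since $\|C_{\om^e}(1-C_{\om'})^{-1}\|\to 0$, the Neumann series on the right converges for large $t$ with $\|(1-C_\om)^{-1}\|\le 2\|(1-C_{\om'})^{-1}\|$. The reverse implication is symmetric: $1-C_{\om'}=(1-C_\om)+C_{\om^e}$, and the same smallness of $\|C_{\om^e}\|$ yields $\|(1-C_{\om'})^{-1}\|\le 2\|(1-C_\om)^{-1}\|$ once $t$ is sufficiently large, proving the stated equivalence.

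I expect the main obstacle to be the uniform bound on $\|(1-C_{\om'})^{-1}\|$ over the full range $0<k_0<M$. The parabolic cylinder model behaves well when $k_0$ is of order $1$, but its uniformity requires care as $k_0$ varies: when $k_0$ is small the two stationary points $\pm k_0$ approach each other and the two local models begin to interact, while for $k_0$ near $M$ the rescaling $k\mapsto k_0+\zeta/\sqrt{t}$ introduces $k_0$-dependent constants that must be tracked. Once this uniform model-problem bound is in hand, the rest of the argument is a soft application of Proposition \ref{omfanshu} together with the second resolvent identity.
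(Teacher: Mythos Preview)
Your approach is essentially the same as the paper's: the paper refers to \cite{dz}, p.~324, for the existence and uniform boundedness of $(1-C_{\om'})^{-1}$ (which is precisely your parabolic-cylinder model argument), and then obtains the equivalence from $\|C_\om-C_{\om'}\|_{L^2(\Sig^{(2)})}\le c\|\om^e\|_{L^\infty(\Sig^{(2)})}$, the decay $\|\om^e\|\le c\,t^{-l}$ from Proposition~\ref{omfanshu}, and the second resolvent identity. Your concern about uniformity in $k_0$ is legitimate, but it is exactly what the citation to \cite{dz} is meant to cover.
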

\begin{proof}
The existence of the operator $(1-C_{\om'})^{-1}$ is followed similar to \cite{dz} P.324. And the equivalence is the consequence of the following inequality, $||C_{\om}-C_{\om'}||_{L^2(\Sig^{(2)})}\le c||\om^e||_{L^2(\Sig^{(2)})}$,
the fact that $||\om^e||_{L^2(\Sig^{(2)})}\le \frac{c}{t^l}$, and the second resolvent identity.
\end{proof}
\begin{proposition}
If $||(1-C_{\om'})^{-1}||_{L^2(\Sig^{(2)})}\le C$, then for arbitrary positive integer $l$,
as $t\rightarrow \infty$ such that $0<k_0<M$,
\be \label{m3sig2om'}
\int_{\Sig}\frac{((\id-C_{\om})^{-1}\id)(\eta)\om(x,t,\eta)}{\eta^j}d\eta=\int_{\Sig}\frac{((\id-C_{\om^{'}})^{-1}\id)(\eta)\om^{'}(x,t,\eta)}{\eta^j}d\eta+O(\frac{c}{t^l}),j=1,2.
\ee
\end{proposition}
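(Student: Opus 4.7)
The plan is the standard Deift--Zhou ``localization'' computation: decompose $\om$ into the piece $\om'$ we want to keep and a remainder $\om^e$ that decays faster than any polynomial in $t$, then use the second resolvent identity to show that replacing $\om$ by $\om'$ inside the integral changes it by $O(t^{-l})$.

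First, write $\om=\om'+\om^e$ with $\om^e=\om^a+\om^b+\om^c+\om^d$ as in the paragraph preceding Proposition \ref{omfanshu}. Set $\mu=(\id-C_{\om})^{-1}\id$ and $\mu'=(\id-C_{\om'})^{-1}\id$. By linearity of the Cauchy operator, $C_{\om}-C_{\om'}=C_{\om^e}$, and the second resolvent identity gives
\be
\mu-\mu'=(\id-C_{\om})^{-1}C_{\om^e}\mu'.
\ee
Consequently,
\be
\int_{\Sig}\frac{\mu(\eta)\om(\eta)}{\eta^j}\,d\eta=\int_{\Sig}\frac{\mu'(\eta)\om'(\eta)}{\eta^j}\,d\eta+I_1+I_2,
\ee
with
\be
I_1=\int_{\Sig}\frac{\mu'(\eta)\om^e(\eta)}{\eta^j}\,d\eta,\qquad I_2=\int_{\Sig}\frac{\bigl[(\id-C_{\om})^{-1}C_{\om^e}\mu'\bigr](\eta)\,\om(\eta)}{\eta^j}\,d\eta.
\ee
So the task reduces to showing $|I_1|+|I_2|=O(t^{-l})$ for every positive integer $l$.

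The second step is to estimate $I_1$ and $I_2$ by Cauchy--Schwarz using the bounds of Proposition \ref{omfanshu} and the assumed uniform $L^2$-boundedness of $(\id-C_{\om'})^{-1}$ and (by the preceding proposition) $(\id-C_{\om})^{-1}$. For $I_1$, split $\om^e=\om^a+\om^d+(\om^b+\om^c)$. On the parts of $\Sig$ away from $k=0$ (namely $L\cup\bar L$ and $L_\eps\cup \bar L_\eps$) the factor $1/\eta^j$ is simply bounded, so the contribution is controlled by $\|\mu'\|_{L^2}(\|\om^b\|_{L^2}+\|\om^c\|_{L^2})\le c/t^l$ directly from (\ref{omb})--(\ref{omc}). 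On $\R$ and $L_0\cup\bar L_0$ (which do reach $k=0$) we use the explicit bounds on $\om^a/k^j$ and $\om^d/k^j$ in Proposition \ref{omfanshu}, together with $\|\mu'\|_{L^2}\le 1+\|(\id-C_{\om'})^{-1}C_{\om'}\id\|_{L^2}\le C$, to again get $|I_1|\le c/t^l$. For $I_2$, the operator $C_{\om^e}$ is bounded $L^2\to L^2$ with norm controlled by $\|\om^e\|_{L^\infty}\le c/t^l$, and $(\id-C_{\om})^{-1}$ is uniformly $L^2$-bounded, so
\be
\bigl\|(\id-C_{\om})^{-1}C_{\om^e}\mu'\bigr\|_{L^2}\le c\|\om^e\|_{L^\infty}\|\mu'\|_{L^2}\le c/t^l,
\ee
and then $|I_2|\le \bigl\|(\id-C_{\om})^{-1}C_{\om^e}\mu'\bigr\|_{L^2}\cdot \|\om/\eta^j\|_{L^2}$. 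The norm $\|\om/\eta^j\|_{L^2}$ is bounded uniformly in $t$ because $\om$ is built from $r(k)\dta^{\mp 2}e^{\pm 2it\tha}$ and $r(k)=O(k^3)$ near $k=0$ (by the preceding proposition), so the apparent singularity at $\eta=0$ is entirely absorbed.

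The main obstacle is the bookkeeping around $\eta=0$: one has to verify that the extra factor $1/\eta^j$ does not destroy the $L^1\cap L^2$ fast-decay bounds of Proposition \ref{omfanshu}. This is precisely why Proposition \ref{omfanshu} includes the separate bounds on $\om^a/k^j$ on $|k|<k_0$ and on $\om^d/k^j$ on $L_0\cup \bar L_0$ for $j=1,2$, and why the factorization $r(k)=O(k^3)$ is recorded above. Once those bounds are in hand the argument is the verbatim adaptation of Lemma 2.13 and the subsequent estimate on p.~324 of \cite{dz}, with the only new ingredient being the uniform control of the singular factor $1/\eta^j$ through the explicit vanishing of $r$ at the origin.
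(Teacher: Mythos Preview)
Your two-term decomposition via the resolvent identity $\mu-\mu'=(\id-C_{\om})^{-1}C_{\om^e}\mu'$ is cleaner than the paper's four-term expansion (which uses the other form of the resolvent identity and splits further into the terms labelled $\Rmnum{1}$--$\Rmnum{4}$), and it would work, but there is a genuine slip that needs repair.

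The contour $\Sig$ is unbounded (it contains $\R$ and the rays $L\cup\bar L$), so the constant matrix $\id$ is \emph{not} in $L^2(\Sig)$. Hence $\mu'=(\id-C_{\om'})^{-1}\id$ is not in $L^2(\Sig)$ either, and your inequality $\|\mu'\|_{L^2}\le 1+\|(\id-C_{\om'})^{-1}C_{\om'}\id\|_{L^2}\le C$ is false as written. This breaks both your estimate of $I_1$ (where you pair $\mu'$ against $\om^e/\eta^j$ by Cauchy--Schwarz) and your estimate of $I_2$ (where you bound $\|C_{\om^e}\mu'\|_{L^2}$ by $\|\om^e\|_{L^\infty}\|\mu'\|_{L^2}$).

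The fix is standard: write $\mu'=\id+(\mu'-\id)$ with $\mu'-\id=(\id-C_{\om'})^{-1}C_{\om'}\id\in L^2(\Sig)$, and treat the two pieces separately. For $I_1$ this gives
\[
I_1=\int_{\Sig}\frac{\om^e}{\eta^j}\,d\eta+\int_{\Sig}\frac{(\mu'-\id)\om^e}{\eta^j}\,d\eta,
\]
the first controlled by $\|\om^e/\eta^j\|_{L^1}$ and the second by $\|\mu'-\id\|_{L^2}\|\om^e/\eta^j\|_{L^2}$, both $O(t^{-l})$ by Proposition~\ref{omfanshu}. For $I_2$, write $C_{\om^e}\mu'=C_{\om^e}\id+C_{\om^e}(\mu'-\id)$; then $\|C_{\om^e}\id\|_{L^2}\le c\|\om^e\|_{L^2}$ (this needs the $L^2$ bound on $\om^e$, not just $L^\infty$) and $\|C_{\om^e}(\mu'-\id)\|_{L^2}\le c\|\om^e\|_{L^\infty}\|\mu'-\id\|_{L^2}$, and the rest of your argument goes through. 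Once you make this split your $I_1$ becomes exactly the paper's $\Rmnum{1}+\Rmnum{3}$, while your $I_2$ packages the paper's $\Rmnum{2}+\Rmnum{4}$ into a single estimate; so after the correction the two proofs are equivalent, with yours slightly more economical.
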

\begin{proof}
From the second resolvent identity, one can derive the following expression (see equation (2.27) in \cite{dz}),
\be\label{m2}
\ba{rrl}
\int_{\Sig}\frac{((1-C_{\om})^{-1}\id)\om}{\eta^j}d\eta&=&\int_{\Sig}\frac{((1-C_{\om'})^{-1}\id)\om'}{\eta^j}d\eta+\int_{\Sig}\frac{\om^e}{\eta^j}d\eta\\
&&+\int_{\Sig}\frac{((1-C_{\om'})^{-1}(C_{\om^e}\id))\om}{\eta^j}d\eta\\
&&+\int_{\Sig}\frac{((1-C_{\om'})^{-1}(C_{\om'}\id))\om^e}{\eta^j}d\eta\\
&&+\int_{\Sig}\frac{((1-C_{\om'})^{-1}C_{\om^e}(1-C_{\om})^{-1})(C_{\om}\id)\om}{\eta^j}d\eta\\
&=&\int_{\Sig}\frac{((1-C_{\om'})^{-1}\id)\om'}{\eta^j}d\eta+\Rmnum{1}+\Rmnum{2}+\Rmnum{3}+\Rmnum{4}.
\ea
\ee
For $0<k_0<M$, from Proposition (\ref{omfanshu}) it follows that,
\be\label{rmnum1}
\ba{rrl}
|\Rmnum{1}|&\le& ||\frac{\om^a}{k^j}||_{L^1(\R)}+||\frac{\om^b}{k^j}||_{L^1(L\cup \bar L)}+||\frac{\om^c}{k^j}||_{L^1(L_{\eps}\cup \bar L_{\eps})}+||\frac{\om^d}{k^j}||_{L^1(L_0\cup \bar L_0)}\\
&\le & ct^{-l},
\ea
\ee
\be\label{rmnum2}
\ba{rrl}
|\Rmnum{2}|&\le& ||(1-C_{\om'})^{-1}||_{L^2(\Sig)} ||(C_{\om^e}\id)||_{L^2(\Sig)} ||\frac{\om}{k^j}||_{L^2(\Sig)}\\
&\le & c||\om^e||_{L^2(\Sig^{(2)})} (||\om^e||_{L^2(\Sig)}+||\om'||_{L^2(\Sig)})\\
&\le & ct^{-l}(ct^{-l}+c)\le ct^{-l},
\ea
\ee
\be\label{rmnum3}
\ba{rrl}
|\Rmnum{3}|&\le & ||(1-C_{\om'})^{-1}||_{L^2(\Sig)} ||(C_{\om'}\id)||_{L^2(\Sig)} ||\frac{\om^e}{k^j}||_{L^2(\Sig)}\\
&\le & ct^{-l}
\ea
\ee
\be\label{rmnum4}
\ba{lll}
|\Rmnum{4}|&\le & ||(1-C_{\om'})^{-1}C_{\om^e}(1-C_{\om})^{-1})(C_{\om}\id)||_{L^2(\Sig)} ||\frac{\om}{k^j}||_{L^2(\Sig)}\\
&\le & ||(1-C_{\om'})^{-1}||_{L^2(\Sig)} ||C_{\om^e}||_{L^2(\Sig)} ||(1-C_{\om})^{-1}||_{L^2(\Sig)} ||(C_{\om}\id)||_{L^2(\Sig)} ||\frac{\om}{k^j}||_{L^2(\Sig)}\\
&\le & c ||C_{\om^e}||_{L^2(\Sig^{(2)})} ||(C_{\om}\id)||_{L^2(\Sig)} ||\frac{\om}{k^j}||_{L^2(\Sig^{(2)})}\\
&\le & c ||\om^e||_{L^2(\Sig)} ||\frac{\om}{k^j}||^2_{L^2(\Sig)}\\
&\le & c t^{-l}.
\ea
\ee
Hence,
\be
|\Rmnum{1}+\Rmnum{2}+\Rmnum{3}+\Rmnum{4}|\le ct^{-l}.
\ee
\par
Applying these estimates to equation (\ref{m2}), we can obtain equation (\ref{m3sig2om'}).
\end{proof}
\par

\par
Following the method in appendix \ref{appedb}, we apply the lemma \ref{opralema} to the case $u=\om'$, $\Sig_{12}=\Sig$ and $\Sig_{1}=\Sig^{(3)}$. From identity (\ref{ideopra2}),
we get the following proposition, which shows that the integral region can be changed from $\Sig$ to $\Sig^{(3)}$ without alternating the Riemann-Hilbert problem.
\begin{proposition}\label{m3pro}
\be\label{m3}
\int_{\Sig}\frac{((\id-C_{\om^{'}})^{-1}\id)(\eta)\om^{'}(x,t,\eta)}{\eta^j}d\eta=\int_{\Sig^{(3)}}\frac{((\id-C_{\om^{'}})^{-1}\id)(\eta)\om^{'}(x,t,\eta)}{\eta^j}d\eta.
\ee
\end{proposition}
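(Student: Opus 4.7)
The plan is to leverage two distinct observations: that $\om'$ is supported on $\Sig^{(3)} \subset \Sig$, and that the Cauchy operator associated with a kernel supported on a sub-contour factors through that sub-contour. Both on their own are essentially book-keeping, and together they give exactly the stated identity once we invoke Lemma \ref{opralema} of Appendix B.

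First I would record the trivial half. By construction, $\om' = \om - \om^e$ is supported on $\Sig^{(3)}$ (the paper states $\om' = 0$ on $\Sig^{(2)} \setminus \Sig^{(3)}$), so the integrand on the left-hand side already vanishes on $\Sig \setminus \Sig^{(3)}$. Thus
\[
\int_{\Sig}\frac{((\id-C_{\om^{'}})^{-1}\id)(\eta)\om^{'}(\eta)}{\eta^j}d\eta
= \int_{\Sig^{(3)}}\frac{\bigl((\id-C_{\om^{'}})^{-1}\id\bigr)(\eta)\,\om^{'}(\eta)}{\eta^j}d\eta,
\]
where on the right the resolvent is still that of an operator on $L^2(\Sig)$. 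The genuine content of the proposition is therefore that the restriction to $\Sig^{(3)}$ of $(\id-C_{\om^{'}})^{-1}\id$, computed as an element of $L^2(\Sig)$, coincides with the corresponding object computed intrinsically on $L^2(\Sig^{(3)})$, so that the inner resolvent on the right side of the claimed equality may be re-interpreted as acting on $L^2(\Sig^{(3)})$.

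Next I would apply Lemma \ref{opralema} with $u = \om'$, $\Sig_{12} = \Sig$, and $\Sig_1 = \Sig^{(3)}$. The key structural fact underlying the lemma is that when $\om'$ is supported on $\Sig_1$, the operator $C_{\om'}$ on $L^2(\Sig_{12})$ has the property that $C_{\om'}f$ depends on $f$ only through its restriction to $\Sig_1$, while the output $C_{\om'}f$ is delivered to all of $\Sig_{12}$ through the Cauchy kernel. Consequently, after applying the Neumann expansion (or equivalently the second resolvent identity) one finds that the restriction of $(\id - C_{\om'})^{-1}\id$ to $\Sig_1$ agrees with $(\id - C_{\om'|_{\Sig_1}})^{-1}\id$ on $L^2(\Sig_1)$. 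This is precisely identity (\ref{ideopra2}), and inserting it into the previously reduced integral yields the desired equality.

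To make the application of the lemma legitimate, I would cite the bounds of Proposition \ref{omfanshu}, in particular $\|\om'\|_{L^2(\Sig^{(3)})}\le c t^{-1/4}$ and $\|\om'\|_{L^1(\Sig^{(3)})}\le c t^{-1/2}$, which guarantee that the resolvent $(\id-C_{\om'})^{-1}$ exists and is uniformly bounded, and that the integrals make sense for $j=1,2$ (the factors $\eta^{-j}$ are harmless because $\Sig^{(3)}$ is located near $\pm k_0$, which for $0 < k_0 < M$ is bounded away from the origin).

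The main obstacle, already encapsulated in the appendix, is the proof of the operator identity that underlies (\ref{ideopra2}): one must show carefully that the resolvent on the larger space, when paired against a kernel supported on the smaller contour, reduces to the resolvent on the smaller space. Once that operator-theoretic lemma is in hand, the present proposition reduces to an application and a trivial restriction of the domain of integration, as outlined above.
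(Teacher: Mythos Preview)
Your proposal is correct and follows essentially the same approach as the paper: you reduce the domain of integration using the support of $\om'$ and then invoke Lemma~\ref{opralema} with $u=\om'$, $\Sig_{12}=\Sig$, $\Sig_1=\Sig^{(3)}$ and identity~(\ref{ideopra2}) to identify the resolvents, exactly as the paper does in the sentence preceding the proposition. Your write-up is simply more explicit about the two steps and about why the integrals are well defined.
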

\par
Set
$$L'=L\backslash L_{\eps}$$.
Then, $\Sig^{(3)}=L'\cup \bar L'$. On $\Sig^{(3)}$, set
$\mu^{'}=(1^{\Sig^{(3)}}-C^{\Sig^{(3)}}_{\om'})^{-1}\id$. Then,
\be
M^{(3)}(x,t,k)=\id+\int_{\Sig^{(3)}}\frac{\mu^{'}(\xi)\om'(\xi)}{\xi-k}\frac{d\xi}{2\pi i}
\ee
solves the Riemann-Hilbert problem
\be\label{M3RHP}
\left\{
\ba{ll}
M^{(3)}_+(x,t,k)=M^{(3)}_-(x,t,k)J^{(3)}(x,t,k),& k\in\Sig^{(3)},\\
M^{(3)}\rightarrow \id,&k\rightarrow \infty.
\ea
\right.
\ee
where
\bea\label{M3canshu}
&\om'=\om'_++\om'_-,&\\
&b'_{\pm}=\id \pm \om'_{\pm},&\\
&J^{(3)}(x,t,k)=(b'_-)^{-1}b'_+&
\eea

Hence, we have the representation of the solution is as follows,
\begin{theorem}
As $t\rightarrow \infty$,
\begin{subequations}\label{usolrep1}
\be\label{um3sol}
\ba{rcl}
iu(y,t)&=&\left(1+\frac{1}{2\pi i}\int_{\Sigma^{(3)}}\frac{(\id-C_{\om'})^{-1}\id(\eta)\om'(\eta)}{\eta}d\eta+O(t^{-l})\right)_{11}\\
{}&&\cdot \left(\frac{1}{2\pi i}\int_{\Sigma^{(3)}}\frac{(\id-C_{\om'})^{-1}\id(\eta)\om'(\eta)}{\eta^2}d\eta+O(t^{-l})\right)_{21}\\
{}&&-\left(\frac{1}{2\pi i}\int_{\Sigma^{(3)}}\frac{(\id-C_{\om'})^{-1}\id(\eta)\om'(\eta)}{\eta}d\eta+O(t^{-l})\right)_{21}\\
{}&&\cdot \left(\frac{1}{2\pi i}\int_{\Sigma^{(3)}}\frac{(\id-C_{\om'})^{-1}\id(\eta)\om'(\eta)}{\eta^2}d\eta+O(t^{-l})\right)_{11},
\ea
\ee
and
\be
\ba{rcl}
ic_+(y,t)&=&\left(1+\frac{1}{2\pi i}\int_{\Sigma^{(3)}}\frac{(\id-C_{\om'})^{-1}\id(\eta)\om'(\eta)}{\eta}d\eta+O(t^{-l})\right)_{22}\\
{}&&\cdot \left(\frac{1}{2\pi i}\int_{\Sigma^{(3)}}\frac{(\id-C_{\om'})^{-1}\id(\eta)\om'(\eta)}{\eta^2}d\eta+O(t^{-l})\right)_{11}\\
{}&&-\left(\frac{1}{2\pi i}\int_{\Sigma^{(3)}}\frac{(\id-C_{\om'})^{-1}\id(\eta)\om'(\eta)}{\eta}d\eta+O(t^{-l})\right)_{12}\\
{}&&\cdot \left(\frac{1}{2\pi i}\int_{\Sigma^{(3)}}\frac{(\id-C_{\om'})^{-1}\id(\eta)\om'(\eta)}{\eta^2}d\eta+O(t^{-l})\right)_{21}-\dta_1.
\ea
\ee
\end{subequations}
\end{theorem}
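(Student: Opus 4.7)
The plan is to assemble the final representation from three components already built in the paper: the formulas expressing $u$ and $c_+$ through the two leading Taylor coefficients $\tilde M^{(2)}_0,\tilde M^{(2)}_1$ of $\tilde M^{(2)}$ at $k=0$; the Beals--Coifman integral representation (\ref{m20and1}) of those coefficients as integrals over $\Sigma$ of $\mu^{(2)}\omega/\eta^j$; and the two preceding propositions that allow, modulo $O(t^{-l})$, the replacement $(\id-C_{\om})^{-1}\id\cdot\om \rightsquigarrow (\id-C_{\om'})^{-1}\id\cdot\om'$ together with the shrinkage of the contour from $\Sigma$ to $\Sigma^{(3)}$.

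First I would insert these two reductions into (\ref{m20and1}). Combining (\ref{m3sig2om'}) with Proposition \ref{m3pro} and applied once with weight $1/\eta$ and once with weight $1/\eta^2$, one obtains
\be
\tilde M^{(2)}_0(y,t)=\id+\frac{1}{2\pi i}\int_{\Sigma^{(3)}}\frac{(\id-C_{\om'})^{-1}\id(\eta)\,\om'(\eta)}{\eta}d\eta+O(t^{-l}),
\ee
\be
\tilde M^{(2)}_1(y,t)=\frac{1}{2\pi i}\int_{\Sigma^{(3)}}\frac{(\id-C_{\om'})^{-1}\id(\eta)\,\om'(\eta)}{\eta^2}d\eta+O(t^{-l}).
\ee
The nonsingularity of the $1/\eta^2$ integrand at $\eta=0$ is precisely the content of the remark following (\ref{m20and1}), itself a consequence of the preceding proposition $r(k)=O(k^3)$.

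Next I would compute the two entries $((\tilde M^{(2)}_0)^{-1}\tilde M^{(2)}_1)_{21}$ and $((\tilde M^{(2)}_0)^{-1}\tilde M^{(2)}_1)_{11}$ which enter the formulas for $u(x,t)$ and $c_+$. Because $\det \tilde M^{(2)}\equiv 1$ the inverse is the classical adjugate, giving
\[
\bigl((\tilde M^{(2)}_0)^{-1}\tilde M^{(2)}_1\bigr)_{21}=(\tilde M^{(2)}_0)_{11}(\tilde M^{(2)}_1)_{21}-(\tilde M^{(2)}_0)_{21}(\tilde M^{(2)}_1)_{11},
\]
\[
\bigl((\tilde M^{(2)}_0)^{-1}\tilde M^{(2)}_1\bigr)_{11}=(\tilde M^{(2)}_0)_{22}(\tilde M^{(2)}_1)_{11}-(\tilde M^{(2)}_0)_{12}(\tilde M^{(2)}_1)_{21}.
\]
Substituting the two integral representations above into these bilinear forms and using $u=-i((\tilde M^{(2)}_0)^{-1}\tilde M^{(2)}_1)_{21}$ and the analogous identity for $c_+$ produces, termwise, the four-summand expressions displayed in (\ref{usolrep1}).

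The only real issue is book-keeping of errors: each of the four matrix entries carries an $O(t^{-l})$ tail, and one must check that the bilinear product of such quantities still produces an $O(t^{-l})$ remainder. This is where the uniform bounds from Proposition \ref{omfanshu} on $0<k_0<M$ enter — the ``main'' pieces of the integrals are bounded (their leading size in $t$ is produced only later by the explicit model Riemann--Hilbert problem on the cross), so multiplying them by any $O(t^{-l})$ error preserves that bound for arbitrary natural $l$. A secondary point is keeping the $1/\eta^2$-weighted error harmless: here one uses the estimates $\|\om^a/k^j\|_{L^1}$ and $\|\om^d/k^j\|_{L^1}$ from Proposition \ref{omfanshu}, which is exactly why those weighted norms were recorded. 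Once these bookkeeping checks are in place the theorem follows by direct substitution.
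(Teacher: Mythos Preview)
Your proposal is correct and follows essentially the same route as the paper, which presents the theorem as an immediate consequence of the preceding formulas $u=-i\bigl[(\tilde M^{(2)}_0)^{-1}\tilde M^{(2)}_1\bigr]_{21}$ and $c_+=-i\bigl(\bigl[(\tilde M^{(2)}_0)^{-1}\tilde M^{(2)}_1\bigr]_{11}-\dta_1\bigr)$ combined with (\ref{m20and1}), (\ref{m3sig2om'}) and Proposition~\ref{m3pro}. Your explicit use of the adjugate formula (via $\det\tilde M^{(2)}=1$) and your remarks on the $O(t^{-l})$ bookkeeping via Proposition~\ref{omfanshu} simply make explicit what the paper leaves to the reader.
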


\subsection{Separate out the contributions of the two crosses}
Using the estimates of the proposition \ref{omfanshu} and the similar method in \cite{dz} P.330-331, we can separate out the contributions of the two crosses in $\Sig^{(3)}$ to the solution $u(y,t)$ in formula (\ref{um3sol}). Let the contour $\Sig^{(3)}=\Sig^{A'}\cup \Sig^{B'}$ and write
\be
\om'=\om^{A'}+\om^{B'},
\ee
where
\be
\ba{ll}
\om^{A'}(k)=0,&\mbox{for }k\in \Sig_{B'},\\
\om^{B'}(k)=0,&\mbox{for }k\in \Sig_{A'}.
\ea
\ee
\begin{proposition}
\be
\ba{l}
||C^{\Sig^{(3)}}_{\omega^{B'}}C^{\Sig^{(3)}}_{\omega^{A'}}||_{L^2(\Sig^{(3)})}=||C^{\Sig^{(3)}}_{\omega^{A'}}C^{\Sig^{(3)}}_{\omega^{B'}}||_{L^2(\Sig^{(3)})}\le \frac{C(k_0)}{\sqrt{t}},\\
||C^{\Sig^{(3)}}_{\omega^{B'}}C^{\Sig^{(3)}}_{\omega^{A'}}||_{L^{\infty}\rightarrow L^2(\Sig^{(3)})},\quad ||C^{\Sig^{(3)}}_{\omega^{A'}}C^{\Sig^{(3)}}_{\omega^{B'}}||_{L^{\infty}\rightarrow L^2(\Sig^{(3)})}\le \frac{C(k_0)}{t^{3/4}}.
\ea
\ee
\end{proposition}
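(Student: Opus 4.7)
The plan is to exploit the fact that $\Sig^{A'}$ and $\Sig^{B'}$ are disjoint crosses centered at $-k_0$ and $k_0$ respectively, so the Cauchy kernel connecting them is smooth. Consequently, on $\Sig^{A'}$ one has $|s-k|\geq c\,k_0$ for all $s\in\Sig^{B'}$, which lets one trade the singular Cauchy operator for a bounded integral operator whose norm depends only on $k_0$, and pick up $L^p$ smallness of $\om^{A'}$ and $\om^{B'}$ separately from Proposition \ref{omfanshu} (recalling $\|\om'\|_{L^2(\Sig^{(3)})}\leq c\,t^{-1/4}$ and $\|\om'\|_{L^1(\Sig^{(3)})}\leq c\,t^{-1/2}$).

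Concretely, first I would decompose $C_{\om^{A'}}C_{\om^{B'}}f=C_{\pm}\bigl((C_{\om^{B'}}f)\,\om^{A'}_{\mp}\bigr)$, and restrict attention to $k\in\Sig^{A'}$ where $C_{\om^{B'}}f$ is given by a non-singular Cauchy integral over $\Sig^{B'}$. For the $L^2\!\to\!L^2$ bound, apply Cauchy–Schwarz in the inner integral to obtain
\[
|(C_{\om^{B'}}f)(k)|\;\leq\;\frac{1}{2\pi\,\mathrm{dist}(\Sig^{A'},\Sig^{B'})}\,\|f\|_{L^2(\Sig^{B'})}\,\|\om^{B'}\|_{L^2(\Sig^{B'})},\qquad k\in\Sig^{A'},
\]
then multiply by $\om^{A'}$ and use the $L^2$-boundedness of $C_{\pm}$ to get
\[
\|C_{\om^{A'}}C_{\om^{B'}}f\|_{L^2(\Sig^{(3)})}\;\leq\;\frac{C}{k_0}\,\|\om^{A'}\|_{L^2}\,\|\om^{B'}\|_{L^2}\,\|f\|_{L^2}\;\leq\;\frac{C(k_0)}{\sqrt{t}}\,\|f\|_{L^2},
\]
where the last step combines the two factors of $t^{-1/4}$ from Proposition \ref{omfanshu}. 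For the $L^\infty\!\to\!L^2$ bound, replace Cauchy–Schwarz by the direct estimate $|(C_{\om^{B'}}f)(k)|\leq \frac{1}{2\pi k_0}\|f\|_{L^\infty}\|\om^{B'}\|_{L^1}$, then proceed identically; the resulting product $\|\om^{A'}\|_{L^2}\|\om^{B'}\|_{L^1}\leq c\,t^{-1/4}\cdot c\,t^{-1/2}$ furnishes the $t^{-3/4}$ decay. The equality of the two operator norms comes either by relabeling $A'\leftrightarrow B'$ (the argument is symmetric in the two crosses) or by observing that $(C_{\om^{A'}}C_{\om^{B'}})^{*}$ shares the same structure on the reversed contour.

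The main technical point, and the only real obstacle, is making the disjointness argument robust against the fact that $L$ and $\bar L$ each terminate close to (but not at) the respective stationary points; this is precisely why the contour $\Sig^{(3)}=L'\cup\bar L'$ was defined with $L'=L\setminus L_{\eps}$ so that the two crosses remain a fixed distance apart. With that in mind, the constant $C(k_0)$ is controlled by $1/k_0$ coming from the kernel estimate plus the Cauchy-operator norm, and the rest is a bookkeeping exercise mirroring Lemma 3.5 of \cite{dz}.
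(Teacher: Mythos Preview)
Your proposal is correct and follows essentially the same route as the paper: exploit $\mathrm{dist}(\Sig^{A'},\Sig^{B'})\ge c\,k_0$ to turn the inner Cauchy integral into a bounded kernel, then use $L^2$-boundedness of $C_\pm$ for the outer operator together with the $L^2$ and $L^1$ bounds on $\om'$ from Proposition~\ref{omfanshu}. The paper additionally records the algebraic simplification $\om^{B'}_{\pm}\om^{A'}_{\pm}=0$ (so only the mixed terms $C_+((C_-\phi\,\om^{B'}_+)\om^{A'}_-)$ and $C_-((C_+\phi\,\om^{B'}_-)\om^{A'}_+)$ survive), but your outline already implicitly uses this, and the rest of your argument matches the paper line for line.
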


\begin{proof}
Since
\begin{subequations}
\be
\omega^{B'}_+(\eta)\omega^{A'}_+(\xi)=0,\quad \omega^{B'}_-(\eta)\omega^{A'}_-(\xi)=0,\quad \mbox{for }\eta,\xi\in \Sig^{(3)},
\ee
and
\be
C^{\Sig^{(3)}}_{\omega^{A'}}C^{\Sig^{(3)}}_{\omega^{B'}}\phi=C_+\left((C_-\phi\omega_+^{B'})\omega_-^{A'}\right)+C_-\left((C_+\phi\omega_-^{B'})\omega_+^{A'}\right)
\ee
\end{subequations}
Here we estimate the first term and the second term is similar.

\par
Since $C_-$ is bounded in $L^{2}(\Sig^{(3)})$, and the proposition \ref{omfanshu}, we have
\be
\ba{l}
||C_+\left((C_-\phi\omega_+^{B'})\omega_-^{A'}\right)(k)||_{L^{2}(\Sig^{(3)})}\\
=||\int_{\Sig^{A'}}\left(\int_{\Sig^{B'}}\phi(\xi)\omega_+^{B'}(\xi)\frac{d\xi}{(\xi-\eta)_-}\right)\omega_-^{A'}(\eta)\frac{d\eta}{(\eta-k)_+}||_{L^{2}(\Sig^{(3)})}\\
\le c ||\omega_-^{A'}||_{L^{2}(\Sig^{A'})}\sup_{\eta\in\Sig^{A'}}{|\int_{\Sig^{B'}}\phi(\xi)\omega_+^{B'}(\xi)\frac{d\xi}{\xi-\eta}|}\\
\le \frac{c}{k_0}||\omega_-^{A'}||_{L^{2}(\Sig^{A'})} ||\omega_+^{B'}||_{L^{2}(\Sig^{B'})}||\phi||_{L^2(\Sig^{(3)})}\\
\le C(k_0)t^{-1/2}||\phi||_{L^2(\Sig^{(3)})},
\ea
\ee
where
\be
\mbox{dist}(\Sig^{A'},\Sig^{B'})>k_0.
\ee
Thus, we have
\be
||C^{\Sig^{(3)}}_{\omega^{A'}}C^{\Sig^{(3)}}_{\omega^{B'}}||_{L^2(\Sig^{(3)})}\le \frac{C(k_0)}{\sqrt{t}}.
\ee

On the other hand.
\be
\ba{l}
||C_+\left((C_-\phi\omega_+^{B'})\omega_-^{A'}\right)(k)||_{L^{2}(\Sig^{(3)})}\\
=||\int_{\Sig^{A'}}\left(\int_{\Sig^{B'}}\phi(\xi)\omega_+^{B'}(\xi)\frac{d\xi}{(\xi-\eta)_-}\right)\omega_-^{A'}(\eta)\frac{d\eta}{(\eta-k)_+}||_{L^{2}(\Sig^{(3)})}\\
\le c ||\omega_-^{A'}||_{L^{2}(\Sig^{A'})}\sup_{\eta\in\Sig^{A'}}{|\int_{\Sig^{B'}}\phi(\xi)\omega_+^{B'}(\xi)\frac{d\xi}{\xi-\eta}|}\\
\le \frac{c}{k_0}||\omega_-^{A'}||_{L^{2}(\Sig^{A'})} ||\omega_-^{B'}||_{L^{1}(\Sig^{B'})}||\phi||_{L^{\infty}(\Sig^{(3)})}\\
\le C(k_0)t^{-1/2}t^{-1/4}    ||\phi||_{L^{\infty}(\Sig^{(3)})},
\ea
\ee
Thus, we have
\be
||C^{\Sig^{(3)}}_{\omega^{A'}}C^{\Sig^{(3)}}_{\omega^{B'}}||_{L^{\infty}\rightarrow L^2(\Sig^{(3)})}\le \frac{C(k_0)}{t^{3/4}}.
\ee
\end{proof}

Using the identity
\be
\ba{l}
\left(\id-C^{\Sig^{(3)}}_{\omega^{A'}}-C^{\Sig^{(3)}}_{\omega^{B'}}\right)\left(\id+C^{\Sig^{(3)}}_{\omega^{A'}}(\id-C^{\Sig^{(3)}}_{\omega^{A'}})^{-1}+C^{\Sig^{(3)}}_{\omega^{B'}}(\id-C^{\Sig^{(3)}}_{\omega^{B'}})^{-1}\right)\\
=\id-C^{\Sig^{(3)}}_{\omega^{B'}}C^{\Sig^{(3)}}_{\omega^{A'}}(\id-C^{\Sig^{(3)}}_{\omega^{A'}})^{-1}-C^{\Sig^{(3)}}_{\omega^{A'}}C^{\Sig^{(3)}}_{\omega^{B'}}(\id-C^{\Sig^{(3)}}_{\omega^{B'}})^{-1}
\ea
\ee
and proposition \ref{omfanshu}, we show that as $t\rightarrow \infty$,
\be
\ba{l}
\frac{1}{2\pi i}\int_{\Sigma^{(3)}}\frac{(\id-C_{\om^{'}})^{-1}\id(\eta)\om^{'}(\eta)}{\eta^{j}}d\eta\\
=\frac{1}{2\pi i}\int_{\Sigma^{A'}}\frac{(\id-C_{\om^{A'}})^{-1}\id(\eta)\om^{A'}(\eta)}{\eta^{j}}d\eta\\
+\frac{1}{2\pi i}\int_{\Sigma^{A'}}\frac{(\id-C_{\om^{A'}})^{-1}\id(\eta)\om^{A'}(\eta)}{\eta^{j}}d\eta+O(\frac{C(k_0)}{t}),\quad j=1,2,
\ea
\ee
where $C(k_0)$ is a constant dependent on $k_0$.

\par
Then, using the lemma \ref{opralema},  we obtain
\begin{proposition}
As $t\rightarrow \infty$,
\begin{subequations}
\be\label{um3'sol}
\ba{rcl}
iu(y,t)&=&\left(1+\frac{1}{2\pi i}\int_{\Sigma^{A'}}\frac{(\id-C_{\om^{A'}})^{-1}\id(\eta)\om^{A'}(\eta)}{\eta}d\eta+\frac{1}{2\pi i}\int_{\Sigma^{B'}}\frac{(\id-C_{\om^{B'}})^{-1}\id(\eta)\om^{B'}(\eta)}{\eta}d\eta+O(t^{-l})\right)_{11}\\
{}&&\cdot \left(\frac{1}{2\pi i}\int_{\Sigma^{A'}}\frac{(\id-C_{\om^{A'}})^{-1}\id(\eta)\om^{A'}(\eta)}{\eta^2}d\eta+\frac{1}{2\pi i}\int_{\Sigma^{B'}}\frac{(\id-C_{\om^{B'}})^{-1}\id(\eta)\om^{B'}(\eta)}{\eta^2}d\eta+O(t^{-l})\right)_{21}\\
{}&&-\left(\frac{1}{2\pi i}\int_{\Sigma^{A'}}\frac{(\id-C_{\om^{A'}})^{-1}\id(\eta)\om^{A'}(\eta)}{\eta}d\eta+\frac{1}{2\pi i}\int_{\Sigma^{B'}}\frac{(\id-C_{\om^{B'}})^{-1}\id(\eta)\om^{B'}(\eta)}{\eta}d\eta+O(t^{-l})\right)_{21}\\
{}&&\cdot \left(\frac{1}{2\pi i}\int_{\Sigma^{A'}}\frac{(\id-C_{\om^{A'}})^{-1}\id(\eta)\om^{A'}(\eta)}{\eta^2}d\eta+\frac{1}{2\pi i}\int_{\Sigma^{B'}}\frac{(\id-C_{\om^{B'}})^{-1}\id(\eta)\om^{B'}(\eta)}{\eta^2}d\eta+O(t^{-l})\right)_{11},
\ea
\ee
and
\be
\ba{rcl}
ic_+(y,t)&=&\left(1+\frac{1}{2\pi i}\int_{\Sigma^{A'}}\frac{(\id-C_{\om^{A'}})^{-1}\id(\eta)\om^{A'}(\eta)}{\eta}d\eta+\frac{1}{2\pi i}\int_{\Sigma^{B'}}\frac{(\id-C_{\om^{B'}})^{-1}\id(\eta)\om^{B'}(\eta)}{\eta}d\eta+O(t^{-l})\right)_{22}\\
{}&&\cdot \left(\frac{1}{2\pi i}\int_{\Sigma^{A'}}\frac{(\id-C_{\om^{A'}})^{-1}\id(\eta)\om^{A'}(\eta)}{\eta^2}d\eta+\frac{1}{2\pi i}\int_{\Sigma^{B'}}\frac{(\id-C_{\om^{B'}})^{-1}\id(\eta)\om^{B'}(\eta)}{\eta^2}d\eta+O(t^{-l})\right)_{11}\\
{}&&-\left(\frac{1}{2\pi i}\int_{\Sigma^{A'}}\frac{(\id-C_{\om^{A'}})^{-1}\id(\eta)\om^{A'}(\eta)}{\eta}d\eta+\frac{1}{2\pi i}\int_{\Sigma^{B'}}\frac{(\id-C_{\om^{B'}})^{-1}\id(\eta)\om^{B'}(\eta)}{\eta}d\eta+O(t^{-l})\right)_{12}\\
{}&&\cdot \left(\frac{1}{2\pi i}\int_{\Sigma^{A'}}\frac{(\id-C_{\om^{A'}})^{-1}\id(\eta)\om^{A'}(\eta)}{\eta^2}d\eta+\frac{1}{2\pi i}\int_{\Sigma^{B'}}\frac{(\id-C_{\om^{B'}})^{-1}\id(\eta)\om^{B'}(\eta)}{\eta^2}d\eta+O(t^{-l})\right)_{21}\\
{}&&-\dta_1.
\ea
\ee
\end{subequations}
\end{proposition}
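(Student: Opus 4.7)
The plan is to substitute the separated formula for $\int_{\Sigma^{(3)}}(\id-C_{\omega'})^{-1}\id\,\omega'/\eta^j\,d\eta$ derived just above (for $j=1,2$) into the bilinear representation formulas (\ref{usolrep1}) for $iu(y,t)$ and $ic_+(y,t)$, and then expand the products so that the contributions of $\Sigma^{A'}$ and $\Sigma^{B'}$ appear explicitly in each factor.

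First I would invoke the decomposition
\[
\frac{1}{2\pi i}\int_{\Sigma^{(3)}}\frac{(\id-C_{\omega'})^{-1}\id(\eta)\,\omega'(\eta)}{\eta^j}d\eta \;=\; \sum_{X \in \{A',B'\}} \frac{1}{2\pi i}\int_{\Sigma^X}\frac{(\id-C_{\omega^X})^{-1}\id(\eta)\,\omega^X(\eta)}{\eta^j}d\eta \;+\; O\!\left(\tfrac{C(k_0)}{t}\right),
\]
which follows from the second resolvent identity applied to $C_{\omega'} = C_{\omega^{A'}} + C_{\omega^{B'}}$, combined with the operator bounds from the previous proposition ($C(k_0)/\sqrt{t}$ in $L^2 \to L^2$ and $C(k_0)/t^{3/4}$ in $L^\infty \to L^2$ for the cross-compositions), and the uniform boundedness of the weights $1/\eta$ and $1/\eta^2$ on $\Sigma^{(3)}$ since the two crosses stay at distance at least $k_0$ from the origin. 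Lemma \ref{opralema} from the appendix then allows the Cauchy operator in each summand to be recast as acting on $\Sigma^X$ alone, because $\omega^X$ is supported only there.

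Second I would substitute the separated expression into the formulas (\ref{usolrep1}). Each formula is bilinear in two integrals of the same shape, one with weight $1/\eta$ and one with weight $1/\eta^2$, so the bilinearity produces four cross-products of contour pieces (AA, AB, BA, BB). These recombine exactly into the four sums over $\Sigma^{A'}$ and $\Sigma^{B'}$ displayed in the proposition. Each of the four main integrals is $O(1)$ in size uniformly for $0 < k_0 < M$, so the cross-term errors $O(C(k_0)/t)$ enter multiplicatively but remain $O(C(k_0)/t)$, which is absorbed into the overall $O(t^{-l})$ error symbol in the statement; this is consistent with the ultimate rate $O((\ln t)/t)$ reported in Theorem \ref{main2}.

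The main obstacle I expect is tracking the errors through the bilinear expansion: the product of two quantities each of the form $\mathrm{main}+\mathrm{error}$ generates three error terms, and one must verify that none of them exceeds $O(C(k_0)/t)$. This follows from the uniform boundedness of each main integral together with the $O(t^{-l})$ rates inherited from (\ref{usolrep1}) and the operator bounds of Proposition \ref{omfanshu}. All other steps amount to routine bookkeeping in the bilinear expansion combined with the contour-restriction lemma.
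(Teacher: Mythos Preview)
Your proposal is correct and follows essentially the same route as the paper: the paper likewise invokes the algebraic identity for $(\id-C_{\omega^{A'}}-C_{\omega^{B'}})$ together with the cross-composition bounds of the preceding proposition to separate the $\Sigma^{(3)}$ integral into $\Sigma^{A'}$ and $\Sigma^{B'}$ pieces with error $O(C(k_0)/t)$, applies Lemma~\ref{opralema} to localize each Cauchy operator to its own cross, and then substitutes into the bilinear representation (\ref{usolrep1}). Your observation that the resulting error is really $O(C(k_0)/t)$ rather than $O(t^{-l})$ is also apt; the paper's displayed $O(t^{-l})$ is inherited from (\ref{usolrep1}) and should be read together with the $O(C(k_0)/t)$ separation error, which is what ultimately feeds into the $O((\ln t)/t)$ of Theorem~\ref{main2}.
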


\subsection{The scaling transformation}
In order to reduce the Riemann-Hilbert problem for $\tilde M^{(3)}(y,t,k)$, as $t\rightarrow \infty$, to a model Riemann-Hilbert problem whose solution can be given explicitly in terms of parabolic cylinder functions, see \cite{dz}, the leading term of the factor $\dta(k)e^{-it\tha(k)}$ as $k\rightarrow \pm k_0$ is to be evaluated.

First, we extend the crosses $\Sig^{A'}$ and $\Sig^{B'}$ to contours $\hat \Sig^{A'}$ and $\hat \Sig^{B'}$ by zero extension. Thus, the corresponding functions $\hat \om^{A'}$ and $\hat \om^{B'}$ are well-defined by zero extension of the functions $\om^{A'}$ and $\om^{B'}$, too. Then, we denote $\Sig^{A}$ and $\Sig^{B}$ as the contour $\{k=k_0\lam e^{\pm\frac{\pi i}{4}},-\infty<\lam<\infty\}$ oriented as $\hat \Sig^{A'}$ and $\hat \Sig^{B'}$, respectively.

\par
For $k$ near $k_0$,
\be
\dta(k)=\left(\frac{k-k_0}{k+k_0}\right)^{-i\nu(k_0)}e^{\chi(k)},
\ee
where
\be\label{nudef}
\nu(k_0)=\nu=-\frac{1}{2\pi}\ln{(1+|r(k_0)|^2)},
\ee
\be\label{chidef}
\chi(k)=-\frac{1}{2\pi i}(\int_{-\infty}^{-k_0}+\int_{k_0}^{+\infty})\ln{|k-s|}d\ln{(1+|r(s)|^2)}.
\ee
And
\be
\tha(k)=-\frac{1}{2k_0}-\frac{1}{4k^3_0}(k-k_0)^2+\frac{1}{4\eta^4}(k-k_0)^3,\quad \eta \mbox{ lies between $k_0$ and $k$}.
\ee
Then, introducing the scaling operator by
\be
(N_Af)(k)=f(k_0+\frac{k}{\sqrt{k_0^{-3}t}}),
\ee
the factor $\dta(k)e^{-it\tha(k)}$ can be scaled as
\be
(N_A\dta e^{-it\tha})(k)=\dta^0_A\dta^1_A,
\ee
where
\begin{subequations}
\be\label{dtaA0def}
\dta^0_A=(\frac{4t}{k_0})^{\frac{i\nu(k_0)}{2}}e^{\chi(k_0)}e^{\frac{it}{2k_0}},
\ee
\be\label{dtaA1def}
\dta^1_A=k^{-i\nu(k_0)}e^{\frac{ik^2}{4}}\left(\frac{2k_0}{2k_0+\frac{k}{\sqrt{k_0^3t}}}\right)^{-i\nu(k_0)}e^{\chi(k_0+\frac{k}{\sqrt{k_0^{-3}t}})-\chi(k_0)}e^{-\frac{ik^3}{4\eta^4k_0^{-9/2}t^{1/2}}}.
\ee
\end{subequations}
Here $k^{-i\nu(k_0)}$ is cut along $(0,\infty)$.
\par
Form the definition of $\chi(k)$, we know that $\chi(k_0)$ is purely imaginary, thus $|\dta^0_A|=1$. Define
\be
\Dta^0_A=(\dta^0_A)^{-\sig_3},\quad \tilde \Dta^0_A \phi=\phi \Dta^0_A
\ee
We have
\be
C_{\hat \om^{A'}}=N^{-1}_A(\Dta^0_A)^{-1}A\tilde \Dta^0_A N_A,
\ee
where the operator $A:L^{2}(\Sig^A)\rightarrow L^2(\Sig^A)$ is given by
\be
A\phi=C_{(\Dta^0_A)^{-1}(N_A\hat\om^{A'})\Dta^0_A}\phi=C_+(\phi (\Dta^0_A)^{-1}(N_A\hat\om_-^{A'})\Dta^0_A )+C_-(\phi (\Dta^0_A)^{-1}(N_A\hat\om_+^{A'})\Dta^0_A )
\ee
On the part $\{k=k_0\lam e^{\frac{i\pi}{4}},-\eps<\lam<\eps\}$ of $\Sig^A$,
\be
(\Dta^0_A)^{-1}(N_A\hat\om_+^{A'})\Dta^0_A=\left(\ba{cc}0&0\\R(k_0+\frac{k}{\sqrt{k_0^{-3}t}})(\dta^1_A)^{2}&1\ea\right),
\ee
otherwise, $(\Dta^0_A)^{-1}(N_A\hat\om_+^{A'})\Dta^0_A=0$.
\par
Similarly, on the part $\{k=k_0\lam e^{-\frac{i\pi}{4}},-\eps<\lam<\eps\}$ of $\Sig^A$,
\be
(\Dta^0_A)^{-1}(N_A\hat\om_-^{A'})\Dta^0_A=\left(\ba{cc}0&-\ol{R}(k_0+\frac{k}{\sqrt{k_0^{-3}t}})(\dta^1_A)^{-2}\\0&1\ea\right),
\ee
otherwise, $(\Dta^0_A)^{-1}(N_A\hat\om_-^{A'})\Dta^0_A=0$.
\par
By the definition of $R(k)$, we have
\be
R(k_0+)=\lim_{\re k>k_0}R(k)=\frac{\ol{r(k_0)}}{1+|r(k_0)|^2},
\ee
and
\be
R(k_0-)=\lim_{\re k<k_0}R(k)=-\ol{r(k_0)}.
\ee
\begin{proposition}\label{tasyinftyrsol}
As $t\rightarrow \infty$, let $\beta$ be a fixed small number, $0<2\beta<1$, then for $k\in\{k=k_0\lam e^{\frac{i\pi}{4}},-\eps<\lam<\eps\}$,
\be\label{Rasyt}
||R(k_0+\frac{k}{\sqrt{k_0^{-3}t}})(\dta^1_A)^{2}-R(k_0\pm)k^{-2i\nu(k_0)}e^{\frac{ik^2}{2}}||_{L^1\cap L^{\infty}}\le C(k_0)|e^{\frac{i\beta^2k^2}{2}}|\left(\frac{\ln{(t)}}{\sqrt{t}}\right).
\ee
\end{proposition}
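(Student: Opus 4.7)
The plan is to factor out the leading structure already present on both sides of the inequality and estimate the remaining correction. Writing $\dta^1_A$ as in \eqref{dtaA1def}, we have
\[
(\dta^1_A)^2 = k^{-2i\nu(k_0)}\, e^{ik^2/2}\, F(k,t),
\]
where $F(k,t)$ is the product of $\bigl(2k_0/(2k_0 + k/\sqrt{k_0^3 t})\bigr)^{-2i\nu(k_0)}$, $\exp\bigl\{2[\chi(k_0 + k/\sqrt{k_0^{-3}t}) - \chi(k_0)]\bigr\}$, and $\exp\bigl\{-ik^3/(2\eta^4 k_0^{-9/2} t^{1/2})\bigr\}$. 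The target inequality then reduces to showing
\[
R\!\left(k_0 + \tfrac{k}{\sqrt{k_0^{-3}t}}\right) F(k,t) - R(k_0\pm) = O(\ln t/\sqrt t),
\]
uniformly in $k$ up to a polynomial weight in $|k|$; the Gaussian $|e^{ik^2/2}| = e^{-k_0^2\lam^2/2}$ available on the contour $k = k_0\lam e^{i\pi/4}$ will then absorb that polynomial growth and produce the stated envelope $|e^{i\beta^2 k^2/2}|$ once we split $|e^{ik^2/2}| = |e^{i\beta^2 k^2/2}|\,|e^{i(1-\beta^2)k^2/2}|$ with $0<2\beta<1$. The factor $|k^{-2i\nu(k_0)}| = e^{2\nu(k_0)\arg k}$ is a bounded constant on each ray and is absorbed into $C(k_0)$.

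The second step is to estimate each factor of $F(k,t)$ and the difference $R(k_0 + k/\sqrt{k_0^{-3}t}) - R(k_0\pm)$ one by one. A first-order Taylor expansion of the rational factor gives a deviation from $1$ of size $O(|k|/\sqrt{t})$; the cubic phase is $O(|k|^3/\sqrt{t})$ uniformly on the slice $|\lam|<\eps$; and since $R$ is rational and smooth on each side of $k_0$, Taylor expansion in the original variable gives a deviation $O(|k|/\sqrt t)$, with the $\pm$ convention absorbing the jump at $k_0$ by selecting the appropriate one-sided limit according to the sign of $\re k$ in the scaled variable. For $t$ large the exponents of all three correction factors are small, so $e^z = 1 + O(z)$ applies, and the product-rule combination yields a total deviation from $1$ of the form $O(P(|k|)/\sqrt t)$ for an explicit polynomial $P$, \emph{except} for the $\chi$-contribution, which is more delicate.

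The main obstacle is the $\chi$ difference. From the explicit representation \eqref{chidef}, $\chi(k)$ is a Hilbert-type transform of $\ln(1+|r(s)|^2)$ against $\ln|k-s|$; differentiating under the integral sign reveals a logarithmic singularity in $\chi'$ at $k_0$, so no pure Lipschitz estimate holds. Instead, the classical modulus-of-continuity bound yields $|\chi(k_0+h)-\chi(k_0)| \le C|h|\,\bigl|\ln|h|\bigr|$ for small $h$, the Schwartz regularity of $r$ being more than enough to justify this. Substituting $h = k/\sqrt{k_0^{-3}t}$ and using $|\ln|h|| \le C(1+\ln t+\ln|k|)$ on the range $|\lam|<\eps$ yields $|\chi(k_0+h)-\chi(k_0)| \le C(k_0)\,|k|\,\ln t/\sqrt t$, which is the dominant contribution. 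Combining with the previous estimates by the product rule, weighting by $|e^{ik^2/2}|$, and absorbing the polynomial factor $P(|k|)$ into the leftover Gaussian $|e^{i(1-\beta^2)k^2/2}|$ (which dominates any polynomial), we arrive at the claimed pointwise bound; integration in $\lam$ then produces the $L^1\cap L^\infty$ norm estimate, since $|e^{i\beta^2 k^2/2}|$ itself is a Gaussian on the contour. The estimate on the symmetric segment $\{k=k_0\lam e^{-i\pi/4}\}$ follows by an identical argument applied to the conjugate piece $\ol{R}(k_0+k/\sqrt{k_0^{-3}t})(\dta^1_A)^{-2}$.
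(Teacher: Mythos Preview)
Your proposal is correct and follows essentially the same strategy as the paper's proof: factor out $k^{-2i\nu}e^{ik^2/2}$, then estimate separately the deviation of $R$ from its one-sided limit, the cubic phase, the rational factor, and the $\chi$-difference, with the last producing the dominant $\ln t/\sqrt t$ contribution. The paper packages this as an additive four-term decomposition $\mathrm{I}+\mathrm{II}+\mathrm{III}+\mathrm{IV}$ and handles the $\chi$-difference by an explicit splitting of the defining integral (rewriting $\chi(k_0+h)-\chi(k_0)$ as $\int \ln(1\pm h/(s\mp k_0))\,d\ln(1+|r|^2)$, separating a Lipschitz part from the endpoint contribution, and integrating by parts), whereas you invoke the abstract $|h\ln|h||$ modulus of continuity for $\chi$ near $k_0$; both arrive at the same bound and both rely on the Gaussian $|e^{ik^2/2}|$ on the ray to absorb polynomial factors.
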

\begin{proof}
See the appendix \ref{appdd}.
\end{proof}

\par

\par
Now, let us calculate
\begin{subequations} \label{msol1}
\be
\ba{l}
\frac{1}{2\pi i}\int_{\Sig^{A'}}\frac{((\id_{A'}-C^{\Sig^{A'}}_{\om^{A'}})^{-1}\id)(\eta)\om^{A'}(\eta)}{\eta}d\eta\\
=\frac{1}{2\pi i}\int_{\hat \Sig^{A'}}\frac{((\id_{\hat A'}-C^{\hat \Sig^{A'}}_{\hat\om^{A'}})^{-1}\id)(\eta)\hat\om^{A'}(\eta)}{\eta}d\eta\\
=\frac{1}{2\pi i}\int_{\hat \Sig^{A'}}\frac{(N_A^{-1}(\tilde \Dta^0_A)^{-1}(\id_A-A)^{-1}\tilde \Dta^0_AN_A\id)(\eta)\hat\om^{A'}(\eta)}{\eta}d\eta\\
=\frac{1}{2\pi i}\int_{\hat \Sig^{A'}}\frac{(\id_A-A)^{-1}\Dta_A^0((\eta-k_0)\sqrt{k_0^{-3}t})(\Dta_A^0)^{-1}(N_A\hat\om^{A'})((\eta-k_0)\sqrt{k_0^{-3}t})}{\eta}d\eta\\
=\frac{1}{2\pi i}\frac{1}{\sqrt{k_0^{-3}t}}\int_{\Sig^{A}}\frac{(\id_A-A)^{-1}\Dta_A^0(\eta)(\Dta_A^0)^{-1}(N_A\hat\om^{A'})(\eta)}{\frac{\eta}{\sqrt{k_0^{-3}t}}+k_0}d\eta\\
=\frac{1}{2\pi i}\frac{1}{\sqrt{k_0^{-3}t}}\Dta_A^0\left(\int_{\Sig^{A}}\frac{(\id_A-A)^{-1}\id(\eta)\om^{A}(\eta)}{\frac{\eta}{\sqrt{k_0^{-3}t}}+k_0}d\eta\right)(\Dta_A^0)^{-1}
\ea
\ee
and
\be
\ba{l}
\frac{1}{2\pi i}\int_{\Sig^{A'}}\frac{((\id_{A'}-C^{\Sig^{A'}}_{\om^{A'}})^{-1}\id)(\eta)\om^{A'}(\eta)}{\eta^2}d\eta\\
=\frac{1}{2\pi i}\int_{\hat \Sig^{A'}}\frac{((\id_{\hat A'}-C^{\hat \Sig^{A'}}_{\hat\om^{A'}})^{-1}\id)(\eta)\hat\om^{A'}(\eta)}{\eta^2}d\eta\\
=\frac{1}{2\pi i}\int_{\hat \Sig^{A'}}\frac{(N_A^{-1}(\tilde \Dta^0_A)^{-1}(\id_A-A)^{-1}\tilde \Dta^0_AN_A\id)(\eta)\hat\om^{A'}(\eta)}{\eta^2}d\eta\\
=\frac{1}{2\pi i}\int_{\hat \Sig^{A'}}\frac{(\id_A-A)^{-1}\Dta_A^0((\eta-k_0)\sqrt{k_0^{-3}t})(\Dta_A^0)^{-1}(N_A\hat\om^{A'})((\eta-k_0)\sqrt{k_0^{-3}t})}{\eta^2}d\eta\\
=\frac{1}{2\pi i}\frac{1}{\sqrt{k_0^{-3}t}}\int_{\Sig^{A}}\frac{(\id_A-A)^{-1}\Dta_A^0(\eta)(\Dta_A^0)^{-1}(N_A\hat\om^{A'})(\eta)}{\left(\frac{\eta}{\sqrt{k_0^{-3}t}}+k_0\right)^2}d\eta\\
=\frac{1}{2\pi i}\frac{1}{\sqrt{k_0^{-3}t}}\Dta_A^0\left(\int_{\Sig^{A}}\frac{(\id_A-A)^{-1}\id(\eta)\om^{A}(\eta)}{\left(\frac{\eta}{\sqrt{k_0^{-3}t}}+k_0\right)^2}d\eta\right)(\Dta_A^0)^{-1}
\ea
\ee
\end{subequations}
By the proposition \ref{tasyinftyrsol}, we have
\begin{subequations}
\be
\int_{\Sig^{A}}\frac{(\id_A-A)^{-1}\id(\eta)\om^{A}(\eta)}{\frac{\eta}{\sqrt{k_0^{-3}t}}+k_0}d\eta=\int_{\Sig^{A}}\frac{(\id_A-A)^{-1}\id(\eta)\om^{A^0}(\eta)}{\frac{\eta}{\sqrt{k_0^{-3}t}}+k_0}d\eta+O\left(C(k_0)\frac{\ln{(t)}}{\sqrt{t}}\right),
\ee
\be
\int_{\Sig^{A}}\frac{(\id_A-A)^{-1}\id(\eta)\om^{A}(\eta)}{\left(\frac{\eta}{\sqrt{k_0^{-3}t}}+k_0\right)^2}d\eta=\int_{\Sig^{A}}\frac{(\id_A-A)^{-1}\id(\eta)\om^{A^0}(\eta)}{\left(\frac{\eta}{\sqrt{k_0^{-3}t}}+k_0\right)^2}d\eta+O\left(C(k_0)\frac{\ln{(t)}}{\sqrt{t}}\right).
\ee
\end{subequations}

Here
\be
\om^{A^0}=\left\{
\ba{ll}
\om^{A^0}_+=\left(\ba{cc}0&0\\\frac{\ol{r(k_0)}}{1+|r(k_0)|^2}k^{-2i\nu(k_0)}e^{\frac{ik^2}{2}}&0\ea\right),&k\in \Sig_A^{1},\\
\om^{A^0}_+=\left(\ba{cc}0&0\\-\ol{r(k_0)}k^{-2i\nu(k_0)}e^{\frac{ik^2}{2}}&0\ea\right),&k\in \Sig_A^{3},\\
\om^{A^0}_-=\left(\ba{cc}0&-r(k_0)k^{2i\nu(k_0)}e^{-\frac{ik^2}{2}}\\0&0\ea\right),&k\in \Sig_A^{2},\\
\om^{A^0}_+=\left(\ba{cc}0&\frac{r(k_0)}{1+|r(k_0)|^2}k^{2i\nu(k_0)}e^{-\frac{ik^2}{2}}\\0&0\ea\right),&k\in \Sig_A^{4},\\
\ea
\right.
\ee
\begin{figure}[th]
\centering
\includegraphics{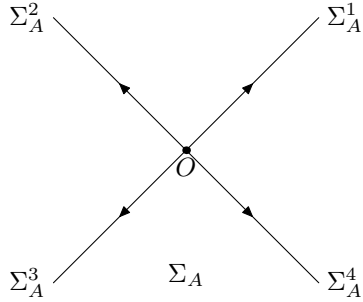}
\caption{\small The  $\Sigma^{A}$.}\label{fig6}
\end{figure}

Define
\be
\tilde M^{A_0}(k)=\id+\frac{1}{2\pi i}\int_{\Sig^{A}}\frac{((\id_A-A^0)^{-1}\id)(\eta)\om^{A^0}(\eta)}{\eta-k}d\eta,
\ee
then $\tilde M^{A_0}(k)$ satisfies the Riemann-Hilbert problem
\be\label{ma0rhp}
\left\{
\ba{ll}
\tilde M_+^{A^0}(k)=\tilde M_-^{A^0}(k)\tilde J^{A_0}(k),&k\in \Sig^{A},\\
\tilde M^{A^0}(k)\rightarrow \id,&k\rightarrow \infty,
\ea
\right.
\ee
where
\be
\tilde J^{A_0}(k)=(b^{A^0}_-)^{-1}(k)b^{A^0}_+(k)=(\id-\om_-^{A^0})^{-1}(\id+\om_+^{A^0}).
\ee
\par
If
\be
\tilde M^{A^0}(k)=\id-\frac{\tilde M^{A^0}_1}{k}+O(k^{-2}),\quad k\rightarrow \infty,
\ee
then
\begin{subequations}
\be
\ba{l}
\frac{1}{2\pi i}\frac{1}{\sqrt{k_0^{-3}t}}\int_{\Sig^{A}}\frac{(\id_A-A)^{-1}\id(\eta)\om^{A}(\eta)}{\frac{\eta}{\sqrt{k_0^{-3}t}}+k_0}d\eta\\
=\frac{1}{2\pi i}\int_{\Sig^{A}}\frac{(\id_A-A)^{-1}\id(\eta)\om^{A}(\eta)}{\eta+k_0{\sqrt{k_0^{-3}t}}}d\eta\\
=\tilde M^{A^0}(-k_0{\sqrt{k_0^{-3}t}})-\id\\
=\frac{1}{k_0{\sqrt{k_0^{-3}t}}}\tilde M^{A^0}_1+O(t^{-1}\ln{t}),\quad t\rightarrow \infty.
\ea
\ee
and
\be
\ba{l}
\frac{1}{2\pi i}\frac{1}{\sqrt{k_0^{-3}t}}\int_{\Sig^{A}}\frac{(\id_A-A)^{-1}\id(\eta)\om^{A}(\eta)}{\left(\frac{\eta}{\sqrt{k_0^{-3}t}}+k_0\right)^2}d\eta\\
=\frac{\sqrt{k_0^{-3}t}}{2\pi i}\int_{\Sig^{A}}\frac{(\id_A-A)^{-1}\id(\eta)\om^{A}(\eta)}{\left(\eta+k_0{\sqrt{k_0^{-3}t}}\right)^2}d\eta\\
=\left.\frac{d\tilde M^{A^0}}{dk}\right|_{k=-k_0{\sqrt{k_0^{-3}t}}}\\
=\frac{1}{k^2_0{\sqrt{k_0^{-3}t}}}\tilde M^{A^0}_1+O(t^{-1}\ln{t}),\quad t\rightarrow \infty.
\ea
\ee
\end{subequations}

\begin{remark}
Similarly for $k$ near $-k_0$. The scaling operator is
\be
(N_B f)(k)=f(-k_0+\frac{k}{\sqrt{k_0^{-3}t}}),
\ee
and
\be
(N_B\dta e^{-it\tha})(k)=\dta^0_B\dta^1_B,\rightarrow \tilde \dta\tilde k^{-i\nu(k_0)}e^{\frac{i\tilde k^2}{4}},\quad as\quad  t\rightarrow \infty,
\ee
where
\begin{subequations}
\be\label{dtaA0def}
\dta^0_B=(\frac{4t}{k_0})^{-\frac{i\nu(-k_0)}{2}}e^{\chi(-k_0)}e^{-\frac{it}{2k_0}},
\ee
\be\label{dtaA1def}
\dta^1_B=(-k)^{i\nu(k_0)}e^{-\frac{ik^2}{4}}\left(\frac{-2k_0}{-2k_0+\frac{k}{\sqrt{k_0^3t}}}\right)^{i\nu(k_0)}e^{\chi(-k_0+\frac{k}{\sqrt{k_0^{-3}t}})-\chi(-k_0)}e^{-\frac{ik^3}{4\eta^4k_0^{-9/2}t^{1/2}}}.
\ee
\end{subequations}

If
\be
\tilde M^{B^0}(k)=\id-\frac{\tilde M^{B^0}_1}{k}+O(k^{-2}),\quad k\rightarrow \infty,
\ee
then
\begin{subequations}\label{msol2}
\be
\ba{l}
\frac{1}{2\pi i}\frac{1}{\sqrt{k_0^{-3}t}}\int_{\Sig^{B}}\frac{(\id_B-B)^{-1}\id(\eta)\om^{B}(\eta)}{\frac{\eta}{\sqrt{k_0^{-3}t}}-k_0}d\eta\\
=\frac{1}{2\pi i}\int_{\Sig^{B}}\frac{(\id_B-B)^{-1}\id(\eta)\om^{A}(\eta)}{\eta-k_0{\sqrt{k_0^{-3}t}}}d\eta\\
=\tilde M^{B^0}(k_0{\sqrt{k_0^{-3}t}})-\id\\
=-\frac{1}{k_0{\sqrt{k_0^{-3}t}}}\tilde M^{B^0}_1+O(t^{-1}\ln{t}),\quad t\rightarrow \infty.
\ea
\ee
and
\be
\ba{l}
\frac{1}{2\pi i}\frac{1}{\sqrt{k_0^{-3}t}}\int_{\Sig^{B}}\frac{(\id_B-B)^{-1}\id(\eta)\om^{B}(\eta)}{\left(\frac{\eta}{\sqrt{k_0^{-3}t}}-k_0\right)^2}d\eta\\
=\frac{\sqrt{k_0^{-3}t}}{2\pi i}\int_{\Sig^{B}}\frac{(\id_B-B)^{-1}\id(\eta)\om^{B}(\eta)}{\left(\eta-k_0{\sqrt{k_0^{-3}t}}\right)^2}d\eta\\
=\left.\frac{d\tilde M^{B^0}}{dk}\right|_{k=k_0{\sqrt{k_0^{-3}t}}}\\
=\frac{1}{k^2_0{\sqrt{k_0^{-3}t}}}\tilde M^{B^0}_1+O(t^{-1}\ln{t}),\quad t\rightarrow \infty.
\ea
\ee
\end{subequations}
\end{remark}

\subsection{Model Riemann-Hilbert problem}
\par
Consider the Riemann-Hilbert problem (\ref{ma0rhp}) for $\tilde M^{A^0}$, we introduce a transformation
\be
\tilde \Phi^{A^0}(k)=\tilde M^{A^0}(k) \Phi^{T}(k),
\ee
where
\be
\Phi^{T}(k)=\left\{
\ba{ll}
k^{i\nu(k_0)\sig_3}e^{-\frac{ik^2}{4}\hat\sig_3}\left(\ba{cc}1&0\\\frac{\ol{r(k_0)}}{1+|r(k_0)|^2}&1\ea\right),&k\in \Omega^e_1,\\
k^{i\nu(k_0)\sig_3}e^{-\frac{ik^2}{4}\hat\sig_3}\left(\ba{cc}1&r(k_0)\\0&1\ea\right),&k\in \Omega^e_3,\\
k^{i\nu(k_0)\sig_3}e^{-\frac{ik^2}{4}\hat\sig_3}\left(\ba{cc}1&0\\-\ol{r(k_0)}&1\ea\right),&k\in \Omega^e_4,\\
k^{i\nu(k_0)\sig_3}e^{-\frac{ik^2}{4}\hat\sig_3}\left(\ba{cc}1&-\frac{r(k_0)}{1+|r(k_0)|^2}\\0&1\ea\right),&k\in \Omega^e_6,\\
k^{i\nu(k_0)\sig_3},&k\in \Omega^e_2\cup \Omega^e_5.
\ea
\right.
\ee

\begin{figure}[th]
\centering
\includegraphics{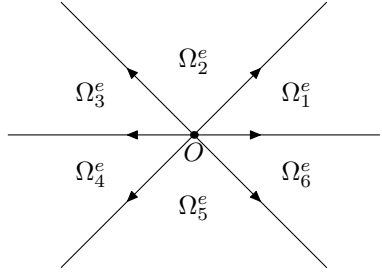}
\caption{\small The  regions $\Omega^{e}_j$, $j=1,2,\dots,6$.}\label{fig7}
\end{figure}

Then, $\tilde \Phi^{A^0}(k)$ satisfies the Riemann-Hilbert problem
\be
\left\{
\ba{ll}
\tilde \Phi_+^{A^0}(k)=\tilde \Phi_-^{A^0}(k)e^{-\frac{ik^2}{4}\hat\sig_3}\tilde J_{\Phi}^{A^0}(k),& k\in \R,\\
\tilde \Phi^{A^0}(k)\rightarrow k^{i\nu(k_0)\sig_3},&k\rightarrow \infty,
\ea
\right.
\ee
where
\be
\tilde J_{\Phi}^{A^0}(k)=\left\{
\ba{ll}
\left(\ba{cc}1&r(k_0)\\\ol{r(k_0)}&1+|r(k_0)|^2\ea\right),&k>0,\\
\left(\ba{cc}1+|r(k_0)|^2&-r(k_0)\\-\ol{r(k_0)}&1\ea\right),&k<0,
\ea
\right.
\ee
the contour $\R$ oriented from the original to $\infty$ and $-\infty$.
\par
Hence, if we reorient the contour $\R$ from $-\infty$ to $\infty$, we get the jump matrix
\be
\tilde J_{\Phi}^{A^0}(k)=\left(\ba{cc}1&r(k_0)\\\ol{r(k_0)}&1+|r(k_0)|^2\ea\right),\quad k\in \R.
\ee
\par
Let
\be
\tilde M^{model}(k)=\tilde \Phi^{A^0}(k)e^{-\frac{ik^2}{4}\sig_3},
\ee
then we have the model Riemann-Hilbert problem
\be
\tilde M_+^{model}(k)=\tilde M_-^{model}(k)\tilde J(k_0),\quad k\in \R,
\ee
where
\be
\tilde J(k_0)=\tilde J_{\Phi}^{A^0}(k)=\left(\ba{cc}1&r(k_0)\\\ol{r(k_0)}&1+|r(k_0)|^2\ea\right).
\ee

So, we have
\be
(\partial_{k}\tilde M^{model}+\frac{ik}{2}\sig_3\tilde M^{model})(\tilde M^{model})^{-1}=-\frac{i}{2}[\sig_3,\tilde M^{A^0}]+O(k^{-1}),
\ee
since $(\partial_{k}\tilde M^{model}+\frac{ik}{2}\sig_3\tilde M^{model})(\tilde M^{model})^{-1}$ is entire,
\be
\partial_{k}\tilde M^{model}+\frac{ik}{2}\sig_3\tilde M^{model}=\beta \tilde M^{model},
\ee
here
\be
\beta=-\frac{i}{2}[\sig_3,\tilde M^{A^0}]=\left(\ba{cc}0&\beta_{12}\\\beta_{21}&0\ea\right).
\ee
Thus, we have
\be
(\tilde M^{A^0})_{12}=i\beta_{12},\quad (\tilde M^{A^0})_{21}=-i\beta_{21}.
\ee

\par
Let us consider $\im k>0$, denote $\tilde M^{model}(k)$ by $\tilde M^{+}(k)$, we have
\begin{subequations}
\be
\partial_{k}\tilde M_{11}^{+}+\frac{ik}{2}\tilde M_{11}^{+}=\beta_{12} \tilde M_{21}^{+}
\ee
and
\be
\partial_{k}\tilde M_{21}^{+}-\frac{ik}{2}\tilde M_{21}^{+}=\beta_{21} \tilde M_{11}^{+},
\ee
\end{subequations}
so
\be
\partial^2_{k}\tilde M_{11}^{+}=(-\frac{k^2}{4}-\frac{i}{2}+\beta_{12} \beta_{21})\tilde M_{11}^{+}
\ee
Setting
\be
\tilde M_{11}^{+}=g(e^{-\frac{3i\pi}{4}}k),
\ee
we have the parabolic cylinder equation
\be
\partial^2_{\xi}+(\frac{1}{2}-\frac{\xi^2}{4}+a)g=0,
\ee
where $a=i\beta_{12}\beta_{21}$.
\par
Then,
\be
\tilde M_{11}^{+}(k)=c_1D_a(e^{-\frac{3i\pi}{4}}k)+c_2D_a(-e^{-\frac{3i\pi}{4}}k),
\ee
where $D_a(z)$ denotes the parabolic cylinder function.
\par
As $z\rightarrow \infty$, we have the asymptotic formula \cite{wwbook} P.327,
\be
\ba{rcl}
D_a(z)&=&z^{a}e^{-\frac{z^2}{4}}\left(1+O(z^{-2})\right),\quad |\mbox{arg}z|<\frac{3\pi}{4},\\
{}&=&z^ae^{-\frac{z^2}{4}}\left(1+O(z^{-2})\right)-\frac{\sqrt{2\pi}}{\Gamma(-a)}e^{a\pi i}z^{-a-1}e^{\frac{z^2}{4}}\left(1+O(z^{-2})\right),\quad \frac{\pi}{4}<\mbox{arg}z<\frac{5\pi}{4},\\
{}&=&z^ae^{-\frac{z^2}{4}}\left(1+O(z^{-2})\right)-\frac{\sqrt{2\pi}}{\Gamma(-a)}e^{-a\pi i}z^{-a-1}e^{\frac{z^2}{4}}\left(1+O(z^{-2})\right),\quad -\frac{5\pi}{4}<\mbox{arg}z<-\frac{\pi}{4}
\ea
\ee

We have
\be
a=i\nu(k_0),
\ee
so
\be
\nu(k_0)=\beta_{12}\beta_{21}.
\ee
Thus, for $\im k>0$,
\be
\ba{l}
\tilde M_{11}^{+}(k)=e^{-\frac{3\pi \nu(k_0)}{4}}D_a(e^{-\frac{3\pi i}{4}}k),\\
\tilde M_{21}^{+}(k)=\frac{1}{\beta_{12}}e^{-\frac{3\pi \nu(k_0)}{4}}(\partial_k D_a(e^{-\frac{3\pi i}{4}}k))+\frac{ik}{2}D_a(e^{-\frac{3\pi i}{4}}k),
\ea
\ee
Similarly, for $\im k<0$, we have
\be
\ba{l}
\tilde M_{11}^{-}(k)=e^{\frac{\pi \nu(k_0)}{4}}D_a(e^{\frac{\pi i}{4}}k),\\
\tilde M_{21}^{-}(k)=\frac{1}{\beta_{12}}e^{\frac{\pi \nu(k_0)}{4}}(\partial_k D_a(e^{\frac{\pi i}{4}}k))+\frac{ik}{2}D_a(e^{\frac{\pi i}{4}}k),
\ea
\ee

Since
\be
(\tilde M_-^{model})^{-1}\tilde M_+^{model}=\left(\ba{cc}1&r(k_0)\\\ol{r(k_0)}&1+|r(k_0)|^2\ea\right),
\ee
we have
\be
\ba{rcl}
\ol{r(k_0)}&=&\tilde M_{11}^{-}\tilde M_{21}^{+}-\tilde M_{21}^{-}\tilde M_{11}^{+}\\
{}&=&\frac{1}{\beta_{12}}e^{-\frac{\pi \nu(k_0)}{2}}\mbox{Wr}(D_a(e^{\frac{i\pi}{4}}k),D_a(e^{-\frac{3\pi i}{4}}k))\\
{}&=&\frac{\sqrt{2\pi}e^{\frac{i\pi}{4}}e^{-\frac{\pi \nu(k_0)}{2}}}{\beta_{12}\Gamma(-a)},
\ea
\ee
where $\mbox{Wr}(f,g)$ denotes the Wronskian of $f,g$ and $\Gam(\cdot)$ is the Euler Gamma function.

Hence, we have
\be
\beta_{12}=\frac{\sqrt{2\pi}e^{\frac{i\pi}{4}}e^{-\frac{\pi \nu(k_0)}{2}}}{\ol{r(k_0)}\Gamma(-i\nu(k_0))},
\ee
and
\be
\beta_{21}=\frac{\nu(k_0)}{\beta_{12}}=-\frac{\sqrt{2\pi}e^{-\frac{i\pi}{4}}e^{-\frac{\pi \nu(k_0)}{2}}}{r(k_0)\Gamma(i\nu(k_0))},
\ee
since $|\Gam(i\nu(k_0))|^2=\frac{\pi}{\nu(k_0)\mbox{sinh}(\pi\nu(k_0))}$.

\subsection{The asymptotic behavior of the solution $u(x,t)$}
Remind as $t\rightarrow \infty$, the representation (\ref{usolrep1}) of the solution $u(y,t)$ and the computation results (\ref{msol1}) and (\ref{msol2}), we have
\begin{subequations}
\be
\ba{rcl}
iu(y,t)&=&\left(1+\frac{1}{k_0\sqrt{k_0^{-3}t}}\tilde M_1^{A^0}-\frac{1}{k_0\sqrt{k_0^{-3}t}}\tilde M_1^{B^0}+O\left(C(k_0)\frac{\ln{t}}{t}\right)\right)_{11}\\
{}&&\cdot \left((\dta_A^0)^2\frac{1}{k^2_0\sqrt{k_0^{-3}t}}\tilde M_1^{A^0}+(\dta_B^0)^2\frac{1}{k^2_0\sqrt{k_0^{-3}t}}\tilde M_1^{B^0}+O\left(C(k_0)\frac{\ln{t}}{t}\right)\right)_{21}\\
{}&&-\left((\dta_A^0)^2\frac{1}{k_0\sqrt{k_0^{-3}t}}\tilde M_1^{A^0}-(\dta_B^0)^2\frac{1}{k_0\sqrt{k_0^{-3}t}}\tilde M_1^{B^0}+O\left(C(k_0)\frac{\ln{t}}{t}\right)\right)_{21}\\
{}&&\cdot \left(\frac{1}{k^2_0\sqrt{k_0^{-3}t}}\tilde M_1^{A^0}+\frac{1}{k^2_0\sqrt{k_0^{-3}t}}\tilde M_1^{B^0}+O\left(C(k_0)\frac{\ln{t}}{t}\right)\right)_{11},
\ea
\ee
and
\be
\ba{rcl}
ic_+(y,t)&=&\left(1+\frac{1}{k_0\sqrt{k_0^{-3}t}}\tilde M_1^{A^0}-\frac{1}{k_0\sqrt{k_0^{-3}t}}\tilde M_1^{B^0}+O\left(C(k_0)\frac{\ln{t}}{t}\right)\right)_{22}\\
{}&&\cdot \left(\frac{1}{k^2_0\sqrt{k_0^{-3}t}}\tilde M_1^{A^0}+\frac{1}{k^2_0\sqrt{k_0^{-3}t}}\tilde M_1^{B^0}+O\left(C(k_0)\frac{\ln{t}}{t}\right)\right)_{11}\\
{}&&-\left((\dta_A^0)^{-2}\frac{1}{k_0\sqrt{k_0^{-3}t}}\tilde M_1^{A^0}-(\dta_B^0)^{-2}\frac{1}{k_0\sqrt{k_0^{-3}t}}\tilde M_1^{B^0}+O\left(C(k_0)\frac{\ln{t}}{t}\right)\right)_{12}\\
{}&&\cdot \left((\dta_A^0)^2\frac{1}{k^2_0\sqrt{k_0^{-3}t}}\tilde M_1^{A^0}+(\dta_B^0)^2\frac{1}{k^2_0\sqrt{k_0^{-3}t}}\tilde M_1^{B^0}+O\left(C(k_0)\frac{\ln{t}}{t}\right)\right)_{21},\\
{}&&-\dta_1.
\ea
\ee
\end{subequations}

Notice that we have
\be
\dta^0_B=\ol{\dta^0_A},
\ee
as $\chi(-k_0)=-\chi(k_0)=\ol{\chi(k_0)}$.
\par
And from the symmetry conditions (\ref{msymcon}), we get
\be
\tilde M_1^{A^0}=-\ol{\tilde M_1^{B^0}}.
\ee

Hence, a direct computation shows that,
\be\label{uxtasy}
u(x,t)=\sqrt{\frac{-4\nu(\kappa_0)}{\kappa_0 t}}\sin{\{\frac{t}{\kappa_0}+\nu(\kappa_0)\ln{(\frac{4t}{\kappa_0})}+\phi(\kappa_0)\}}+O\left(\frac{\ln{(t)}}{t}\right),\quad \mbox{as }t\rightarrow \infty,
\ee
where
\be\label{phikappadef}
\phi(\kappa_0)=\frac{\pi}{4}-\arg{r(\kappa_0)}-\arg{\Gam(i\nu(\kappa_0))}+\frac{1}{\pi}(\int_{-\infty}^{\kappa_0}+\int_{\kappa_0}^{\infty})\ln{|\kappa_0-s|}d(1+|r(s)|^2)+2\kappa_0\Dta,
\ee
here
\be
\Dta=\frac{1}{\pi}\int_{\kappa_0}^{\infty}\frac{\ln{(1+|r(s)|^2)}}{s^2}ds, 
\ee
and $\kappa_0$ is defined as (\ref{kappa0def}).

This finishes the proof of theorem \ref{main2}.


\bigskip

{\bf Acknowledgements}
This work of Xu was supported by National
Science Foundation of China under project NO.11501365, Shanghai Sailing Program
supported by Science and Technology Commission of Shanghai Municipality
under Grant NO.15YF1408100, Shanghai youth teacher assistance program NO.ZZslg15056. And Xu want to give many thanks to Shanghai Center Mathematical Science, many of this work was done during Xu visited there.

\appendix
\section{Proof of the proposition \ref{analyticpro}}\label{appda}
\par
For the convenience of reader, we show the details of the procedure of the analytic continuation.
\par
\subsection{\bf 1. $\frac{k_0}{2}<|k|<k_0,k\in \R$.}
\par
We just consider $\frac{k_0}{2}<k<k_0$, the case for $-k_0<k<-\frac{k_0}{2}$ is similar.
\par
Set
\be\label{rholessk0}
\rho(k)=-\ol{r(k)}.
\ee
We split $\rho(k)$ into even and odd parts, $\rho(k)=H_{e}(k^2)+kH_{o}(k^2)$, where $H_{e}(\cdot)$ and $H_{o}(\cdot)$ are of the Schwartz class.
\par
For any positive integer $m$,
\be\label{He}
H_{e}(k^2)=\mu_0^{e}+\mu_1^{e}(k^2-k_0^2)+\cdots+\mu_m^{e}(k^2-k_0^2)^m+\frac{1}{m!}\int_{k_0^2}^{k^2}H_{e}^{(m+1)}(\gam)(k^2-\gam)^md\gam
\ee
and
\be\label{Ho}
H_{o}(k^2)=\mu_0^{o}+\mu_1^{o}(k^2-k_0^2)+\cdots+\mu_m^{o}(k^2-k_0^2)^m+\frac{1}{m!}\int_{k_0^2}^{k^2}H_{o}^{(m+1)}(\gam)(k^2-\gam)^md\gam.
\ee
Set
\be\label{Rm}
R(k)=R_m(k)=\sum_{i=0}^{m}\mu_i^{e}(k^2-k_0^2)^i+k\sum_{i=0}^{m}\mu_i^{o}(k^2-k_0^2)^i.
\ee
Assume $m=4q+1$, where $q$ is a positive integer. Write
\be
\rho(k)=h(k)+R(k),\quad \frac{k_0}{2}<k<k_0,k\in \R.
\ee
Then
\be
\left.\frac{d^j h(k)}{dk^j}\right|_{\pm k_0}=0,\quad 0\le j\le m.
\ee
And we have
\be
h(k)=\frac{(k^2-k_0^2)^{m+1}}{m!}g(k,k_0)
\ee
where
\small{
\be
g(k,k_0)=\left(\int_{0}^{1}H_{e}^{(m+1)}(k_0^2+u(k^2-k_0^2))(1-u)^{m}du
+k\int_{0}^{1}H_{o}^{(m+1)}(k_0^2+u(k^2-k_0^2))(1-u)^{m}du\right)
\ee
}
and
\be
\left|\frac{d^j g(k,k_0)}{dk^j}\right|\le C,\quad \frac{k_0}{2}\le k\le k_0.
\ee
We will split $h$ as $h(k)=h_{\Rmnum{1}}(k)+h_{\Rmnum{2}}(k)$, where $h_{\Rmnum{1}}$ is small and $h_{\Rmnum{2}}$ has an alytic continuation to $\im k<0$. Thus
\be
\rho=h_{\Rmnum{1}}+(h_{\Rmnum{2}}+R).
\ee
\par
Set $p(k)=(k^2-k_0^2)^q$. Recall
\be
\tha(k)=k\tilde \xi-\frac{1}{4k}=-\frac{1}{4k_0^2}(k+\frac{k_0^2}{k}).
\ee

\par
We define
\be
\left\{
\ba{rrll}
\frac{h}{p}(\tha)&=&\frac{h(k(\tha))}{p(k(\tha))},& \tha(\frac{k_0}{2})<\tha<\tha(k_0),\\
&=&0,& \mbox{otherwise}.
\ea
\right.
\ee
As $|\tha|\rightarrow \tha(k_0)=-\frac{1}{2k_0}$ and $|\tha|>\frac{1}{2k_0}$, we have $\frac{h}{p}(\tha)=O((k^2(\tha)-k_0^2)^{m+1-q})$ and
\be
\frac{d\tha}{dk}=\frac{k_0^2-k^2}{4k^2k_0^2}.
\ee
We claim that $\frac{h}{p}\in H^j(-\infty<\tha<\infty)$ for $0\le j\le \frac{3q+2}{2}$. As by Fourier inversion,
\be
\frac{h}{p}(k)=\int_{-\infty}^{\infty}e^{is\tha(k)}\widehat{(\frac{h}{p})}(s)\bar d s,\quad \frac{k_0}{2}<k<k_0,
\ee
where
\be
\widehat{(\frac{h}{p})}(s)=\int_{\tha(\frac{k_0}{2})}^{\tha(k_0)}e^{-is\tha(k)}\frac{h}{p}(\tha(k))\bar d \tha(k),\quad s\in \R.
\ee
where $\bar d s=\frac{ds}{\sqrt{2\pi}}$ and $\bar d \tha(k)=\frac{d\tha(k)}{\sqrt{2\pi}}$.
\par
Thus,
\be
\ba{l}
\int_{\tha(\frac{k_0}{2})}^{\tha(k_0)}\left|\left(\frac{d}{d\tha}\right)^j \frac{h}{p}(\tha(k))\right|^2|\bar d\tha(k)|\\
=\int_{\frac{k_0}{2}}^{k_0}\left|\left(\frac{4k^2k_0^2}{k_0^2-k^2}\frac{d}{dk}\right)^j \frac{h}{p}(k)\right|^2|\frac{k_0^2-k^2}{4k^2k_0^2}|\bar dk\le C<\infty,
\ea
\ee
for $0<k_0<M,0\le j\le \frac{3q+2}{2}$. Hence,
\be
\int_{-\infty}^{\infty}(1+s^2)^j|\widehat{(h/p)}(s)|^2ds \le C<\infty
\ee
for $0<k_0<M,0\le j\le \frac{3q+2}{2}$.
\par
Split
\be
\ba{rrl}
h(k)&=&p(k)\int_{t}^{\infty}e^{is\tha(k)}\widehat{(h/p)}(s)\bar ds+p(k)\int_{-\infty}^{t}e^{is\tha(k)}\widehat{(h/p)}(s)\bar ds\\
&=&h_{\Rmnum{1}}(k)+h_{\Rmnum{2}}(k).
\ea
\ee
Thus, for $\frac{k_0}{2}<k<k_0\le M$ and any positive integer $n\le \frac{3q+2}{2}$.
\be
\ba{rrl}
|e^{-2it\tha(k)}h_{\Rmnum{1}}(k)|&\le&|p(k)|\int_{t}^{\infty}|\widehat{(h/p)}(s)|\bar ds\\
&\le&|p(k)|(\int_{t}^{\infty}(1+s^2)^{-n}\bar ds)^{\frac{1}{2}}(\int_{t}^{\infty}(1+s^2)^n|\widehat{(h/p)}(s)|^2\bar ds)^{\frac{1}{2}}\\
&\le&\frac{c}{t^{n-\frac{1}{2}}}.
\ea
\ee
Consider the contour $l_1:k(u)=k_0+uk_0e^{-i\frac{3\pi}{4}},0\le u\le \frac{1}{\sqrt{2}}$. Since $\re i\tha(k)$ is positive on this contour, $h_{\Rmnum{2}}(k)$ has an analytic continuation to contours $l_1$.
\par
On the contour $l_1$,
\be
\ba{rrl}
|e^{-2it\tha(k)}h_{\Rmnum{2}}(k)|&\le&|k+k_0|^q(k_0u)^qe^{-t\re i\tha(k)}\int_{-\infty}^{t}e^{(s-t)\re i\tha(k)}\widehat{(h/p)}(s)\bar ds\\
&\le &ck_0^{2q}u^{q}e^{-t\re i\tha(k)}(\int_{-\infty}^{t}(1+s^2)^{-1}\bar ds)^{\frac{1}{2}}(\int_{-\infty}^{t}(1+s^2)|\widehat{(h/p)}(s)|^2\bar ds)^{\frac{1}{2}}\\
&\le&ck_0^{2q}u^{q}e^{-t\re i\tha(k)}.
\ea
\ee
Recall $\tha(k)=-\frac{1}{4k_0^2}(k+\frac{k_0^2}{k})$, thus
\be
\re i\tha(k)=-\im \tha=\frac{\sqrt{2}u}{8k_0}\left[\frac{1}{u^2-\sqrt{2}u+1}-1\right]\ge \frac{u^2}{4k_0}
\ee
for $0\le u\le \frac{1}{\sqrt{2}}$.
\par
Thus, on the contour $l_1$
\be
\ba{rrl}
|e^{-2it\tha(k)}h_{\Rmnum{2}}(k)|&\le &c k_0^{2q}u^{q}e^{-t\frac{u^2}{4k_0}}\le ck_0^{2q}u^{q}e^{-t\frac{u^2k_0}{4M^2}}\\
&\le& \frac{c_1}{t^{\frac{q}{2}}},
\ea
\ee
for $k_0<M$.
\par
Fix $\eps$, $0<\eps<\frac{1}{\sqrt{2}}$. If $k(u)$ is on the contour $l_1$ , $\eps<u<\frac{1}{\sqrt{2}}$, then we obtain
\be
|e^{-2it\tha(k)}R(k)|\le ce^{-\frac{u^2}{2k_0}t}\le ce^{-\frac{\eps^2}{2M}t}
\ee

\par
\subsection{\bf 2.$0<|k|<\frac{k_0}{2},k\in \R$.}
\par
We consider $0<k<\frac{k_0}{2}$, the case for $-\frac{k_0}{2}<k<0$ is similarly.
\par
Define
\be
\left\{
\ba{rrll}
\rho(\tha)&=&\rho(k(\tha)),&\tha<\tha(\frac{k_0}{2}),\\
&=&0,&\tha\ge \tha(\frac{k_0}{2}).
\ea
\right.
\ee
We claim that $\rho(\tha)\in H^j(-\infty<\tha<\infty)$ for any nonnegative integer $j$.
\par
By Fourier inversion,
\be
\rho(\tha(k))=\int_{-\infty}^{\infty}e^{is\tha(k)}\hat \rho(s)\bar ds,\quad 0<k<\frac{k_0}{2},
\ee
where
\be
\hat \rho(s)=\int_{-\infty}^{\tha(\frac{k_0}{2})}e^{-is\tha(k)}\rho(\tha(k))\bar d\tha(k).
\ee
Then,
\be
\ba{l}
\int_{-\infty}^{\tha(\frac{k_0}{2})}\left|\left(\frac{d}{d\tha}\right)^j \rho(\tha(k))\right|^2|\bar d\tha(k)|\\
=\int_{0}^{\frac{k_0}{2}}\left|\left(\frac{4k^2k_0^2}{k_0^2-k^2}\frac{d}{dk}\right)^j \rho(k)\right|^2|\frac{k_0^2-k^2}{4k^2k_0^2}|\bar dk\le C<\infty,
\ea
\ee
for any nonnegative integer $j$, $0<k_0<M$, since $r(k)\rightarrow 0$ rapidly, as $k\rightarrow 0$.
\par
Hence
\be
\int_{-\infty}^{\infty}(1+s^2)^j |\hat \rho(s)|^2\bar ds\le C,
\ee
for any nonnegative integer $j$.
\par
Split
\be
\ba{rrl}
\rho(k)&=&\int_{t}^{\infty}e^{is\tha(k)}\hat \rho(s)\bar ds+\int_{-\infty}^{t}e^{is\tha(k)}\hat \rho(s)\bar ds\\
&=&h_{\Rmnum{1}}(k)+h_{\Rmnum{2}}(k).
\ea
\ee
Then, for $0<k<\frac{k_0}{2}$ and any positive integer $j$, we obtain,
\be
\ba{rrl}
|e^{-2it\tha(k)}h_{\Rmnum{1}}(k)|&\le& \int_{t}^{\infty}|\hat \rho|\bar ds\\
&\le& (\int_{t}^{\infty}(1+s^2)^{-j}\bar ds)^{\frac{1}{2}}(\int_{t}^{\infty}(1+s^2)^j|\hat \rho(s)|^2\bar ds)^{\frac{1}{2}}\\
&\le& \frac{c}{t^{j-\frac{1}{2}}}.
\ea
\ee
Consider the contour $l_2: k(u)=uk_0e^{-i\frac{\pi}{4}},0<u<\frac{1}{\sqrt{2}}$. Since $\re i\tha(k)$ is positive on this contour, $h_{\Rmnum{2}}$ has an analytic
continuation to contour $l_2$.
\par
On the contour $l_2$,
\be
\ba{rrl}
|e^{-2it\tha(k)}h_{\Rmnum{2}}(k)|&\le&e^{-t\re i\tha(k)}\int_{-\infty}^{t}e^{(s-t)\re i\tha(k)}|\hat \rho(k)|\bar ds\\
&\le&e^{-t\re i\tha(k)}(\int_{-\infty}^{t}(1+s^2)^{-1}\bar ds)^{\frac{1}{2}}(\int_{-\infty}^{t}(1+s^2)|\hat \rho(k)|^2\bar ds)^{\frac{1}{2}},
\ea
\ee
where
\be
\re i\tha(k)=\frac{\sqrt{2}}{8k_0}(\frac{1}{u}-u)\ge \frac{1}{8k_0},
\ee
for $0<u\le \frac{1}{\sqrt{2}}$.
\par
Thus, we obtain,
\be
|e^{-2it\tha(k)}h_{\Rmnum{2}}(k)|\le ce^{-\frac{t}{8k_0}}.
\ee

\par
\subsection{\bf 3. $|k|>k_0,k\in \R$}
\par
We consider $k>k_0$, the case for $k<-k_0$ is similarly.
\par
Set
\be
\rho(k)=\frac{\ol{r(\bar k)}}{1+|r(k)|^2}.
\ee
We write
\be
(k+i)^{m+5}\rho(k)=\mu_0+\mu_1(k-k_0)+\cdots+\mu_m(k-k_0)^m+\frac{1}{m!}\int_{k_0}^{k}((\cdot+i)^{m+5}\rho(\cdot))^{(m+1)}(\gam)(k-\gam)^md\gam.
\ee
Define
\be
R(k)=\frac{\sum_{i=0}^{m}\mu_i(k-k_0)^i}{(k+i)^{m+5}}
\ee
and write $\rho(k)=h(k)+R(k)$. We have
\be
\left.\frac{d^jh(k)}{dk^j}\right|_{k_0}=0,\quad 0\le j\le m.
\ee
For $0<k_0<M$, set
\be
v(k)=\frac{(k-k_0)^q}{(k+i)^{q+2}}.
\ee
Let
\be
\left\{
\ba{rrll}
\frac{h}{v}(\tha)&=&\frac{h}{v}(k(\tha)),&\tha<\tha(k_0),\\
&=&0,&\tha\ge \tha(k_0).
\ea
\right.
\ee
Then
\be
\frac{h}{v}(\tha(k))=\int_{-\infty}^{\infty}e^{is\tha(k)}\widehat{(\frac{h}{v})}(s)\bar ds,\quad k\ge k_0,
\ee
where
\be
\widehat{(\frac{h}{v})}(s)=\int_{-\infty}^{\tha(k_0)}e^{-is\tha(k)}\frac{h}{v}(\tha(k))\bar d\tha(k).
\ee
Moreover, we have
\be
\frac{h}{v}(\tha(k))=\frac{(k-k_0)^{3q+2}}{(k+i)^{3q+4}}g(k,k_0),
\ee
where
\be
g(k,k_0)=\frac{1}{m!}\int_{0}^{1}((\cdot+i)^{m+5}\rho(\cdot))^{(m+1)}(k_0+u(k-k_0))(1-u)^k du
\ee
and
\be
\left|\frac{d^j g(k,k_0)}{dk^j}\right|\le C,\quad k\ge k_0.
\ee
Using the identity $\left|\frac{k-k_0}{k+k_0}\right|\le 1$ for $k\ge k_0$, we have
\be
\ba{rrl}
\int_{-\infty}^{\tha(k_0)}\left|\left(\frac{d}{d\tha}\right)^j\left(\frac{h}{v}(\tha(k))\right)\right|^2\bar d\tha(k)&=&
\int_{k_0}^{\infty}\left|\left(\frac{4k^2k_0^2}{k_0^2-k^2}\frac{d}{dk}\right)^j\frac{h}{v}(k)\right|^2|\frac{k_0^2-k^2}{4k_0^2k^2}|\bar dk\\
&\le &c\int_{k_0}^{\infty}\left|\frac{(k-k_0)^{3q+2-3j}}{(k-i)^{3q+4}}\right|^2k^{4j-2}(k^2-k_0^2)\bar dk \le C_1, \quad 0\le j\le \frac{3q+2}{3}.
\ea
\ee
Thus,
\be
\int_{-\infty}^{\infty}(1+s^2)^j \left|\widehat{(\frac{h}{v})}(s)\right|^2\bar ds\le C<\infty.
\ee
We write
\be
\ba{rrl}
h(k)&=&v(k)\int_{t}^{\infty}e^{is\tha(k)}\widehat{(h/v)}(s)\bar ds+v(k)\int_{-\infty}^{t}e^{is\tha(k)}\widehat{(h/v)}(s)\bar ds\\
&=&h_{\Rmnum{1}}(k)+h_{\Rmnum{2}}(k).
\ea
\ee
For $k\ge k_0,0<k_0<M$, and any positive integer $e\le \frac{3q+2}{3}$, we obtain,
\be
\ba{rrl}
|e^{-2it\tha(k)}h_{\Rmnum{1}}(k)|&\le& \frac{|k-k_0|^q}{|k+i|^{q+2}}\int_{t}^{\infty}|\widehat{(h/v)}(s)|\bar ds\\
&\le & \frac{|k-k_0|^q}{|k+i|^{q+2}}(\int_{t}^{\infty}(1+s^2)^{-e}\bar ds)^{\frac{1}{2}}(\int_{t}^{\infty}(1+s^2)^e|\widehat{(h/v)}(s)|^2\bar ds)^{\frac{1}{2}}\\
&\le & c\frac{1}{(1+|k|^2)t^{e-\frac{1}{2}}}.
\ea
\ee
And $h_{\Rmnum{2}}(k)$ has an analytic continuation to the upper half-plane, where $\re i\tha(k)$ is positive. We estimate $e^{-2it\tha(k)}h_{\Rmnum{2}}(k)$ on the contour $k(u)=k_0+uk_0e^{i\frac{\pi}{4}},u\ge 0$.
\par
If $0<u\le M_1$,
\be
|e^{-2it\tha(k)}h_{\Rmnum{2}}(k)|\le c\frac{k_0^qu^qe^{-t\re i\tha(k)}}{|k-i|^{q+2}},
\ee
where
\be
\re i\tha(k)=\frac{\sqrt{2}u}{8k_0}\frac{u^2+\sqrt{2}u}{u^2+\sqrt{2}u+1}\ge \frac{1}{4k_0}\frac{u^2}{u^2+\sqrt{2}u+1}
\ee
Then
\be
\ba{rrl}
|e^{-2it\tha(k)}h_{\Rmnum{2}}(k)|&\le &c\frac{k_0^qu^qe^{-t\re i\tha(k)}}{|k+i|^{q+2}}\le c_1\frac{k_0^qu^q}{|k+i|^{q+2}}e^{-t\frac{1}{4k_0}\frac{u^2}{u^2+\sqrt{2}u+1}}\\
&\le &\frac{c_2}{(1+|k|^2)^{q+2}t^{\frac{q}{2}}}\le \frac{c_3}{(1+|k|^2)t^{\frac{q}{2}}}
\ea
\ee
If $u>M_1$, then
\be
\re i\tha(k)\ge \frac{\sqrt{2}}{8k_0}\frac{u^3}{u^2+\sqrt{2}u+1}
\ee
and
\be
\ba{rrl}
|e^{-2it\tha(k)}h_{\Rmnum{2}}(k)|&\le &c\frac{k_0^qu^qe^{-t\re i\tha(k)}}{|k+i|^{q+2}}\le c_4\frac{k_0^qu^q}{|k+i|^{q+2}}e^{-t\frac{\sqrt{2}}{8k_0}\frac{u^3}{u^2+\sqrt{2}u+1}}\\
&\le &\frac{c_5}{(1+|k|^2)t^{q}}
\ea
\ee
Hence, for $u>0$, we obtain
\be
|e^{-2it\tha(k)}h_{\Rmnum{2}}(k)|\le \frac{c_5}{(1+|k|^2)t^{\frac{q}{2}}}.
\ee
Note that if $l$ is an arbitrary positive integer, we can choose $m$ large enough such that $\frac{3q+2}{2}-\frac{1}{2}>q-\frac{1}{2}>\frac{q}{2}>l$, so the proposition holds.

\section{How to reduce the contour $\Sig$ to $\Sig^{(3)}$}\label{appedb}
Let us now show that, in the sense of appropriately defined operator norms, one may
always choose to delete (or add) a portion of a contour(s) on which the jump is $\id$,
without altering the Riemann-Hilbert problem in the operator sense.
\par
Suppose that $\Sig_{1}$ and $\Sig_{2}$ are two oriented skeletons in $\C$ with
\be
\mbox{card}(\Sig_{1}\cap \Sig_{2})<\infty;
\ee
let $u=u(\lam)=u_+(\lam)+u_-(\lam)$ be a $2\times 2$ matrix-valued function on
\be
\Sig_{12}=\Sig_{1}\cup \Sig_{2}
\ee
with entries in $L^2(\Sig_{12})\cap L^{\infty}(\Sig_{12})$ and suppose that
\be
u=0\qquad \qquad \mbox{on }\Sig_{2}.
\ee
Let
\be
R_{\Sig_{1}}\mbox{ denote the restriction map }L^2(\Sig_{12})\rightarrow L^2(\Sig_{1}),
\ee
\be
\id_{\Sig_{1}\rightarrow \Sig^{(12)}}\mbox{ denote the embedding }L^2(\Sig_{1})\rightarrow L^2(\Sig_{12}),
\ee
\be
C_{u}^{12}:L^2(\Sig_{12})\rightarrow L^2(\Sig_{12})\mbox{ denote the operator in (\ref{BCRHPCom}) with }u\leftrightarrow \om,
\ee
\be
C_u^1:L^2(\Sig_{1})\rightarrow L^2(\Sig_{1})\mbox{ denote the operator in (\ref{BCRHPCom}) with }u \uparrow \Sig_{1}\leftrightarrow \om,
\ee
\be
C_u^E:L^2(\Sig_{1})\rightarrow L^2(\Sig_{12})\mbox{ denote the restriction of $C_u^{12}$ to }L^2(\Sig_{1}).
\ee
And, finally, let
\be
\left\{
\ba{l}
\id_{\Sig_{1}}\mbox{ and }\id_{\Sig_{12}}\mbox{ denote the identity operators on}\\
L^2(\Sig_{1})\mbox{ and }L^2(\Sig_{12}),\mbox{ respectively}.
\ea
\right.
\ee
We then have the next lemma:
\begin{lemma} \label{opralema}
\be \label{ideopra1}
C_u^{12}C_u^E=C_u^EC_u^{12},
\ee
\be \label{ideopra2}
(\id_{\Sig_{1}}-C_u^1)^{-1}=R_{\Sig_{1}}(\id_{\Sig_{12}}-C_u^{12})^{-1}\id_{\Sig_{1}\rightarrow \Sig_{12}},
\ee
\be \label{ideopra3}
(\id_{\Sig_{12}}-C_u^{12})^{-1}=\id_{\Sig_{12}}+C_u^E(\id_{\Sig_{1}}-C_u^1)^{-1}R_{\Sig_{1}},
\ee
in the sense that if the right-hand side of (\ref{ideopra2}),resp. (\ref{ideopra3}), exists,
then the left-hand side exists and identity (\ref{ideopra2}),resp. (\ref{ideopra3}), holds true.
\end{lemma}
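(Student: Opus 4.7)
\medskip

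The plan rests on three elementary intertwining relations that follow immediately from the hypothesis $u\equiv 0$ on $\Sig_{2}$. Writing $C_u^{12}f = C_+(fu_-)+C_-(fu_+)$ for $f\in L^2(\Sig_{12})$, the densities $fu_\pm$ are supported on $\Sig_1$, so (a) $C_u^{E}=C_u^{12}\id_{\Sig_1\rightarrow\Sig_{12}}$ holds by definition of $C_u^E$; (b) $R_{\Sig_1}C_u^{E}=C_u^{1}$, because a Cauchy integral against a density living on $\Sig_1$, evaluated on $\Sig_1$, reduces to the $\Sig_1$-Cauchy operator; and (c) $C_u^{12}=C_u^{E}R_{\Sig_1}$, because $C_u^{12}f$ sees only $f|_{\Sig_1}$. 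These are the only facts I will use.

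Identity (\ref{ideopra1}) will then be a formal consequence: by (a) and (c), $C_u^{12}C_u^{E}=(C_u^{E}R_{\Sig_1})(C_u^{12}\id_{\Sig_1\rightarrow\Sig_{12}})=C_u^{E}(R_{\Sig_1}C_u^{E})=C_u^{E}C_u^{1}$, and in the mirror-image direction $C_u^{E}C_u^{12}=C_u^{E}(C_u^{E}R_{\Sig_1})$ rearranges to the same expression once (b) is applied. Identity (\ref{ideopra3}) I would prove by direct verification rather than by a Neumann series: multiplying $(\id_{\Sig_{12}}-C_u^{12})$ against the proposed right inverse $\id_{\Sig_{12}}+C_u^{E}(\id_{\Sig_1}-C_u^{1})^{-1}R_{\Sig_1}$ produces, after using (c) to rewrite $C_u^{12}$ as $C_u^{E}R_{\Sig_1}$ and (b) to collapse $R_{\Sig_1}C_u^{E}=C_u^{1}$, the factor $C_u^{E}\bigl[(\id-C_u^{1})^{-1}-\id-C_u^{1}(\id-C_u^{1})^{-1}\bigr]R_{\Sig_1}$, and the bracket telescopes to zero. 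The reverse product is handled identically. Identity (\ref{ideopra2}) then falls out by flanking the (\ref{ideopra3})-formula with $R_{\Sig_1}$ on the left and $\id_{\Sig_1\rightarrow\Sig_{12}}$ on the right, using $R_{\Sig_1}\id_{\Sig_1\rightarrow\Sig_{12}}=\id_{\Sig_1}$ and (b), which yields $\id_{\Sig_1}+C_u^{1}(\id_{\Sig_1}-C_u^{1})^{-1}=(\id_{\Sig_1}-C_u^{1})^{-1}$.

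The main obstacle is not algebraic but functional-analytic: the lemma is phrased as a one-sided implication about the existence of the inverses. To make that part rigorous I will verify, having the $L^2$-boundedness of $C_\pm$ and of multiplication by $u_\pm\in L^\infty$ in hand, that each composite appearing above is a bounded operator between the indicated $L^2$ spaces and that the candidate inverse in each identity is a genuine two-sided inverse. With that bookkeeping settled, the identities collapse to the purely algebraic manipulations sketched above, and both equivalences in (\ref{ideopra2}) and (\ref{ideopra3}) are obtained simultaneously.
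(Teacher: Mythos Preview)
The paper does not actually prove this lemma; its entire proof is the sentence ``See Lemma 2.56 in \cite{dz}.'' Your sketch is therefore more than what the paper supplies, and it follows precisely the standard route in Deift--Zhou: the three intertwining relations (a)~$C_u^{E}=C_u^{12}\id_{\Sig_1\to\Sig_{12}}$, (b)~$R_{\Sig_1}C_u^{E}=C_u^{1}$, (c)~$C_u^{12}=C_u^{E}R_{\Sig_1}$ are exactly the ingredients used there, and your direct verification of (\ref{ideopra3}) by multiplying out $(\id_{\Sig_{12}}-C_u^{12})\bigl(\id_{\Sig_{12}}+C_u^{E}(\id_{\Sig_1}-C_u^{1})^{-1}R_{\Sig_1}\bigr)$ is correct.

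Two points merit care. First, the composite $C_u^{E}C_u^{12}$ in (\ref{ideopra1}) does not type-check as written, since $C_u^{E}$ has domain $L^2(\Sig_1)$ while $C_u^{12}$ lands in $L^2(\Sig_{12})$; what your computation actually establishes is $C_u^{12}C_u^{E}=C_u^{E}C_u^{1}$, which is the identity one uses in practice, and the paper's formulation likely carries an implicit identification (or a typo) inherited from the source. Second, your derivation of (\ref{ideopra2}) by flanking (\ref{ideopra3}) presupposes that $(\id_{\Sig_1}-C_u^{1})^{-1}$ already exists, whereas the lemma asserts the implication in the \emph{opposite} direction: existence of $(\id_{\Sig_{12}}-C_u^{12})^{-1}$ is supposed to force existence of $(\id_{\Sig_1}-C_u^{1})^{-1}$. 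To close this gap argue directly: from (b) and (c) one has $(\id_{\Sig_1}-C_u^{1})R_{\Sig_1}=R_{\Sig_1}(\id_{\Sig_{12}}-C_u^{12})$, so $S:=R_{\Sig_1}(\id_{\Sig_{12}}-C_u^{12})^{-1}\id_{\Sig_1\to\Sig_{12}}$ is a left inverse of $\id_{\Sig_1}-C_u^{1}$; injectivity plus the resolvent identity then gives the right inverse. You flag this functional-analytic bookkeeping yourself, so the plan is sound, but the logical dependence between (\ref{ideopra2}) and (\ref{ideopra3}) in your sketch should be decoupled rather than chained.
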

\begin{proof}
See Lemma 2.56 in \cite{dz}.
\end{proof}


\section{Proof of the proposition \ref{tasyinftyrsol}}\label{appdd}
\begin{proof}
Write
\be
 \ba{l}
R\left(\frac{k}{\sqrt{k_0^{-3}t}}+k_0\right)(\dta^1_A(k))^{2}-R(k_0\pm)k^{-2i\nu}e^{i\frac{k^2}{2}}
\\=
e^{i\frac{\beta}{2}k^2}e^{i\frac{\beta}{2}k^2}R\left(\frac{k}{\sqrt{k_0^{-3}t}}+k_0\right)k^{-2i\nu}e^{i(1-2\beta)\frac{k^2}{2}(1-\frac{k}{(1-2\beta)\eta^4\sqrt{k_0^{-9}t} })}\\
{}\times \left(\frac{2k_0}{2k_0+\frac{k}{\sqrt{k_0^{-3}t}}}\right)^{-2i\nu}
{} e^{2\left(\chi(\frac{k}{\sqrt{k_0^{-3}t}}+k_0)-\chi (k_0)\right)}
{}-e^{i\frac{\beta}{2}k^2}e^{i\frac{\beta}{2}k^2}R(k_0\pm)k^{-2i\nu}e^{i(1-2\beta)\frac{k^2}{2}}
 \ea
\ee
 and also divided it into six terms
 \be
   R\left(\frac{k}{\sqrt{k_0^{-3}t}}+k_0\right)(\dta^1_A(k))^{2}-R(k_0\pm)k^{-2i\nu}e^{i\frac{k^2}{2}}
   =e^{i\beta\frac{k^2}{2}}(\Rmnum{1}+\Rmnum{2}+\Rmnum{3}+\Rmnum{4})
 \ee
where
\[
\ba{l}
\Rmnum{1}=e^{i(1-\beta)\frac{k^2}{2}}k^{-2i\nu}[R(\frac{k}{\sqrt{k_0^{-3}t}}+k_0)-R(k_0\pm)]\\
\Rmnum{2}=e^{i\beta\frac{k^2}{2}}k^{-2i\nu}R(\frac{k}{\sqrt{k_0^{-3}t}}+k_0)
   \left(e^{i(1-2\beta)\frac{k^2}{2}(1-\frac{k}{(1-2\beta)\eta^4\sqrt{k_0^{-9}t} })}-e^{i(1-2\beta)\frac{k^2}{2}}\right)\\
\Rmnum{3}=e^{i\beta\frac{k^2}{2}}k^{-2i\nu}R(\frac{k}{\sqrt{k_0^{-3}t}}+k_0)
   e^{i(1-2\beta)\frac{k^2}{2}(1-\frac{k}{(1-2\beta)\eta^4\sqrt{k_0^{-9}t} })}
   \left(\left(\frac{2k_0}{2k_0+\frac{k}{\sqrt{k_0^{-3}t}}}\right)^{-2i\nu}-1\right)\\
\Rmnum{4}=e^{i\beta\frac{k^2}{2}}k^{-2i\nu}R(\frac{k}{\sqrt{k_0^{-3}t}}+k_0)
   e^{i(1-2\beta)\frac{k^2}{2}(1-\frac{k}{(1-2\beta)\eta^4\sqrt{k_0^{-9}t} })}
   \left(\frac{2k_0}{2k_0+\frac{k}{\sqrt{k_0^{-3}t}}}\right)^{-2i\nu}\\
   {}\left(e^{2\left(\chi(\frac{k}{\sqrt{k_0^{-3}t}}+k_0)-\chi (k_0)\right)}-1\right)\\
\ea
\]
Note that $|e^{i\beta\frac{k^2}{2}}|=e^{-\frac{\beta \lam^2k_0^2}{2}}$ and $|k^{-2i\nu}|=e^{2\nu\arg{k}}\le C$, where $C$ is a constant which is independent of $k$, for $k=k_0\lam e^{\frac{i\pi}{4}},-\eps<\lam<\eps$. The terms $\Rmnum{1},\Rmnum{2},\Rmnum{3}$ 
and $\Rmnum{4}$ can be estimated as follows.
\[
\ba{rrl}
|\Rmnum{1}|&\le &|k^{-2i\nu}|\cdot |e^{i\beta\frac{k^2}{2}}|\cdot |\frac{k}{\sqrt{k_0^{-3}t}}|\cdot ||\partial_{k}R(k)||_{L^{\infty}}\\
&\le & \frac{C}{\sqrt{t}},
\ea
\]

\[
\ba{rrl}
|\Rmnum{2}|&\le &|k^{-2i\nu}|\cdot |e^{i\beta\frac{k^2}{2}}|\cdot ||R||_{L^{\infty}}\cdot \left|\frac{d}{ds}e^{i(1-2\beta)\frac{k^2}{2}(1-s\frac{k}{(1-2\beta)\eta^4\sqrt{k_0^{-9}t} })}\right|,\quad 0<s<1\\
&\le &\frac{C}{\sqrt{t}}
\ea
\]
To estimate $\Rmnum{3}$, we write
\[
\ba{rrl}
|\Rmnum{3}|&\le & |k^{-2i\nu}|\cdot |e^{i\beta\frac{k^2}{2}}|\cdot ||R||_{L^{\infty}}\cdot |e^{i(1-2\beta)\frac{k^2}{2}(1-\frac{k}{(1-2\beta)\eta^4\sqrt{k_0^{-9}t} })}|\cdot |\int_{1}^{1+\sqrt{\frac{k_0}{4t}}k}2i\nu \xi^{2i\nu-1}d\xi|\\
&\le & \frac{C}{\sqrt{t}},
\ea
\]
as $|\xi^{2i\nu-1}|\le c$ for $\xi=1+s\sqrt{\frac{k_0}{4t}}k,0\le s\le 1.$

\par
The estimate for $\Rmnum{4}$ is as follows,
\[
\ba{rrl}
|\Rmnum{4}|&\le & C\sup_{0\le s\le 1}|e^{2\left(\chi(\frac{k}{\sqrt{k_0^{-3}t}}+k_0)-\chi (k_0)\right)}|\cdot \left|2e^{i\beta\frac{k^2}{2}}(\chi(\frac{k}{\sqrt{k_0^{-3}t}}+k_0)-\chi (k_0))\right|
\ea
\]
now let us show how to control $\left|e^{i\beta\frac{k^2}{2}}(\chi(\frac{k}{\sqrt{k_0^{-3}t}}+k_0)-\chi (k_0))\right|$.
\[
\ba{l}
\left|e^{i\beta\frac{k^2}{2}}(\chi(\frac{k}{\sqrt{k_0^{-3}t}}+k_0)-\chi (k_0))\right|\\
=\left|\frac{e^{i\beta\frac{k^2}{2}}}{2\pi}\int_{-\infty}^{-k_0}\ln{\frac{k_0-s+\frac{k}{\sqrt{k_0^{-3}t}}}{k_0-s}}d\ln{(1+|r(s)|^2)}+\int_{k_0}^{\infty}\ln{\frac{s-k_0-\frac{k}{\sqrt{k_0^{-3}t}}}{s-k_0}}d\ln{(1+|r(s)|^2)}\right|\\
={}\left|\frac{e^{i\beta\frac{k^2}{2}}}{2\pi}\int_{-\infty}^{-1}\ln{(1+\frac{\sqrt{\frac{k_0}{t}}k}{1-s})}d\ln{(1+|r(sk_0)|^2)}+\int_{1}^{\infty}\ln{(1-\frac{\sqrt{\frac{k_0}{t}}k}{s-1})}d\ln{(1+|r(sk_0)|^2)}\right|\\
=|\Rmnum{4}_1+\Rmnum{4}_2|
\ea
\]
where
\[
\Rmnum{4}_1=\frac{e^{i\beta\frac{k^2}{2}}}{2\pi}\int_{-\infty}^{-1}(g(s)-g(1))\ln{(1+\frac{\sqrt{\frac{k_0}{t}}k}{1-s})}ds+\int_{1}^{\infty}(g(s)-g(1))\ln{(1-\frac{\sqrt{\frac{k_0}{t}}k}{s-1})}ds
\]
\[
\Rmnum{4}_2=\frac{e^{i\beta\frac{k^2}{2}}}{2\pi}\int_{-\infty}^{-1}g(1)\ln{(1+\frac{\sqrt{\frac{k_0}{t}}k}{1-s})}ds+\int_{1}^{\infty}g(1)\ln{(1-\frac{\sqrt{\frac{k_0}{t}}k}{s-1})}ds
\]
here $g(s)=\partial_{s}\ln{(1+|r(sk_0)|^2)}$. 
\par
Then, using the Lipschitz condition $|\ln{(1+a)}|\le |a|$, we have
\[
\ba{l}
|\Rmnum{4}_1|\le \left|\frac{e^{i\beta\frac{k^2}{2}}}{2\pi}\right|\cdot \int_{-\infty}^{-1}\sqrt{\frac{k_0}{t}}|k|\cdot |\frac{g(s)-g(1)}{s-1}|ds
+\left|\frac{e^{i\beta\frac{k^2}{2}}}{2\pi}\right|\cdot \int_{1}^{\infty}\sqrt{\frac{k_0}{t}}|k|\cdot |\frac{g(s)-g(1)}{s-1}|ds\\
{}\le C t^{-1/2}.
\ea
\]
Notice that $g(s)$ is rapidly decay as $s\rightarrow \infty$, so the above integral is well-defined. And notice $|ke^{i\beta\frac{k^2}{2}}|$ and $\partial_{s}g$ are bounded.

\[
\ba{l}
|\Rmnum{4}_2|= \left|\frac{e^{i\beta\frac{k^2}{2}}}{2\pi}\right|\cdot \left|\int_{2}^{\infty}{\ln{(1-\frac{k_0k^2}{s^2t})}}ds\right|\\
{}= \left|\frac{e^{i\beta\frac{k^2}{2}}}{2\pi}\right|\cdot \left|(\int_{2}^{L}+\int_{L}^{\infty}){\ln{(1-\frac{k_0k^2}{s^2t})}}ds\right|,
\ea
\]
where $L$ is a big-enough positive constant. Since the above infinity integral is well-defined, the second integral is very small. And integral by parts shows that the first integral becomes
\[
\int_{2}^{L}{\ln{(1-\frac{k_0k^2}{s^2t})}}ds=\left.\left(\ln{(s-\sqrt{\frac{k_0}{t}}k)}(s-\sqrt{\frac{k_0}{t}}k)+\ln{(s+\sqrt{\frac{k_0}{t}}k)}(s+\sqrt{\frac{k_0}{t}}k)-2s\ln{s}\right)\right|_{2}^{L}
\]
so we have
\[
\ba{rrl}
|\Rmnum{4}_2|&\le &C\frac{\log t}{\sqrt{t}}.
\ea
\]
Then, we can get the estimate of (\ref{Rasyt}). This finishes the proof of the proposition \ref{tasyinftyrsol}.
\end{proof}

\end{document}